\documentclass[acmsmall,nonacm]{acmart/acmart}
\usepackage{graphicx} 
\usepackage{amsfonts}
\usepackage{amsmath}
\usepackage{thm-restate}
\usepackage{float}
\usepackage{cleveref}

\usepackage{algorithm}
\usepackage{algpseudocode}
\makeatletter\newcommand{\algmargin}{\the\ALG@thistlm}
\algdef{SE}[UPON]{Upon}{EndUpon}[3]{\textbf{upon} \textbf{#1} $\langle \text{#2} \mid #3\rangle$ \textbf{do}}{}
\algdef{SE}[UPONSIMPLE]{UponSimple}{EndUpon}[1]{\textbf{upon} #1 \textbf{do}}{}
\algdef{SE}[InParallel]{InParallel}{EndParallel}[1]{\textbf{in parallel} #1 \textbf{do}}{}
\algdef{SE}[UPONEXISTS]{UponExists}{EndUpon}[1]{\textbf{upon exists} #1 \textbf{do}}{}
\algdef{SE}[UPONTRUE]{UponTrue}{EndUpon}[1]{\textbf{upon once} #1 \textbf{do}}{}
\algdef{SE}[USES]{Uses}{EndUses}{\textbf{Uses:}}{}
\ifthenelse{\equal{\ALG@noend}{t}}%
  {\algtext*{EndUses}}
  {}%
\ifthenelse{\equal{\ALG@noend}{t}}%
  {\algtext*{EndUpon}}
  {}%
\algtext*{EndUpon}
\makeatother
\algnewcommand{\parState}[1]{
    \parbox[t]{\dimexpr\linewidth-\algmargin}{\strut\hangindent=\algorithmicindent \hangafter=1 #1\strut}}
\algblock{As}{EndAs}
\algnewcommand\algorithmicas{\textbf{as}}
\algrenewtext{As}[1]{\algorithmicas\ #1}
\algtext*{EndAs}

\include{macros}

\usepackage{todonotes}

\newtheorem{property}{Property}

\ccsdesc[500]{Theory of computation~Distributed algorithms}
\ccsdesc[300]{Networks~Logical / virtual topologies}
\ccsdesc[300]{Security and privacy~Network security}
\ccsdesc[500]{Security and privacy~Distributed systems security}

\keywords{Byzantine Agreement, Communication Complexity, Adaptive Communication Complexity, Resilience}

\author{Marc Dufay}
\orcid{0009-0005-8440-8007}
\affiliation{%
  \institution{ETH Zurich}
   \country{Switzerland}
}
\email{mdufay@ethz.ch}

\author{Anton Paramonov}
\orcid{0009-0000-0760-8746}
\affiliation{%
  \institution{ETH Zurich}
   \country{Switzerland}
}
\email{aparamonov@ethz.ch}

\author{Roger Wattenhofer}
\orcid{0000-0002-6339-3134}
\affiliation{%
  \institution{ETH Zurich}
   \country{Switzerland}
}
\email{wattenhofer@ethz.ch}

\date{}

\begin{document}

\title{Optimal Adaptive Multi-Valued Byzantine Agreement}

\begin{abstract}
    In Byzantine Agreement (BA), $n$ parties, out of which $t$ can be Byzantine, run a distributed protocol to agree on a common valid input. Traditionally, these protocols have a linear latency and quadratic message complexity, making them impractical at a large scale. In their recent work, Constantinescu, Dufay, Paramonov, and Wattenhofer consider the actual number of byzantine parties $f \leq t$ and work toward decoupling the dependency on $n$ and $t$ in the complexity. They obtain a BA protocol with $\tilde{\mathcal{O}}(n + t\cdot f)$ message complexity and $\tilde{\mathcal{O}}(f)$ round complexity. 
    
    However, their results are strictly limited to agreement on a binary value. Using the framework given by their work along with novel techniques, we extend these results for BA on an $L$-bit value. With $\kappa$ being a security parameter, and with optimal resiliency ($t < n/2$ in the synchronous setting or $t < n/3$ otherwise), we obtain:
    \begin{itemize}
        \item In synchrony, a deterministic protocol with $\mathcal{O}(n\cdot (L + f \cdot \kappa ))$ bit complexity and $\mathcal{O}(f + \log n)$ round complexity.
        \item In synchrony and partial synchrony, deterministic protocols with $\tilde{\mathcal{O}}(n \cdot \kappa +  t\cdot (L + f \cdot \kappa))$ bit complexity and $\mathcal{O}(f)$ round complexity.
        \item In asynchrony, a protocol with $\tilde{\mathcal{O}}(n \cdot \kappa +  t\cdot(L + t \cdot \kappa))$ expected bit complexity and expected $\mathcal{O}(1)$ latency.
    \end{itemize}
\end{abstract}

\maketitle

\section{Introduction}
    In Byzantine Agreement, $n$ processes must reach agreement despite up to $t$ of them behaving maliciously. Soon after the problem was formalized by Lamport, Shostak, and Pease~\cite{lamport1982byzantine}, several fundamental lower bounds were established. For example, Dolev and Reischuk~\cite{dolev1985bounds} showed that any deterministic Byzantine agreement protocol requires $\Omega(t^2)$ messages in the worst case, and Dolev and Strong~\cite{dolev1990early} proved a worst-case lower bound of at least $t+1$ rounds.

    Since then, substantial research effort has been devoted to improving scalability despite these worst-case limitations. Prominent approaches include relaxing the problem specification~\cite{king2006towards,chaudhuri2000tight,dolev1982byzantine,feldman1997optimal}, tolerating probabilistic error~\cite{rabin1983randomized,cachin2000random,cohen2020not,blum2020asynchronous}, strengthening the model via additional trust assumptions~\cite{ben2025byzantine,liu2018scalable}, and optimizing for executions under optimistic conditions~\cite{kursawe2002optimistic,song2008bosco}. Within this latter landscape, adaptive protocols—i.e., protocols whose cost depends on the actual number of faults $f$ encountered in a given execution—have proven particularly effective in modern BFT system designs~\cite{yin2019hotstuff,civit2024dare}.
    
    More recently, a complementary direction has been proposed, motivated by the scale of modern deployments. Constantinescu, Dufay, Paramonov, and Wattenhofer \cite{constantinescu2025few} argue that contemporary systems may consist of thousands of nodes, making the classical assumption that a constant fraction of the system (e.g., $t \approx n/3$) can be corrupted overly conservative. This perspective motivates the design of fast and lightweight consensus protocols for the large-scale regime in which $t \ll n$.
    
    The result of~\cite{constantinescu2025few} initiates this line of work by presenting (almost) optimal protocols for binary agreement. In this paper, we further develop adaptive large-scale algorithms and present near-optimal protocols that support agreement over multi-valued domains. While we share a high-level approach with~\cite{constantinescu2025few}, we also introduce new techniques that may be of independent interest. 
\subsection{Our results}
    We give almost optimal adaptive large-scale algorithms for all major time models, that is, synchrony, partial synchrony, and asynchrony.

    We start with our main result for a synchronous network:

    \begin{theorem}
        Consider a PKI setup, a computationally bounded adversary, and a synchronous network. There exists a deterministic protocol that solves Multi-Valued Byzantine Agreement in this setting that tolerates up to $t < n/2$ faults and in a run with $f \leq t$ faults has a bit complexity of $O(n \cdot \log n \cdot \log t\cdot \kappa + t\cdot  (L +\log t \cdot \kappa) \cdot \log t+ t\cdot f \cdot \kappa )$, where $L$ is the input value bit-length, and round complexity of $O(f)$.
    \end{theorem}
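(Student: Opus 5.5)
We will build the protocol by composing three ingredients. The first is a committee-based adaptive binary agreement in the style of~\cite{constantinescu2025few}. The second is a small committee that is responsible for a candidate value. The third is an efficient dissemination step for the $L$-bit value based on erasure codes.

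\textbf{Step 1: committee and candidate.} We fix deterministically (e.g., by identifiers) a committee $C$ of $2t+1$ parties, so that $C$ contains an honest majority when $t<n/2$ corruptions are spread arbitrarily. Inside $C$ we run a leader-rotation, early-stopping multi-valued agreement. Each leader proposes its value together with a certificate of $t+1$ signatures, and we stop as soon as an honest leader succeeds. Each of the at most $f+1$ attempts costs $O(t\cdot\kappa)$ bits of signatures, which gives the $t\cdot f\cdot\kappa$ term and $O(f)$ rounds. The value itself is never broadcast in full inside $C$. Instead, a leader encodes its value with an $(t+1,2t+1)$ Reed--Solomon code and sends one fragment to each committee member, committing to all fragments with a Merkle root. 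This costs $O(t\cdot(L/t\cdot t+\log t\cdot\kappa))$ per attempt. Bit-optimality in $L$ then requires that the full value be dispersed only $O(\log t)$ times rather than $f$ times. We plan to achieve this by doubling the set of simultaneously tried leaders, $1,2,4,\dots$, and agreeing on the first successful batch. This is how we expect the $t(L+\log t\,\kappa)\log t$ term to arise.

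\textbf{Step 2: agreeing on the committed root.} The committee agrees on a $\kappa$-bit digest (Merkle root plus signed certificate) using the adaptive binary/short-value BA of~\cite{constantinescu2025few}, run on $O(\kappa)$-bit values. Validity is guaranteed because a certificate with $t+1$ signatures attests that some honest party supplied the value, or that all honest parties had the same input. Honest committee members then hold valid fragments for the agreed root.

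\textbf{Step 3: spreading to all $n$ parties.} Every party must output. We use a dissemination tree of depth $O(\log n)$ over the $n$ parties, with fanout chosen so that the total traffic is $O(n\log n\log t\,\kappa)$. Parties forward the root certificate together with fragments and their Merkle proofs of size $O(\log t\cdot\kappa)$. Each party reconstructs the value from $t+1$ consistent fragments, which it pulls from $C$ or from its tree ancestors. To avoid paying $\log n$ rounds we cannot simply use the tree. Instead, parties that have not received the certificate after $O(1)$ rounds query the committee directly, and the cost of these fallbacks is charged to the faulty parties on their path. This step, keeping the latency at $O(f)$ while the bit cost stays within $n\log n\log t\,\kappa$, is the main obstacle. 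We would first try the fallback scheme of~\cite{constantinescu2025few}, in which honest parties blame faulty parents and each fault triggers only $O(t\kappa)$ extra bits. Finally, we would sum the costs of the three steps, verify agreement from the uniqueness of the certified root (by quorum intersection with an honest majority in $C$), and verify termination from the early-stopping argument.
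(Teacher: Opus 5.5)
There is a genuine gap. It is in Step~3, not in the leader-doubling part you flag. In your design every one of the $n$ parties reconstructs the value from $t+1$ fragments of size about $L/(t+1)$, so Step~3 alone moves $\Omega(nL)$ bits. The target bound has no $nL$ term: for $f=0$, $t\log t=o(n)$ and $L=\omega(\log n\log t\,\kappa)$ it is $o(nL)$.

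No better dissemination fixes this within your architecture. Let $f=0$, let all honest members of $C$ hold $v_1$, and let all parties outside $C$ hold $v_0\neq v_1$. Your committee can only output a value certified by an honest member, i.e.\ $v_1$. Each of the $n-2t-1$ outsiders must then learn an arbitrary $L$-bit string it has never seen, at $\Omega(L)$ bits each.

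The missing idea is to let a non-default value be decided only if all but $O(t)$ honest parties already hold it, and to decide $\ast$ otherwise. Strong unanimity permits $\ast$ because the honest inputs then differ. The paper enforces this with the All-to-Quorum Broadcast:
\begin{itemize}
\item parties send only $hash(v_{in})$ to the relayers of the disperser committees of Theorem~\ref{thm:nodes2committeesPlus};
\item a relayer forwards a hash only if its whole committee agrees on it;
\item a quorum member feeds its input into the quorum agreement only if half of its relayer batch reports the hash of that input, and feeds $\ast$ otherwise.
\end{itemize}
Non-amplification says that if at least $Ct$ honest parties lack $v$, at most $t$ honest quorum members propose $v$. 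Combined with representativity of the quorum agreement ($n_v+f>k/3$), any decided $v\neq\ast$ is missing at fewer than $Ct$ honest parties. The improved QAB then sends a ``need?'' query with $hash(v^\ast)$ and ships the full value only to parties that reply ``need''. This is where the $t(L+\log t\,\kappa)\log t$ term comes from. Its $\log t$ factor counts the parallel QAB waves, not batches of leaders. Inside the quorum no doubling is needed, because an honest leader broadcasts the full value at most once; later honest leaders reuse the existing retrieval proof or key.

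Your Step~3 also fails independently of $L$. The tree is fixed and public, so the adversary can corrupt one node near the root and cut off $\Omega(n)$ descendants. Each of them must then collect $t+1$ authenticated fragments from distinct members of $C$, costing $\Omega(t\kappa)$ per party. That is $\Omega(nt\kappa)$ in total already for $f=1$, which exceeds the whole target bound for $L=O(\kappa)$ and $t\gg\log n\log t$. Charging to faulty parties on the path does not work, since one faulty party lies on the path of many honest ones. The mechanism of \cite{constantinescu2025few} you want to borrow is not tree-based.

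What you need is a structure in which $f$ corruptions isolate only $O(f)$ parties. This is exactly Theorem~\ref{thm:nodes2committees}: $O(\hat f\log n)$ committees, each party in $O(\log n)$ of them, and at most $c_b\hat f$ blocked parties. The paper runs $O(\log t)$ waves with $\hat f=1,2,4,\ldots$ in parallel. In each wave:
\begin{itemize}
\item quorum members send their shares to relayers;
\item relayers return an aggregate-signature acknowledgment of their committee when pinged;
\item a quorum member directly serves only the at most $c_b\hat f$ unacknowledged parties.
\end{itemize}
The wave with $\hat f\geq f$ therefore finishes within $O(f)$ rounds after $k/4$ quorum members decide, with no $\log n$ latency.
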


    For partial synchrony with perfect clocks, we give the following result:
    
    \begin{theorem}
        Consider a PKI setup, a computationally bounded adversary, and a partially synchronous network. There exists a deterministic protocol that solves Multi-Valued Byzantine Agreement in this setting that tolerates up to $t < n/3$ faults and in a run with $f \leq t$ faults has a bit complexity of $O(n \cdot \log n \cdot \log t\cdot \kappa + t\cdot  (L +\log t \cdot \kappa) \cdot \log t+ t\cdot f \cdot \kappa )$, where $L$ is the input value bit-length, and round complexity of $O(f)$.
    \end{theorem}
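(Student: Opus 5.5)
The protocol is a composition of three building blocks, following the framework of \cite{constantinescu2025few}: committee sampling, adaptive committee agreement on a digest, and dispersal/retrieval of the full value. I would first prove a sequence of intermediate guarantees for these pieces and then argue that their costs add up to the claimed bound, with resilience $t<n/3$ handled by quorum intersection. Concretely:

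\textbf{Step 1: committee.} Fix deterministically, for instance by party index, a committee $C$ of $2t+1$ parties. At most $t$ of its members, and hence a minority, are Byzantine. Only members of $C$ will run the expensive agreement.

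\textbf{Step 2: agreement within $C$.} Inside $C$, run an adaptive partially synchronous BA on a $\kappa$-bit digest $h(v)$ of a proposed value, using threshold or aggregated signatures. The pattern is leader-based with view changes, \`a la HotStuff/DARE. Each view with an honest leader after GST costs $O(t\kappa)$ bits. Each view with a Byzantine leader costs $O(t\kappa)$ bits and $O(1)$ rounds before a timeout and a certificate-based view change. Rotating leaders through $C$ therefore yields $O(f)$ views, i.e.\ $O(t f\kappa)$ bits and $O(f)$ rounds.

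To ensure validity, a leader proposes the digest of a value it actually holds. Before proposing, it disperses that value by erasure coding it with a Merkle commitment into $t$ chunks, one per committee member. This costs $O(L+t\log t\,\kappa)$ bits per attempt, or $O(t(L+\log t\,\kappa)\log t)$ bits once I account for $O(\log t)$ dispersal rounds amortised over a doubling schedule.

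\textbf{Step 3: dissemination to all $n$ parties.} Committee members, holding a decision certificate (one aggregated signature), forward the certificate and their chunk down an $O(\log n)$-depth tree or expander overlay. Every party then reconstructs $v$ and verifies it against the certified digest. Byzantine relays are bypassed by redundancy of $O(\log t)$ paths. I expect this to account for the $O(n\log n\log t\,\kappa)$ term.

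\textbf{Step 4: round and bit accounting.} The round bound $O(f)$ after GST follows from Step 2. Steps 1 and 3 need to add only $O(f)$ rounds, or an additive term that is dominated. Here I would reuse the adaptive broadcast lemma of \cite{constantinescu2025few}, which terminates in time proportional to the faults actually encountered.

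\textbf{Step 5: safety.} Agreement follows from the uniqueness of certified digests, by quorum intersection of size $2t+1$ within $C$ together with collision resistance. Validity follows because a certified digest must have been voted on by an honest member, who checked that the value was valid.

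\textbf{Main obstacle.} The hard step is keeping Byzantine leaders from inflating costs during asynchronous periods before GST. The danger is that a bad leader forces $L$-bit dispersals repeatedly, which would give $f\cdot tL$ rather than $tL\log t$. My plan to prevent this is to charge dispersal only to leaders whose digest has already been locked or endorsed by $t+1$ members. Honest members reuse locked chunks across views, so that each value is fully dispersed at most $O(\log t)$ times overall.
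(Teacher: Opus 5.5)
\textbf{Main gap: nothing in your protocol avoids an $\Omega(nL)$ dissemination cost.} In Step 3 every one of the $n$ parties reconstructs $v$ from chunks, and a party that does not already hold $v$ must receive $\Omega(L)$ bits about it. In Step 2 the leader proposes a value it happens to hold, which may be held by no one else. So Step 3 costs $\Omega((n-t)L)$ bits, while the target bound depends on $L$ only through $t\cdot L\cdot\log t$.

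The missing idea is the paper's All-to-Quorum Broadcast, run before agreement:
\begin{itemize}
\item Every party sends the $\kappa$-bit hash of its input, through disperser-based committees, to relayers.
\item A relayer forwards a hash to its quorum node only if its whole committee reported that same hash.
\item A quorum node enters the quorum agreement with its own input only if half of its relayer batch reports that input's hash, and with $\ast$ otherwise.
\end{itemize}
Non-amplification of this step, combined with representativity of the quorum agreement ($n_v+f>k/3$ for a decided $v\neq\ast$), implies that a decided non-$\ast$ value is already the input of all but $O(t)$ honest parties. In the improved QAB, relayers first send only $\langle\text{need?},h\rangle$ and ship the full value only to parties that reply ``need''. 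That is the actual source of the $t(L+\log t\,\kappa)\log t$ term. Your ``$O(\log t)$ dispersal rounds amortised over a doubling schedule'' corresponds to no mechanism that produces it.

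\textbf{Other steps that fail as stated.}
\begin{itemize}
\item \emph{Committee size.} A committee of $2t+1$ members with up to $t$ Byzantine cannot run partially synchronous BA, which needs fewer than a third of the participants faulty. Your quorum-intersection argument breaks: a live quorum can have at most $t+1$ members, and two such quorums may intersect only in a Byzantine member. The paper uses $9t+1$ quorum nodes, leaving slack for the up to $f$ honest quorum nodes that AQB may fail to serve.
\item \emph{Strong unanimity.} There is no external predicate, so an honest member cannot locally ``check that the value was valid''. A value other than its own input may be admissible or forbidden depending on inputs it cannot see. The paper's retrieval step has the leader obtain either a $(t+1)$-threshold signature on a single accumulator, or a disagreement evidence: a partition of the hash range into $O(1)$ intervals, each signed by $t+1$ parties whose input lies outside it.
\item \emph{Dissemination rounds and reliability.} An $O(\log n)$-depth overlay costs $O(\log n)$ rounds, not $O(f)$. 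Redundancy over $O(\log t)$ fixed paths gives no deterministic guarantee against an adaptive adversary, which can corrupt all predecessors of one party. The paper instead runs two-hop $\hat f$-waves for $\hat f=1,2,4,\dots$. In each wave at most $c_b\hat f$ parties are blocked, relayers return aggregate acknowledgments, and every quorum node sends its share directly to all unacknowledged parties.
\end{itemize}

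Finally, your stated ``main obstacle'' is misplaced: bits sent before GST are not counted in the bit complexity.
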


    In both theorems, resilience is optimal, and round complexity is asymptotically optimal. For bit complexity, the lower bound is $\Omega(n +  t \cdot L  + t \cdot f)$, hence in both cases the bit complexity is optimal up to $\kappa$ and log factors.

    For asynchrony, we provide the following result:

    \begin{theorem}
        Consider a PKI setup, a computationally bounded adversary, and an asynchronous network. There exists a randomized protocol that solves Multi-Valued Byzantine Agreement with probability $1$ in this setting that tolerates up to $t < n/3$ faults and has expected bit complexity of $O(n\cdot\log n\cdot \kappa + t\cdot L + t^2\cdot \kappa)$, where $L$ is the input value bit-length, and expected round complexity of $O(1)$.
    \end{theorem}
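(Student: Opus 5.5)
The plan is to build the asynchronous protocol as a reduction: a small committee agrees on a short $\kappa$-bit digest, and the $L$-bit value is disseminated to all $n$ parties separately and cheaply.

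\textbf{Step 1: committee.} Fix a committee $C$ of size $\Theta(t)$, for instance $3t+1$ parties chosen deterministically from the PKI ordering. Then $C$ has fewer than one third Byzantine members, and it can run a standard asynchronous protocol on its own.

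\textbf{Step 2: proposals.} Every party sends its input to $C$. Each committee member adopts a value and produces a $\kappa$-bit hash of it, forming its proposal.

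\textbf{Step 3: validated agreement.} The committee runs a known validated asynchronous MVBA restricted to $C$. Options are Abraham--Malkhi--Spiegelman, or the framework of Constantinescu et al. composed with a common coin from threshold signatures. It runs on $\kappa$-bit inputs, each certified by a threshold signature of $t+1$ committee members attesting that some honest member holds the preimage. This costs expected $O(|C|^2\kappa)=O(t^2\kappa)$ bits and $O(1)$ expected rounds. It yields a certified digest $h$ whose preimage is held by at least one honest committee member.

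\textbf{Step 4: retrieval inside the committee.} For committee members lacking the value, I would use erasure-coded dissemination in the style of ADD/AVID. Each holder Reed--Solomon encodes the value into $|C|$ fragments of size $O(L/t)$ and attaches a Merkle proof of size $O(\kappa\log t)$ to each. Members exchange fragments until every honest member reconstructs the value behind $h$. This costs $O(t\cdot(L+t\kappa\log t))$ bits; the $\log t$ factor can be removed with vector commitments, or absorbed by batching.

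\textbf{Step 5: dissemination to all parties.} The committee signs $h$ into a threshold certificate $\sigma$ of size $O(\kappa)$. The $n$ non-committee parties must learn $(v,h,\sigma)$ without every party receiving $L$ bits from $t$ sources. I would spread the value along a $\Theta(\log n)$-depth, constant-degree dissemination tree over all parties, with the committee at the root. Each party forwards fragments together with $\sigma$, so total traffic is $O(n\kappa\log n)$ for certificates plus $O(tL)$ for the value shipped to the committee.

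\textbf{Main obstacle.} The difficulty is Step 5: Byzantine relay nodes can stall the tree, while the total must stay at $O(n\log n\kappa+tL)$ even though every honest party needs $v$. This only works if the model lets the $L$-bit payload reach every party as (a fragment of) its own input or through a certificate-only agreement. The output $v$ must still be delivered to every party, and that costs at least $nL$ bits unless parties already hold $v$.

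I would therefore first check the validity notion. The intended reading is likely that every honest party already holds some value, and parties outside the committee only need to learn the digest $h$ and then fetch the value on demand. Alternatively, the bound may be counted as $O(tL)$ plus per-party certificate overhead. Under that reading, the $n$ parties only receive $O(\kappa\log n)$ bits each: a certificate plus pull-based retry along $O(\log n)$ random peers. An asynchronous gossip lemma of the kind in Constantinescu et al.\ bounds this in expectation.

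\textbf{Termination.} Termination with probability $1$ and $O(1)$ expected rounds follow from the expected-constant-round MVBA in Step 3 and the constant-depth retrieval in Step 4.
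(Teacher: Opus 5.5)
There is a genuine gap: the obstacle you flag in Step~5 is the entire difficulty of the theorem, and the proposal leaves it open. Re-reading the problem is not available. The model is the paper's standard strong MVBA (agreement and strong unanimity), and the bound must hold for $t \ll n$.

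Your scheme also breaks the budget before Step~5. In Step~2 every party ships its $L$-bit input to $C$, which already costs $\Omega(nL)$ bits. Steps~3--4 do not prevent deciding a value that the committee holds but almost nobody else does. Take all of $C$ holding $v$ while the other $n-3t-1$ parties hold pairwise distinct values. Strong unanimity does not forbid $v$ here, yet any delivery scheme must then push $\Omega(nL)$ bits. In the opposite direction, suppose members attest only to their own value, $f=0$, and all inputs are distinct. Then no digest ever gathers $t+1$ attestations, so your validated MVBA has no valid proposal and never terminates.

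Both failures point to the same missing idea. Strong unanimity constrains only the unanimous case, so you may output a default $\ast$ whenever honest inputs disagree. You should therefore let a non-$\ast$ value win \emph{only} if it is already held by all but $O(t)$ honest parties, and fall back to $\ast$ otherwise; $\ast$ is cheap to broadcast.

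The paper builds exactly this filter, in three pieces.

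(i) All-to-Quorum Broadcast. Parties send only $\kappa$-bit hashes to relayers, over the disperser committees of Theorem~\ref{thm:nodes2committeesPlus}. Each relayer forwards the common hash, or $\ast$ on a mismatch, to its assigned member of a $(9t+1)$-quorum. That member keeps its own input if more than half of its $\Theta(\log n)$-relayer batch reports its hash, and switches to $\ast$ otherwise. This costs $O(n\log n\,\kappa)$ bits and yields non-amplification: if at least $Ct$ honest parties lack $v$, at most $t$ honest quorum members keep $v$.

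(ii) Quorum Agreement with representativity. An off-the-shelf MVBA (Dumbo-MVBA) runs inside the quorum. It is followed by a CHECK round and a binary BA that force $\ast$ unless more than a third of the quorum report that the MVBA output equals their own input. This costs $O(tL+t^2\kappa)$ expected bits and $O(1)$ expected rounds. Together, (i) and (ii) imply that a decided $v\neq\ast$ is missing from at most $O(t)$ honest parties.

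(iii) Quorum-to-All Broadcast. Quorum members send erasure-coded shares of size $O(L/t+\kappa)$ to disperser-committee relayers. The relayers send $\langle \mathrm{need?}, h\rangle$ and ship the full value only to parties whose own input does not hash to $h$. The quorum contacts the $O(t)$ blocked parties directly.

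Your ``fetch on demand'' instinct is essentially step (iii), but it only fits the budget once (i) and (ii) guarantee that few parties need to fetch.

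Separately, your $\Theta(\log n)$-depth tree in Step~5 would give $\Theta(\log n)$ rounds, contradicting the claimed $O(1)$ expected latency. A single Byzantine inner node would also silence its whole subtree. The paper's quorum--relayer--party paths have constant length, and the disperser assignment bounds the number of blocked parties instead.
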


    In addition to large-scale protocols, we provide an adaptive synchronous protocol which avoids log factors in bit complexity in case $t$ is of the order of $n$.
    \begin{theorem}
        Consider a PKI setup, a computationally bounded adversary, and a synchronous network. There exists a deterministic protocol that solves Multi-Valued Byzantine Agreement in this setting that tolerates up to $t < n/2$ faults and in a run with $f \leq t$ faults has a bit complexity of $O(n \cdot L + n\cdot f\cdot \kappa)$, where $L$ is the input value bit-length, and round complexity of $O(f + \log n)$.
    \end{theorem}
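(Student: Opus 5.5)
We prove the last theorem (the adaptive synchronous protocol with $O(n\cdot L + n\cdot f\cdot\kappa)$ bits and $O(f+\log n)$ rounds) by composing three components:
\begin{itemize}
\item a value-dissemination layer based on erasure codes;
\item an adaptive agreement on short ($O(\kappa)$-bit) objects;
\item a leader-based phase-king structure whose cost grows only with the number of \emph{failed} leaders.
\end{itemize}
Concretely, every party encodes its $L$-bit input with an $(n, t+1)$ Reed--Solomon code. Fragments are authenticated against a Merkle root of size $O(\kappa)$. The heavy data is therefore moved in chunks of $O(L/n + \kappa\log n)$ bits. Agreement itself is only ever run on the $O(\kappa)$-bit digest (root plus certificate), never on the $L$-bit value.

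\textbf{Leader-driven phases.} We iterate over leaders $\ell_1,\ell_2,\dots$ in a fixed order; each phase takes $O(1)$ rounds.
\begin{itemize}
\item In phase $i$, leader $\ell_i$ proposes a digest $d$ together with a certificate that $d$ is valid and safe: either a quorum of signed inputs, or a lock certificate from an earlier phase.
\item Parties vote. The leader aggregates $t+1$ (respectively $n-t$) votes into a threshold/multisignature of size $O(\kappa)$ and broadcasts it.
\item With an honest leader, all honest parties decide within one phase.
\item A Byzantine leader can only stall its phase, at cost $O(n\kappa)$ bits. Since at most $f$ leaders are faulty, the agreement part costs $O(n f\kappa)$ bits and $O(f)$ rounds, which gives the $n\cdot f\cdot\kappa$ term.
\item Safety follows the usual lock argument for $t<n/2$ in synchrony. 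A party locks on a certified digest, and any two certificates of $t+1$ honest-containing votes in a synchronous phase intersect in an honest party that refuses conflicting digests.
\end{itemize}

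\textbf{Initial input aggregation.} A naive all-to-all exchange of inputs would already cost $n^2$ messages. We instead aggregate signatures on input digests along a balanced binary tree (or via a committee-doubling scheme) over $O(\log n)$ rounds. Each party forwards $O(\kappa)$-bit aggregated signatures to $O(1)$ neighbours per level. This yields a certificate that some digest is backed by enough honest inputs, or else that honest inputs disagree so that a default value is permitted. This is where the additive $\log n$ in the round complexity comes from; the bit cost is $O(n\kappa)$ per level.

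\textbf{Retrieval of the decided value.} After deciding on $d$, each party needs the value itself. Because $d$ was certified by at least $t+1$ parties, at least one honest party holds the full value. To avoid $n\cdot L$ per holder, the honest holders disperse fragments:
\begin{itemize}
\item fragment $j$, with its Merkle proof, is sent to party $j$;
\item every party then echoes its own fragment to all others, for $O(n(L/n+\kappa\log n)) = O(L + n\kappa\log n)$ bits per party;
\item each party reconstructs from any $t+1$ fragments that verify against $d$.
\end{itemize}
Byzantine senders cannot inject bad fragments past the Merkle check. Redundant honest senders are avoided by letting only the current leader initiate dispersal, falling back to the next leader on failure, so each failed attempt costs $O(L+n\kappa)$ extra and is charged to a faulty leader. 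The total is $O(nL + nf\kappa)$, provided the $\log n$ factor from Merkle proofs is absorbed, e.g.\ via vector commitments with $O(\kappa)$ openings, or via $L \geq n\kappa\log n$ batching.

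\textbf{Main obstacle.} I expect the hardest step to be this dissemination and retrieval: ensuring that total data traffic is $O(nL)$ rather than $O(nL)$ per faulty leader, while Byzantine leaders keep forcing restarts. I would first try to make leaders disperse only digests, and to have every party's fragment-echo happen exactly once, keyed to the finally decided digest, so that the $L$-dependent cost is paid only a single time.
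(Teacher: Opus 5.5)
The agreement core, as you state it, fails for $t<n/2$, in two places.

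\textbf{Safety.} Your lock argument assumes that two certificates of $t+1$ (or $n-t$) votes intersect in an honest party. With $n=2t+1$, two such sets share only $2(t+1)-n=1$ party, and that party may be Byzantine and vote for both digests. The paper instead uses quorums of size $k=\lceil (n+t+1)/2\rceil\approx 3n/4$, which do intersect in an honest party. Such quorums are only reachable when $f<n-k\approx n/4$, so the paper must detect failure ($t+1$ \textsc{Help} messages) and switch to the $O(nL+n^2\kappa)$-bit, $O(n)$-round fallback of Nayak et al. That fallback fits the target bounds exactly because $f=\Omega(n)$ in that regime. Your plan has no such regime split.

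\textbf{Validity.} Your claim that the tree aggregation yields either a $(t+1)$-backed digest or a transferable proof that honest inputs disagree is false. Let the $t$ Byzantine parties stay silent, and split the $t+1$ honest parties into two nonempty groups with different inputs. No digest reaches $t+1$ signatures, and at most $t$ signers lie outside the larger group. Any collection of signed inputs is therefore equally consistent with a run in which those signers are Byzantine and all honest parties share the larger group's input. Accepting it as disagreement evidence would break strong unanimity. A tree is also not Byzantine-robust: one faulty node near the root suppresses its whole subtree.

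The paper's synchronous retrieval protocol is built around exactly this obstacle. An honest leader that fails learns \emph{locally} that honest inputs disagree. It drops its input to $\bot$ and from then on signs every requested statement. After $f$ failed honest leaders, the ``have $a$'' and ``not $a$'' signatures for each frequent accumulator $a$ total $n-f+f=n$, so one of them reaches $t+1$ at the next honest leader. Infrequent accumulators are handled by big buckets, and the whole argument needs $f<n/4$, which again forces the fallback.

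\textbf{Dissemination cost.} Your retrieval step breaks the bit bound by more than a $\log n$ factor. If every party echoes its fragment with a witness to all others, each party sends $\Omega(L/n\cdot n+n\kappa)$ bits, for $\Theta(nL+n^2\kappa)$ in total. This exceeds $O(nL+nf\kappa)$ whenever $f=o(n)$. Restricting who initiates dispersal does not help, since the echo itself is all-to-all.

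The missing idea is the expander-based dispersal borrowed from DARE to Agree:
\begin{itemize}
\item Every party that holds the value and a commit proof pushes the full value once to its $O(1)$ neighbours in an Upfal expander, costing $O(n(L+\kappa))$ bits.
\item A fourth ``dispersal'' threshold certificate, built on top of the commit, certifies that a constant fraction of honest parties have already started pushing.
\item Hence, $c\log n$ rounds after obtaining it, all but $O(f)$ honest parties (those outside the large low-diameter component) hold the value.
\item Only these parties and the $f$ Byzantine ones request shares from everyone, at cost $O(f\cdot n\cdot(L/n+\kappa))=O(fL+nf\kappa)$.
\end{itemize}
This $c\log n$ wait, not an input-aggregation tree, is where the additive $\log n$ in the round complexity comes from.
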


\subsection{High-Level approach}
    Our overall protocols rely on $3$ different components which are conceptually run one after the other. They all rely on a \textbf{quorum} of $\Theta(t)$ parties, inspired by \cite{constantinescu2025few}, running the actual agreement protocol:

\paragraph{All to Quorum Broadcast (AQB)} when $t \ll n$, we remark that all existing MVBA protocols still have bit complexity $\Omega(nL)$. However, we show this is not a lower bound for this problem. More specifically, our protocols have bit complexity $\tilde{\mathcal{O}}(tL)$ when considering only the dependency in $L$. This is done thanks to the AQB phase, where all $n$ parties send hashes of their input in a sparse way to the quorum. This step ensures that if value gets decided, at most $\mathcal{O}(t)$ parties do not have it as their input and in turn allows us to reduce the bit complexity to $\tilde{\mathcal{O}}(tL + poly(n,t,f))$. We remark that this step is completely new compared to \cite{constantinescu2025few}, as it is not required for binary agreement when $L = 1$.

\paragraph{Quorum Agreement} All parties in the quorum run an agreement to decide a value. Note that the properties we require of the agreement protocol are somewhat unusual. First, we only require at least $n/4$ honest parties to decide a value within $\mathcal{O}(f)$ rounds. The remaining parties may decide later or not even decide within the quorum agreement. We call this property partial termination. Moreover, we require the decided value to come with a \emph{certificate}, i.e a proof that any party can use to confirm this value was decided. We note that many additional new techniques are required to ensure validity and agreement for an $L$-bit value compared to a binary value.

\paragraph{Quorum to All Broadcast (QAB)} Once a party in the quorum has decided, it will try to send its output value to all $n$ parties in a sparse way. The main differences with the implementation from \cite{constantinescu2025few} is that (1) error coding is used to ensure a bit-efficient transmission of the $L$-bit decided value and (2) a hash check is performed prior to sending the value to parties to guarantee the $\tilde{\mathcal{O}}(tL + poly(n,t,f))$ bit complexity we set up the AQB for.

\smallskip

When $t = \Theta(n)$, one can run the Quorum Agreement protocol on its own to save $\log$ factors in the bit complexity. However, we remark that our Quorum Agreement protocol for partial synchrony on its own only satisfies partial termination (only guarantees that a fourth of the parties decide) and thus cannot be used for regular Byzantine Agreement. Meanwhile, our synchronous Quorum Agreement protocol satisfies partial termination in $\mathcal{O}(f)$ rounds and termination in $\mathcal{O}(f + \log n)$. So it can be used to solve MVBA with optimal resiliency $t < n/2$, bit complexity $\mathcal{O}(n \cdot (L + f \cdot \kappa))$ and near optimal round complexity $\mathcal{O}(f + \log n)$.
    
\section{Related Work}

\paragraph*{Synchrony.}
Reducing the communication cost of Byzantine Agreement (BA) has been a central theme in distributed computing, especially in fully synchronous networks.

In this setting, Cohen et al.~\cite{cohen2022brief} give deterministic protocols for Weak Byzantine Agreement and Byzantine Broadcast with bit complexity $\mathcal{O}(n\cdot f\cdot\kappa)$ where $\kappa$ is a security parameter. Shortly after, Civit et al.~\cite{civit2023strong} design a deterministic binary BA protocol that also communicates $\mathcal{O}(n\cdot f\cdot \kappa)$ bits and tolerates $t < \left(\frac{1}{2}-\varepsilon\right)n$ Byzantine faults.

This research direction culminates in Civit et al.~\cite{civit2024dare}, which extends beyond binary BA to Multi-Valued Byzantine Agreement (MVBA), where processes may propose arbitrary values (not only $0$ and $1$). The same work also addresses Interactive Consistency (outputting a vector containing the proposals of the honest processes) and supports External Validity, requiring that the decided value satisfy a fixed predicate. The resulting bit complexity is
$\mathcal{O}(Ln + n(f+1)\kappa)$,
where $L$ denotes the bit-length of the input (e.g., $L=1$ for binary BA). A notable aspect of~\cite{civit2024dare} is its reliance on the comparatively heavy Multiverse Threshold Signature Scheme (MTSS)~\cite{baird2023threshold,garg2024hints}. It explicitly asks whether adaptive BA can be obtained using lighter cryptographic tools; the present work answers this question in the affirmative.

Despite these communication improvements, the above synchronous protocols still have worst-case round complexity $\Omega(n)$, even when the actual number of faults $f$ is small. A complementary line of work studies \emph{early-stopping} protocols. For instance, Lenzen and Sheikholeslami~\cite{lenzen2022recursive} solve binary BA in $\mathcal{O}(f)$ rounds, but with bit complexity $\mathcal{O}(nt)$ and resilience restricted to $t<n/3$. We note that early-stopping protocols enforce stronger properties in that, after $\mathcal{O}(f)$ rounds, parties exit the protocol and do not send any messages anymore. Meanwhile, we allow our protocols to send messages, even after deciding.

The immediate predecessor of the present work~\cite{constantinescu2025few} achieves both early stopping and adaptive communication. Concretely, it provides two synchronous binary BA protocols: one with $O(f)$ rounds and $O(nf\kappa)$ bits of communication, and a second with $O(f)$ rounds and $O\big((n\log t + tf)\log n\cdot\kappa\big)$ bits.

Lower bounds delineate the limits of these results. The classical theorem of Dolev and Reischuk~\cite{dolev1985bounds} shows that any synchronous BA protocol must use $\Omega(f^2)$ messages. Spiegelman~\cite{spiegelman2020search} strengthens this to $\Omega(n+tf)$. Moreover, Abraham et al.~\cite{abraham2019communication} prove that even randomized synchronous protocols cannot avoid the $\Omega(f^2)$ message lower bound against a highly adaptive adversary that can remove messages after the fact. Regarding time complexity, Dolev and Strong~\cite{dolev1990early} establish that achieving Byzantine agreement requires at least $\min\{f+2,\,t+1\}$ rounds. Finally, for Multi-Valued BA, the folklore knowledge proclaims that at least $\Omega(L\cdot t)$ bits are needed. The proof sketch for this can be found in \cite{fitzi2006optimally}.

\paragraph*{Partial Synchrony.}
Byzantine Agreement under partial synchrony has also seen major progress. Civit et al.~\cite{civit2024byzantine} give the first partially synchronous binary BA protocol with $\mathcal{O}(n^2\cdot \kappa)$ bit complexity. Subsequently, Civit et al.~\cite{civit2025partial} introduce the \emph{Oper} framework, which compiles any synchronous BA protocol into a partially synchronous one while preserving the worst-case \emph{per-process} bit complexity. However, when instantiated with protocols such as~\cite{civit2024dare}, Oper yields only a total $\mathcal{O}(n^2)$ communication bound (rather than adaptive $\mathcal{O}(n\cdot f)$), because the underlying synchronous protocol already has $\Omega(n)$ per-process bit complexity.

The worst-case round complexities of~\cite{civit2024byzantine} and~\cite{civit2024dare} are $\Theta(t)$ and $\Theta(n)$, respectively.

For binary BA,~\cite{constantinescu2025few} provides partially synchronous protocols that are both adaptive and early-stopping, and are close to optimal. Specifically, it presents (i) a protocol running in $O(f)$ rounds with $O(nf\cdot\kappa)$ communication, and (ii) a protocol running in $O(f\log n)$ rounds with $O\big((n+t f)\log n \log t\cdot\kappa\big)$ communication.

Beyond BA, State Machine Replication (SMR) is tightly connected. SMR requires External Validity, meaning the decided value satisfies some predefined predicate. A well-known partially synchronous SMR protocol is HotStuff~\cite{yin2019hotstuff}. HotStuff proceeds in \emph{views} led by a designated leader, and termination is obtained once the honest parties remain in the same view long enough. The cost of synchronizing views depends strongly on the underlying clock assumptions. With perfectly synchronized clocks, one may define views as fixed time intervals, essentially eliminating view-synchronization overhead. In more fine-grained models, however, clocks need not be perfectly aligned. While HotStuff does not formally fix a clock model, it uses exponential back-off to handle discrepancies; this is theoretically justified but can be inefficient in practice. Recent work on view synchronization~\cite{lewis-pye2023optimal, lewis2022quadratic, lewis2023fever} provides a more robust mechanism that guarantees consistent views from only partial initial clock synchronization, achieving $\mathcal{O}(n\cdot f\cdot\kappa)$ bits of communication. When combined with HotStuff, this yields the same overall $\mathcal{O}(n\cdot f\cdot \kappa)$ communication bound in partial synchrony. 

Finally, Spiegelman et al.~\cite{spiegelman2020search} show that in partially synchronous networks, if one counts messages sent before the Global Stabilization Time (GST), then any protocol may require an unbounded number of messages. Accordingly, communication complexity for partially synchronous protocols is stated only for the post-GST period.

\paragraph*{Asynchrony.} The well-known FLP impossibility result \cite{fischer1985impossibility} establishes that Byzantine Agreement cannot be deterministically solved in asynchronous networks if even a single process can fail. However, the use of randomness provides a way to circumvent this limitation. 

For example, Cachin et al. \cite{cachin2000random} present a protocol that achieves strong BA in asynchrony, ensuring Agreement and Validity unconditionally while guaranteeing Termination with probability $1$. For binary inputs, this protocol operates with a bit complexity of $\mathcal{O}(n^2\kappa)$. 

Further improving on message complexity while allowing a negligible probability of failure, Cohen et al. \cite{cohen2020not} propose a solution for binary BA in asynchrony. Their protocol achieves Agreement, Validity, and Termination with high probability (w.h.p.) and has an expected bit complexity of $\mathcal{O}(n \cdot \log^2 n\cdot\kappa)$. The predecessor of the present paper \cite{constantinescu2025few} has given a binary algorithm that succeeds with probability $1$ and has a bit complexity of $O((n+t^2)\cdot \log n\cdot \kappa)$.

For Multi Value Byzantine Agreement in Asynchrony, one of the state-of-the-art results is the Dumbo-MVBA protocol \cite{yuhan2020dumbo} which possesses an expected bit complexity of $O(L\cdot n+n^2 \cdot \kappa)$ and expected round complexity of $O(1)$.

We want to highlight that, unlike the result of \cite{cohen2020not}, our work and \cite{yuhan2020dumbo} focus on the asynchronous algorithms with the strongest probabilistic guarantees, i.e., on algorithms that succeed with probability $1$ and against an adaptive adversary.

All these protocols enjoy constant expected latency. 

On the impossibility side, \cite{constantinescu2025few} shows that for binary BA in asynchrony, at least $\Omega(n + t^2)$ bits of communication is needed.

Table~\ref{tbl:related work} summarizes prior results as well as the contribution of the present work.

\begin{table}[ht]
\resizebox{\textwidth}{!}{%
\begin{tabular}{|l|l|l|l|l|l|}
\hline
\textbf{Paper} & \textbf{Problem}  & \textbf{Bit Complexity} & \textbf{Rounds} & \textbf{Resilience} & \textbf{Crypto} \\
\hline
\multicolumn{6}{|c|}{\textbf{Synchrony}} \\ 
\hline

\cite{nayak2020extension} & MVBA & $\mathcal{O}(n\cdot L + n^2\cdot \kappa)$ & $\mathcal{O}(n)$ & $t < n/2$ & CA\\

\cite{civit2024dare} & MVBA, IC & $\mathcal{O}(n\cdot L + n\cdot f\cdot \kappa)$ & $\mathcal{O}(n)$ & $t < n/2$ & MTSS, CA\\

\cite{lenzen2022recursive} & BA & $\mathcal{O}(n\cdot t)$ & $\mathcal{O}(f)$ & $t < n/3$ &  None \\

\cite{constantinescu2025few} & BA & $\mathcal{O}(n \cdot f \cdot \kappa)$ & $\mathcal{O}(f)$ & $t < n/2$ & T-Sig \\

\cite{constantinescu2025few} & BA & $\mathcal{O}((n + t\cdot f)\cdot \log^2 n \cdot \kappa)$ & $\mathcal{O}(f)$ & $t < n/2$ & T-Sig  \\

\textbf{This paper} & MVBA  & $\mathcal{O}(n \cdot L + n\cdot f \cdot \kappa)$ & $\mathcal{O}(f + \log n)$ & $t < n/2$ & T-Sig, CA  \\

\textbf{This paper} & MVBA  & $\mathcal{O}((n +  t \cdot L  + t \cdot f)\cdot \log^2 n \cdot \kappa )$ & $\mathcal{O}(f)$ & $t < n/2$ & T-Sig, CA  \\

\cite{spiegelman2020search} & BA & $\Omega(n + t\cdot f)$ & Any & Any & Any  \\

Folklore & MVBA & $\Omega(n + t\cdot L)$ & Any & Any & Any \\

\hline
\multicolumn{6}{|c|}{\textbf{Partial Synchrony}} \\ 
\hline

\cite{yin2019hotstuff, lewis2023fever}$^\ast$ & SMR  & $\mathcal{O}(n \cdot f\cdot \kappa)$ & $\mathcal{O}(f)$ & $t < n/3$ & T-Sig, Hash\\

\cite{civit2025partial} & MVBA &  $\mathcal{O}(n\cdot L + n^2\cdot \log n\cdot \kappa)$ & $\mathcal{O}(n)$ & $t < n/3$ & Hash \\

\cite{civit2025partial} & MVBA &  $\mathcal{O}(n\cdot L + n^2\cdot \log n)$ & $\mathcal{O}(n)$ & $t < n/5$ & None \\

\cite{constantinescu2025few} & BA & $\mathcal{O}(n \cdot f \cdot \kappa)$ & $\mathcal{O}(f)$ & $t < n/3$ & T-Sig\\

\cite{constantinescu2025few} & BA & $\mathcal{O}((n + t \cdot f)\cdot \log^2 n \cdot  \kappa)$ & $\mathcal{O}(f\cdot \log n)$ & $t < n/3$ & T-Sig\\

\textbf{This paper} & MVBA  & $\mathcal{O}((n+t\cdot L + t\cdot f)\cdot \log^2 n  \cdot \kappa)$ & $\mathcal{O}(f)$ & $t < n/3$ & T-Sig, CA  \\

\cite{spiegelman2020search} & BA & $\Omega(\infty)^\S$ & Any & Any & Any \\

\hline
\multicolumn{6}{|c|}{\textbf{Asynchrony}} \\ 
\hline

\cite{cachin2000random} & BA & $\mathbb{E}[\mathcal{O}(n^2\cdot \kappa)]$ & $\mathbb{E}[O(1)]$ &  $t < n/3$ & T-Sig,Coin  \\

\cite{cohen2020not} & BA & $\mathbb{E}[\mathcal{O}(n \cdot \log^2 n \cdot \kappa)]$ & $\mathbb{E}[O(1)]$ & $t < (\frac{1}{3} - \varepsilon)n$ & VRF \cite{micali1999verifiable} \\

\cite{constantinescu2025few} & BA & $\mathbb{E}[\mathcal{O}((n + t^2)\cdot \log n\cdot\kappa )]$ & $\mathbb{E}[O(1)]$ & $t < n/3$ & T-Sig,Coin\\ 

\cite{yuhan2020dumbo} & MVBA & $\mathbb{E}[\mathcal{O}(L\cdot n+n^2 \cdot (\log n +\kappa))]$ & $\mathbb{E}[O(1)]$ & $t < n/3$ & T-Sig,Coin\\

\textbf{This paper} & MVBA  & $\mathbb{E}[\mathcal{O}((n\cdot \kappa   + t\cdot L + t^2\cdot \kappa) \cdot \log n )]$ & $\mathbb{E}[O(1)]$ & $t < n/3$ & T-Sig, Coin, CA  \\

\cite{constantinescu2025few} & BA & $\mathbb{E}[\Omega(n + t^2)]$ & Any & Any & Any\\

\hline

\end{tabular}
}
\vspace{0.5cm}
\caption{Comparison of Byzantine Agreement Protocols. When a communication is stated with $\Omega$, it indicates a lower bound result. \\
$\ast$ -  \cite{lewis2023fever} uses random leader election, but can be derandomized, maintaining its characteristics.\\
$L$ - size of the input in the MVBA. $\kappa$ - security parameter for cryptographic primitives.\\
$\mathbb{E}$ - in expectation, $\S$ - unlimited messages before GST, MTSS - Multiverse Threshold Signatures Scheme \cite{lee2023decentralized}, T-Sig - Threshold Signatures Scheme \cite{boneh2004short}, CA - Cryptographic Accumulator.
}
\label{tbl:related work}
\end{table}
\paragraph*{On the use of communication graphs.}
Even when the underlying network allows all-to-all messaging, it is often beneficial to \emph{limit} who talks to whom by imposing a \emph{communication graph}, thereby controlling the total communication volume. Although this idea is natural and has shown promise, it has not yet been studied extensively in the context of distributed agreement problems.

This paper adopts the approach of~\cite{constantinescu2025few}, the first work to leverage bipartite dispersers for Byzantine Agreement (BA). Among other related efforts, Elsheimy et al.~\cite{elsheimy2024deterministic} and Civit et al.~\cite{civit2024dare} use expander-based communication patterns to obtain adaptive $\mathcal{O}(nf\cdot \kappa)$ bits of communication for BA; however, their techniques are fundamentally tailored to synchrony and, in partially synchronous networks, would lead to $\Omega(n^2)$ communication. We note that we re-use the approach from Civit et al. for our synchronous protocol. Moreover, communication graphs have also been employed for leader election, for example, using bipartite samplers and expanders as in~\cite{king2006scalable, bhangale2025leader}. These leader-election results, however, ensure agreement only for \emph{most} honest processes rather than \emph{all} honest processes. Finally, Chlebus et al.~\cite{chlebus2023deterministic} use expander graphs to spread a value initially known to a subset of nodes to the entire system, but their method applies only in synchronous settings and requires $t=\mathcal{O}(\sqrt{n})$.

\paragraph*{View Change and Clocks.} Most, if not all, major Partially Synchronous protocols starting with PBFT \cite{castro2002practical}, and continuing with Hotstuff \cite{yin2019hotstuff}, Algorand \cite{gilad2017algorand}, Tendermint \cite{buchman2018latest}, and  Simplex \cite{chan2023simplex} are view-based. A view is a logical time period during which a designated leader is responsible for driving the progress of the system. The challenge with this concept is to make sure that, at some point, all honest processes synchronize their views. This challenge goes hand in hand with assumptions about processors' clocks. For instance, if all clocks are perfectly synchronized, one could define view $i$ as the time interval $[4i\Delta,4(i+1)\Delta]$, making each view last $4\Delta$. If clocks have some bounded skew but do not drift (or drift slowly), one can do an exponential backoff as proposed in Hotstuff. An even weaker assumption, called partial initial clock synchronization \cite{lewis2023fever}, states that there exists a known $\Gamma$ such that at least $t+1$ honest parties start the protocol within $\Gamma$ time of the first honest start. 

Overall, to the best of our knowledge, the analysis of clock assumptions in partial synchrony and their effect on algorithms is lacking in the literature. The only result of this nature we are aware of is Fever \cite{lewis2023fever}. Fever implements view changes under partial initial clock synchronization so that before all honest parties remain in the same view with an honest leader for long enough, at most $\mathcal{O}((n+nf)\cdot \kappa)$, where $\kappa$ is a security parameter. Notably, Fever counts bits after $\mathrm{GST}+\Delta$ and not after GST, this being an ingrained assumption for their result.

\section{Preliminaries}
\subsection{Model and Definitions}
We study a distributed protocol executed by a set of $n$ parties $\mathcal{P} = \{p_1,\ldots,p_n\}$ connected via a fully connected network. Communication channels are authenticated: upon receiving a message, a party can identify its sender.

\noindent\textbf{Network Setting.} We consider three communication models:
\begin{itemize}
    \item \textit{Synchronous network.} There exists a known bound $\Delta > 0$ such that every message sent by an honest party is delivered within $\Delta$ time units. Consequently, protocols can be described in rounds of duration $\Delta$.
    \item \textit{Partially synchronous network.} We adopt the model of \cite{dwork1984consensus}. There is a known $\Delta>0$ and an unknown \emph{Global Stabilization Time} (GST) such that a message sent at time $t$ is delivered by time at most $\max\{t,\mathrm{GST}\}+\Delta$.
    \item Message delivery times are unbounded—there is no guaranteed limit on the delay between sending and receiving. The only assurance is eventual delivery: any message that is sent by an honest process will eventually be received.
\end{itemize}

\noindent\textbf{Byzantine Behavior.} We assume that up to $t$ parties may be corrupted and behave arbitrarily. An adversary controls all corrupted parties, and the protocol must tolerate such behavior. Parties that are not corrupted are called honest. We additionally denote by $f$ the \emph{actual} number of Byzantine parties in a particular execution, with $0 \le f \le t \le n$.

In addition, in the partially synchronous model prior to GST, the adversary is assumed to control the message scheduler: it may delay and reorder messages arbitrarily, subject only to delivery after GST.
Furthermore, in asynchrony, an adversary may delay and reorder messages arbitrarily, subject only to eventual delivery.

Our protocols tolerate a \emph{dynamic} adversary, meaning the adversary may adaptively choose which parties to corrupt after observing messages exchanged during the execution.

\noindent\textbf{Clocks and Views.} For our partially synchronous protocols, we use the standard notion of a \emph{view}, i.e., a logical epoch in which a designated leader is responsible for driving progress. Each view has an associated index, and all correct parties share a common leader-selection function $\mathrm{leader}(i)$ specifying the leader of view $i$. If the current leader does not make progress---because it crashes, equivocates, or the network delays its messages---parties trigger a view change and move to the next view with a new leader. Since parties may advance at different speeds, distinct honest parties may temporarily be in different views; parties then ignore leaders outside their current view, preventing those leaders from collecting enough votes/signatures/etc.\ to advance the protocol.

The view mechanism is required to satisfy the following liveness guarantee: \emph{after GST, all honest parties eventually remain in some view whose leader is honest for a sufficiently long period}. What constitutes ``sufficiently long'' depends on the protocol; in our partially synchronous constructions, this duration is always $\Theta(\Delta)$.

In our partially synchronous algorithms, we assume parties have perfectly synchronized clocks. Under this assumption, implementing views would be straightforward without any communication: for example, one could define view $i$ as the time interval $[4i\Delta,4(i+1)\Delta]$, making each view last $4\Delta$.

We leave determining the minimal clock assumptions for implementing adaptive partially synchronous protocols, for example, view synchronization, as an orthogonal open problem of high interest.

\noindent\textbf{Byzantine Agreement.} Our goal is to solve \emph{Strong Multi-Valued Binary Byzantine Agreement}; for brevity, we refer to it simply as Byzantine Agreement (BA). In BA, each honest party may \emph{propose} a value and eventually \emph{decide} a value. We consider different properties:
\begin{itemize}
    \item \textit{Termination}: every honest party eventually decides;
    \item \textit{Probabilistic Termination}: every honest party eventually decides eventually;
    \item \textit{Agreement}: if two honest parties decide $v_1$ and $v_2$, then $v_1=v_2$;
    \item \textit{Strong Unanimity}:\footnote{This property is also known as Strong Validity.} if all honest parties propose the same value, then that value is the one decided.
\end{itemize}
A protocol is said to solve Byzantine Agreement if it solves agreement, strong unanimity. In asynchrony, termination is replaced with probabilistic termination.

\noindent\textbf{Bit Complexity.} We define the \emph{bit complexity} of a protocol to be the total number of bits sent by honest parties.

As observed by Spiegelman et al.\ \cite{spiegelman2020search}, no partially synchronous BA protocol can guarantee a bounded number of messages if one counts communication occurring before GST. Accordingly, when we state bit complexity bounds for partially synchronous protocols, we measure only the communication sent \emph{after} GST.

\noindent\textbf{Round Complexity.} In the synchronous model, the round complexity is the first round by which all honest parties have decided. In the partially synchronous model, it is the first such round minus the round in which GST occurs. The round complexity of an asynchronous protocol is its round complexity assuming it runs in synchrony.

\subsection{Tools Used}

\paragraph*{Disperser}
\label{sec:disperser}
A central ingredient used throughout our algorithms is a \emph{disperser} graph.

\begin{definition}
    Consider a bipartite graph $G = (L \sqcup R, E)$ where each node on the left has degree $d$. $G$ is a $(k, \varepsilon)$ \emph{disperser} if for every $S \subseteq L$ of size at least $k$, it holds that $|N(S)| \geq (1 - \varepsilon)|R|$.

    Here $N(S)$ is a neighborhood of $S$, namely $N(S) = \{v \in R\mid (u, v) \in E\text{ for some } u\in S\}$.
\end{definition}

Intuitively, the disperser condition says that any sufficiently large subset of left vertices has neighbors covering almost all of the right side. Of course, this condition is easy to satisfy in uninteresting ways: for example, a complete bipartite graph is a disperser, but uses far too many edges. Similarly, if the threshold $k$ is close to $|L|$, then the guarantee may hold simply because $S$ is enormous. The main objective is therefore to obtain dispersers that simultaneously (i) have small left degree (ideally $d = O(\log n)$) and (ii) work already for subsets $S$ whose size is small relative to $|R|$.

Such graphs are known to exist. The following statement is standard (see, e.g., Theorem~1.10 in \cite{radhakrishnan2000bounds}).

\begin{theorem}
    \label{thm:bipartite disperser}
    For any $n, m$, $m \leq n$ and $\varepsilon > 0$, there exists a $(k, \varepsilon)$ bipartite disperser $G(L\sqcup R, E)$ with $|L| = n, |R| = m$, left degree $d = O_\varepsilon(\log n)$ and $k = \Omega_\varepsilon(\frac{m}{\log n})$. 
\end{theorem}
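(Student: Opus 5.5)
The plan is a probabilistic-method proof: sample a random bipartite graph in which each left vertex picks $d$ right neighbors independently and uniformly at random (with replacement; repeated choices only lower the degree, and we can pad with arbitrary extra edges afterward). We then show that with positive probability every set $S \subseteq L$ with $|S| = k$ has $|N(S)| \geq (1-\varepsilon)m$. Larger sets inherit the property by monotonicity of $N(\cdot)$, so it suffices to handle $|S| = k$ exactly.

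Fix $S$ with $|S|=k$ and a set $T \subseteq R$ with $|T| = \varepsilon m$. The bad event is that all $kd$ edges leaving $S$ avoid $T$, which has probability $(1-\varepsilon)^{kd} \leq e^{-\varepsilon k d}$. A union bound over all pairs $(S,T)$ bounds the failure probability by
\begin{equation*}
\binom{n}{k}\binom{m}{\varepsilon m} e^{-\varepsilon k d} \leq \exp\bigl(k\ln n + m\,H(\varepsilon) - \varepsilon k d\bigr),
\end{equation*}
where $H$ is the natural-log entropy function and $\binom{m}{\varepsilon m}\le e^{mH(\varepsilon)}$.

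We now choose $d = c_\varepsilon \ln n$ and $k = m/\ln n$; if $m/\ln n < 1$, take $k=1$ and the bound is handled by the same estimate. The exponent becomes
\begin{equation*}
m + mH(\varepsilon) - \varepsilon c_\varepsilon m = m\bigl(1 + H(\varepsilon) - \varepsilon c_\varepsilon\bigr).
\end{equation*}
Taking $c_\varepsilon > (2 + H(\varepsilon))/\varepsilon$ makes this at most $-m < 0$, so the failure probability is below $1$ and a good graph exists. This gives $d = O_\varepsilon(\log n)$ and $k = \Omega(m/\log n)$, as claimed.

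The main obstacle is balancing the two union-bound terms. The factor $\binom{n}{k}$ contributes $k\ln n$, and this is exactly why $k$ must carry the $1/\log n$ factor relative to $m$. Meanwhile the $\binom{m}{\varepsilon m}$ term is linear in $m$, and it must be dominated by $\varepsilon k d = \varepsilon c_\varepsilon m$. The degenerate regime $m < \ln n$ needs care: there we let $k$ be a constant, and a single left vertex with $d \ge m$ random edges already covers $(1-\varepsilon)m$ with high enough probability. Alternatively, we can simply make the edge set complete in that regime, since then $d = m = O(\log n)$.
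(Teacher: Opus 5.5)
Your proof is correct. The paper gives no proof of its own and simply cites Theorem~1.10 of Radhakrishnan and Ta-Shma, which is established by exactly this kind of random-graph union bound over pairs $(S,T)$, and your choice $d=c_\varepsilon\ln n$, $k\approx m/\ln n$ balances the $\binom{n}{k}$ and $\binom{m}{\varepsilon m}$ terms correctly. The only bookkeeping worth writing out is integrality: with $k=\lceil m/\ln n\rceil$ and $|T|=\lfloor\varepsilon m\rfloor+1$, the exponent is at most $-(\varepsilon c_\varepsilon-1)k\ln n+m\ln 2\le-(\varepsilon c_\varepsilon-1-\ln 2)m$ because $k\ln n\ge m$, so $\varepsilon c_\varepsilon>2+\ln 2$ suffices and the case $k=1$ is covered automatically; and whenever $m<d$ (not only $m<\ln n$) you should take the complete bipartite graph, so that the left degree is a genuine simple-graph degree.
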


That said, the guarantee in Theorem~\ref{thm:bipartite disperser} is currently known only via a non-constructive existence proof. In contrast, explicit constructions are available (e.g., Ta-Shma et al.~\cite{ta2007lossless}), but they incur slightly weaker (polylogarithmic) parameters:

\begin{theorem}[Theorem 1.4~\cite{ta2007lossless}]
    \label{thm:explicit bipartite disperser}
    There exists a deterministic polynomial time algorithm that for any $n, m$, $m \leq n$ and $\varepsilon > 0$ outputs a $(k, \varepsilon)$ bipartite disperser $G(L\sqcup R, E)$ with $|L| = n, |R| = m$, left degree $d = O_\varepsilon(\mathrm{poly}(\log n))$ and $k = \Omega_\varepsilon(\frac{m\log ^3 n}{d})$. 
\end{theorem}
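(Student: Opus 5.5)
This statement is imported from the pseudorandomness literature rather than proved from scratch, so the plan is to reduce it to the explicit disperser construction of Ta-Shma, Umans and Zuckerman and then translate parameters into the bipartite-graph language used above. A disperser in their sense is a polynomial-time computable function $\Gamma : [N]\times[D]\to[M]$ such that every $S\subseteq[N]$ with $|S|\ge K$ satisfies $|\Gamma(S\times[D])|\ge(1-\varepsilon)M$.

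Given such a $\Gamma$, I identify $L=[N]$ and $R=[M]$ and put an edge $(x,\Gamma(x,y))$ for every seed $y\in[D]$. This gives a bipartite graph with left degree $D$. The disperser property of the graph, $|N(S)|\ge(1-\varepsilon)|R|$ for all $|S|\ge K$, is then literally the definition of $\Gamma$. If two seeds yield the same neighbour, I allow parallel edges or pad, so that the degree is exactly $d=D$. Polynomial-time computability of $\Gamma$ gives a deterministic polynomial-time algorithm that outputs the whole edge list, since the list has $nD=n\cdot\mathrm{poly}(\log n)$ entries.

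The substantive step is reading off the parameters. The construction in Theorem~1.4 of \cite{ta2007lossless} achieves seed length $O(\log N)$ up to lower-order terms, i.e.\ $D=\mathrm{poly}(\log N)$. Its entropy loss is $O(\log^3 N)$, meaning that the output size satisfies $M=\Omega(KD/\log^3 N)$. Inverting this relation gives $K=\Theta(M\log^3 N/D)$, which is exactly the claimed $k=\Omega_\varepsilon(m\log^3 n/d)$ once we set $N=n$, $M=m$, $d=D$. Here $K$ is the smallest guaranteed size, so a larger threshold only weakens the requirement. The dependence on $\varepsilon$ is absorbed into the $O_\varepsilon$ and $\Omega_\varepsilon$ constants.

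The main obstacle I expect is that the cited construction is stated only for structured parameters, typically $N$ and $M$ powers of two and $K$ ranging over a specific set. Our statement instead asks for arbitrary $n$ and $m\le n$.

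\begin{itemize}
\item \textbf{Left side.} I handle $n$ by building the disperser for $N=2^{\lceil\log n\rceil}<2n$ and restricting to the first $n$ left vertices. Restriction preserves the disperser property for subsets of the remaining vertices and only changes $\log N$ by a constant.
\item \textbf{Right side.} I handle $m$ by building the disperser for some $M'$ with $m\le M'<2m$ and folding $R'$ onto $[m]$ via a map $\pi$ whose preimages all have size $1$ or $2$.
\item \textbf{Folding check.} A right vertex $j\in[m]$ is missed only if all of $\pi^{-1}(j)$ is missed, so at most $\varepsilon M'<2\varepsilon m$ vertices of $[m]$ are uncovered.
\end{itemize}

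Running the construction with $\varepsilon/2$ in place of $\varepsilon$ therefore restores the $(1-\varepsilon)$ guarantee. All of these adjustments change $n$, $m$, $d$ and $\varepsilon$ by constant factors only, so they are absorbed into the asymptotic notation and the theorem follows.
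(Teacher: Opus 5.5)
\textbf{Gap in the parameter translation.} The paper itself gives no argument here; it imports Theorem~1.4 of \cite{ta2007lossless}. So a proof is exactly the translation you outline. Your function-to-graph conversion is fine, and so are the restriction of the left side and the folding of the right side with $\varepsilon/2$. The problem is the step you call substantive: reading off the parameters. There you treat quantities measured in bits as if they were set sizes.

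In the cited setting the source has $n'=\log N$ bits, the seed $d'$ bits ($D=2^{d'}$), the output $m'$ bits ($M=2^{m'}$), and the min-entropy is $k'$ ($K=2^{k'}$). The entropy loss $\Lambda=k'+d'-m'$ is additive on the logarithmic scale, so $M=KD/2^{\Lambda}$. With the parameters as you quote them:
\begin{itemize}
\item ``Seed length $O(\log N)$'' gives $D=N^{O(1)}$, not $\mathrm{poly}(\log N)$.
\item ``Entropy loss $O(\log^3 N)$'' gives $M=KD/2^{O(\log^3 N)}$, from which $M=\Omega(KD/\log^3 N)$ does not follow.
\item Even taking $\log^3$ of the bit-length, a loss of $O(\log^3 n')$ would only give a factor $2^{O((\log\log N)^3)}$, which is not polylogarithmic.
\end{itemize}
So the step ``entropy loss $X$, hence $M=\Omega(KD/X)$'' fails as written.

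\textbf{What you need to quote.} The factor $\log^3 n$ in the statement comes from the cited theorem in bit-length notation:
\begin{itemize}
\item seed length $d'=O(\log n')=O(\log\log N)$, so $D=\mathrm{poly}(\log N)$;
\item entropy loss $\Lambda=3\log n'+O(1)$, so $2^{\Lambda}=\Theta(\log^3 N)$, hence $M=\Theta(KD/\log^3 N)$ and $K=\Theta(M\log^3 N/D)$.
\end{itemize}
The exact constant $3$ is what produces the exponent; an unspecified $O(\log n')$ loss would give only some $\mathrm{polylog}(N)$ factor, not the one stated.

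\textbf{Parametrization.} You should also say explicitly that the cited construction is parametrized by $(n',k')$, with $m'$ determined by them. To hit a prescribed $m$ you must choose $k'\approx\log m-d'+3\log n'$. The boundary cases are harmless:
\begin{itemize}
\item $k'<0$ means $M<D$, where joining every left vertex to all right vertices works;
\item $k'>n'$ means $K>N$, where the condition is vacuous.
\end{itemize}
Only after this choice do your rounding, restriction and folding steps apply as you describe.
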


In what follows, we work with dispersers as in Theorem~\ref{thm:bipartite disperser}. Replacing them with the explicit construction of Theorem~\ref{thm:explicit bipartite disperser} would preserve correctness, but would introduce additional multiplicative polylogarithmic factors into our communication bounds.

We next highlight a key result that enables the use of bipartite dispersers for distributed protocols. Roughly speaking, it provides a way to map parties to committees so that—even if the adversary decides whom to corrupt only after learning the mapping—only a limited number of parties end up ``unlucky'' in the sense that all of their assigned committees contain Byzantine nodes.

We first recall the relevant terminology.

\begin{definition}[Definition 3.7 of \cite{constantinescu2025few}]
    We call a committee \emph{compromised} if it contains at least one byzantine party. 
    
    For a given assignment of parties to committees and a given set of byzantine parties, we say that a party is \emph{blocked} if all committees it is assigned to are compromised. 
\end{definition}

With this in place, the theorem is as follows.

\begin{theorem}[Theorem 3.8 of \cite{constantinescu2025few}]
    \label{thm:nodes2committees}
    Given $n$ parties and an integer $\hat{f}$, there is a way to assign parties to committees such that:
    \begin{enumerate}
        \item There are at most $O(\hat{f} \log n)$ committees.
        \item Every party is assigned to at most $O(\log n)$ committees.
        \item An adversary, after seeing the assignment, can make up to $\hat{f}$ arbitrary parties byzantine. Nonetheless, no matter the adversarial choice of which parties to corrupt, the number of blocked parties is at most $c_b\cdot \hat{f}$ for some constant $c_b$.
    \end{enumerate}
\end{theorem}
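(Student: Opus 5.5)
We build the assignment from the disperser of Theorem~\ref{thm:bipartite disperser}, with the $n$ parties on the left side and the committees on the right side. Fix a small constant $\varepsilon$, say $\varepsilon = 1/2$, and choose the number of committees $m = C\hat{f}\log n$ for a constant $C$ to be fixed later. Theorem~\ref{thm:bipartite disperser} then gives a $(k,\varepsilon)$ disperser with $|L| = n$, $|R| = m$, left degree $d = O(\log n)$, and
\[
k = \Omega\!\left(\frac{m}{\log n}\right) = \Omega(C \hat{f}).
\]
Party $u$ is assigned to the committees $N(u)$. Properties (1) and (2) are then immediate: there are $m = O(\hat f\log n)$ committees, and each party lies in $d = O(\log n)$ of them.

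For property (3), let $B$ be any set of at most $\hat f$ Byzantine parties. A committee is compromised exactly when it lies in $N(B)$. Every Byzantine party has degree $d$, so
\[
|N(B)| \le \hat f\cdot d = \hat f\cdot c_d \log n,
\]
where $c_d$ is the constant hidden in $d = O(\log n)$. Let $S$ be the set of blocked parties, which are exactly those $u$ with $N(u) \subseteq N(B)$. Then $N(S) \subseteq N(B)$, so $|N(S)| \le c_d \hat f \log n$.

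Now suppose $|S| \ge k$. The disperser property gives
\[
|N(S)| \ge (1-\varepsilon)m = \tfrac{1}{2} C \hat f \log n.
\]
Choose $C > 2c_d$; this is a contradiction. Hence $|S| < k = O(\hat f)$, and the number of blocked parties is at most $c_b\hat f$ for a constant $c_b$.

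The main obstacle is a possible circularity in the constants. In Theorem~\ref{thm:bipartite disperser}, both the degree constant and the constant in $k$ depend only on $\varepsilon$. Since $\varepsilon$ is fixed before $C$ is chosen, increasing $C$ only increases $m$, and the contradiction step remains valid.

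We must also check that the theorem's requirement $m \le n$ holds. If $C\hat f\log n > n$, we instead use the trivial assignment where each party forms its own singleton committee. This gives $n \le C\hat f\log n$ committees, each party in one committee, and at most $\hat f$ blocked parties, so all three properties hold in that case too. Finally, the adversary choosing $B$ after seeing the assignment is harmless, because the argument holds for every $B$ of size at most $\hat f$.
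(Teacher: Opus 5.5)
Your proof is correct and follows the standard disperser argument behind this result, the same one the paper uses for Theorem~\ref{thm:nodes2committeesPlus}: parties on the left, $\Theta(\hat f\log n)$ committees on the right, and $N(S)\subseteq N(B)$ with $|N(B)|\le \hat f d<(1-\varepsilon)m$ forces $|S|<k$. The one thing to fix is the direction of the bound on $k$: you need $k\le \beta_\varepsilon\, m/\log n$ from Theorem~\ref{thm:bipartite disperser} (a larger $k$ is a weaker guarantee, so this is the only useful reading of the $\Omega$ in its statement), which gives $c_b=\beta_\varepsilon C$, whereas writing ``$k=\Omega(C\hat f)$'' and then concluding ``$k=O(\hat f)$'' does not follow as written.
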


\paragraph*{Signatures} In some of our protocols, we will assume the presence of a public-key infrastructure (PKI). For a party $p$, the interface of PKI consists of functions $\textit{sign}_p$ and $\textit{verify}$ such that:\footnote{As is standard in cryptography, here and in similar contexts, we assume only party $p$ can invoke $\textit{sign}_p(\cdot)$, but all parties can invoke $\textit{verify}(\cdot, \cdot, p)$.}
\begin{itemize}
    \item For a given party $p$ and message $m$, $\textit{sign}_p(m)$ returns a signature $s$.
    \item For a given party $p$, value $s$ and message $m$, $\textit{verify}(m,s,p)$ returns $true$ if and only if $s = \textit{sign}_p(m)$.
\end{itemize}
Moreover, for some protocols, we will also assume the presence of a threshold signature scheme \cite{boneh2004short}. A threshold signature scheme is parameterized by two quantities: $k$ - the threshold, and $n$ - the number of parties participating in the scheme. Each party $p$ that participates in the scheme is given functions $\textit{tsign}_p$, $\textit{tcombine}$, and $\textit{tverify}$ such that:
\begin{itemize}
    \item For a given party $p$ and message $m$, $\textit{tsign}_p(m)$ returns a partial signature $\rho_p$.
    
    \item For given message $m$, a set $P \subseteq \mathcal{P}$ such that $|P| = k$ and tuple $(\rho_p)_{p \in P}$ such that $\forall p \in P,\ \rho_p = \textit{tsign}_p(m)$, a call $\textit{tcombine}\left(m, (\rho_p)_{p \in P}\right)$ returns a threshold signature $\sigma$.
    
    \item For a given message $m$ and threshold signature $\sigma$, $\textit{tverify}(m, \sigma)$ returns $true$ if and only if there exists a tuple of partial signatures $(\rho_p)_{p \in P}$ with $|P| = k$ such that $\sigma = \textit{tcombine}\left(m, (\rho_p)_{p \in P}\right)$.
\end{itemize}
Except when mentioned otherwise, the threshold signature will be used with parameter $k = n - t$ in this paper.

We will also use \emph{aggregate signature scheme} \cite{boneh2003aggregate}, which, for simplicity of presentation, we define to be a threshold signature scheme\footnote{The original definition is more general; it allows parties to sign different messages. We do not need this versatility in our results.} with $k = n.$ 

\paragraph*{Hash Functions}
A hash function maps inputs from a (typically large) domain $X$ to outputs in a (much smaller) range $Y$, where $|Y| \ll |X|$. For an adversary with bounded computational power, a cryptographic hash function $h$ is expected to satisfy the following property.

\begin{property}[Second-Preimage Resistance]
\label{prop:hash_second_preimage_resistance}
Given an input $x \in X$, it is computationally infeasible to find an $x' \in X$ with $x' \neq x$ such that $h(x') = h(x)$.
\end{property}


Security is governed by the output length of $h$. We therefore introduce a security parameter $\kappa$, defined so that a hash output can be encoded using $\kappa$ bits, i.e., $\kappa = \lceil \log_2 |Y| \rceil$. Larger $\kappa$ increases security, at the cost of increased communication in our protocols. In practice, $\kappa = 256$ bits is a commonly used security level. For a more detailed overview of hashing, see \cite{chi2017hashing}.

\paragraph*{Cryptographic Accumulator}
Our protocol relies on a cryptographic accumulator \cite{derler2015ca, kate2010ca} to share parts of the output value. A cryptographic accumulator is used to certify that a value belongs to a set. More specifically, given a set $S$, a cryptographic accumulator returns an accumulator and a witness for every element in the set.

Moreover, a function $\emph{verify}$ is given such that, given the accumulator $acc$, an element $x \in S$ and its witness $w$, $\emph{verify}(acc, x, w) = true$. It is computationally infeasible to find $y \notin S$ and $w'$ such that $\emph{verify}(acc, y, w') = true$. Both the accumulator and each witness have size $\mathcal{O}( \kappa)$ where $\kappa$ is a security parameter.

We note that optimal cryptographic accumulators require a setup and a known upper bound on the size of $S$. A cryptographic accumulator can also be implemented using Merkle trees \cite{merkle1987digital}. In this case, the witness size becomes $\mathcal{O}(\kappa \log |S|)$ but no setup is required.

\paragraph*{Erasure Coding}
Erasure coding expands an original message $m$ into a longer encoded message $m_{\mathsf{enc}}$ such that the original can be recovered from any sufficiently large subset of the encoded symbols. In particular, we use classical Reed--Solomon codes \cite{wicker1999reed}, which satisfy the following standard guarantee.

\begin{theorem}
Let $\mathbb{F}$ be a field with $N$ elements. Let $m$ be a message represented by $K$ elements of $\mathbb{F}$. Fix an integer $c$ such that $cK \le N$. Then there exists a Reed--Solomon encoding procedure that maps $m$ to an encoded message $m_{\mathsf{enc}}$ consisting of $cK$ elements of $\mathbb{F}$, with the property that $m$ can be reconstructed from any $K$ elements of $m_{\mathsf{enc}}$.
\end{theorem}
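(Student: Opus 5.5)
The plan is the classical polynomial-evaluation construction, with reconstruction by interpolation.

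\emph{Encoding.} Write the message as $m = (m_0, \ldots, m_{K-1}) \in \mathbb{F}^K$ and attach to it the polynomial $P_m(X) = \sum_{i=0}^{K-1} m_i X^i$, which has degree less than $K$. Since $cK \le N = |\mathbb{F}|$, we can fix once and for all $cK$ pairwise distinct field elements $\alpha_1, \ldots, \alpha_{cK} \in \mathbb{F}$. This is the only place the hypothesis $cK \le N$ is needed. Define the encoding as the evaluation vector
\[
m_{\mathsf{enc}} = \bigl(P_m(\alpha_1), \ldots, P_m(\alpha_{cK})\bigr),
\]
which consists of $cK$ elements of $\mathbb{F}$, as required. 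The $j$-th symbol is understood to carry its index $j$, so a recipient knows which evaluation point it corresponds to.

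\emph{Reconstruction.} Suppose we are given any $K$ symbols of $m_{\mathsf{enc}}$, indexed by a set $J \subseteq \{1, \ldots, cK\}$ with $|J| = K$. We then know the values $y_j = P_m(\alpha_j)$ for $j \in J$ at $K$ distinct points. The key step is uniqueness: if two polynomials of degree less than $K$ agree at these $K$ points, their difference has degree less than $K$ and at least $K$ roots. A nonzero polynomial over a field has at most as many roots as its degree, so the difference is the zero polynomial.

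\emph{Existence and explicit recovery.} We produce the interpolant by Lagrange interpolation:
\[
P(X) = \sum_{j \in J} y_j \prod_{\substack{k \in J \\ k \ne j}} \frac{X - \alpha_k}{\alpha_j - \alpha_k}.
\]
The denominators are nonzero because the $\alpha_j$ are distinct, so this is well defined. It has degree less than $K$ and takes the value $y_j$ at each $\alpha_j$. By uniqueness $P = P_m$, and reading off its coefficients returns $m$. Equivalently, the $K \times K$ Vandermonde matrix $(\alpha_j^i)_{j \in J,\, 0 \le i < K}$ has determinant $\prod_{k < j} (\alpha_j - \alpha_k) \neq 0$. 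It is therefore invertible, and $m$ is obtained by solving the corresponding linear system.

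There is no real obstacle here. The only points needing care are that the hypothesis $cK \le N$ is exactly what guarantees $cK$ distinct evaluation points exist, and that the root-counting bound requires $\mathbb{F}$ to be a field rather than an arbitrary ring. When the theorem is used in the paper, an $L$-bit value is split into $K$ field elements over a field with $N \ge cK$ elements.
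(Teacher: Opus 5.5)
Your proof is correct. Evaluating the degree-$<K$ message polynomial at $cK$ distinct points (which exist exactly because $cK \le N$) and recovering it by Lagrange/Vandermonde interpolation from any $K$ of them is the standard Reed--Solomon argument; the paper gives no proof of its own and only cites this as the standard guarantee of Reed--Solomon codes, so your argument is the classical proof behind that citation.
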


We will use the following corollary. From this point on, a \emph{symbol} refers to a string of bits.

\begin{corollary}
    \label{cor:reed-solomon}
Let $m$ be a message of length $L$ bits, and let $K,c$ be positive integers. There exists a procedure that produces $cK$ symbols such that $m$ can be reconstructed from any $K$ of them, and each symbol has size $O\!\left(\max\left\{\left\lceil \frac{L}{K}\right\rceil,\ \log(cK)\right\}\right)$ bits.
\end{corollary}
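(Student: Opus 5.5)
The plan is to reduce Corollary~\ref{cor:reed-solomon} to the preceding Reed--Solomon theorem by choosing a field of suitable size and packing bits into field elements. Set $b = \max\{\lceil L/K\rceil, \lceil \log_2(cK)\rceil\}$ and work over a field $\mathbb{F}$ with $N = 2^b$ elements; the field $\mathrm{GF}(2^b)$ exists for every $b \ge 1$. Each element of $\mathbb{F}$ is encoded as a string of exactly $b$ bits. This choice gives two facts:
\begin{itemize}
\item because $b \ge \log_2(cK)$, we have $N = 2^b \ge cK$, which is the hypothesis $cK \le N$ of the theorem;
\item because $b \ge \lceil L/K\rceil$, we have $Kb \ge L$.
\end{itemize}

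Next, pad $m$ with zeros to exactly $Kb$ bits and split it into $K$ consecutive blocks of $b$ bits. Each block is read as one element of $\mathbb{F}$, so $m$ becomes a message of $K$ field elements. Since $L$ is a fixed, known parameter, the padding is unambiguous: after recovery we simply truncate to the first $L$ bits.

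Now apply the theorem to obtain $cK$ encoded field elements, and output each as a symbol of $b$ bits. Given any $K$ of these symbols, each is read back as a field element. The theorem's decoding then recovers the $K$ message elements, and concatenating and truncating returns $m$.

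The symbol size is $b = O(\max\{\lceil L/K\rceil, \log(cK)\})$, using $\lceil \log_2 (cK)\rceil \le \log_2(cK)+1$. The only degenerate case is $cK = 1$, where $\log(cK) = 0$. There we just take $b \ge 1$, which is absorbed by the $\lceil L/K\rceil \ge 1$ term whenever $L \ge 1$.

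There is no real obstacle; the only point needing care is the simultaneous choice of $b$ meeting both the field-size constraint and the packing constraint, which the max handles.
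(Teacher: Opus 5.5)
Your proposal is correct and is the routine derivation the paper has in mind when it calls this a corollary of the Reed--Solomon theorem; the paper gives no separate proof. Choosing $\mathrm{GF}(2^b)$ with $b=\max\{\lceil L/K\rceil,\lceil\log_2(cK)\rceil\}$ so that both $2^b\ge cK$ and $Kb\ge L$ hold, then padding, packing, and applying the theorem, is exactly right. Decoding needs to know which positions the $K$ received symbols occupy; this is inherited from the theorem, and tagging each symbol with its index would add only $O(\log(cK))$ bits, still within the stated bound.
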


\paragraph*{Combining Erasure Coding and Cryptographic Accumulator}
In this paper, we always use a Cryptographic Accumulator together with Erasure Coding: Erasure Coding allows a large value to be divided into smaller chunks and a Cryptographic Accumulator is used to verify that these chunks come from the large value.

Therefore, we formalize this abstraction as two functions provided to our protocol:
\begin{itemize}
    \item \textbf{encode}$_n$($v$) takes as input an $L$-bit value $v$ and returns an accumulator $acc$ and $n$ shares $(s_1, ..., s_n)$ such that the accumulator has size $\mathcal{O}(\kappa)$ and each share contains both the erasure symbol and a witness for this symbol. A share has therefore size $\mathcal{O}(L/n + \log n + \kappa) = \mathcal{O}(L/n + \kappa)$.
    \item \textbf{decode}$_n$($(s_i)$) takes as input at least $n/4$ distinct shares $(s_i)$ which were obtained through a call to \textbf{encode}$_n$($v$) and returns $v$.
\end{itemize}

We note that checking the witness is done implicitly: in our protocols, every time a party receives, it already has the accumulator or the accumulator is contained within the same message. Messages with an invalid share are discarded.

\section{Subprotocols}

Our main byzantine protocol can be described as $3$ different subprotocols chained together:
\begin{itemize}
    \item All-to-Quorum Broadcast (AQB): All $n$ communicate with a quorum made of $\Theta(t)$ nodes and influence their input for the following sub-protocol (Quorum Agreement). This new part, which was missing when considering only binary agreement, allows $L$ to be decoupled from $n$.
    \item Quorum Agreement: the quorum of $\Theta(t)$ nodes runs an agreement protocol. This protocol has different properties compared to a regular Byzantine Agreement protocol, which are needed to make full use of the two other subprotocols.
    \item Quorum-to-All Broadcast (QAB): This step, which originates from the work by Constantinescu et al. \cite{constantinescu2025few} allows the output of the Quorum Agreement to be disseminated to all parties. Compared to previous work, some additional subtleties are required to send the $L$-bit value in an efficient way to parties which don't have it.
\end{itemize}

We now provide a formal description of all $3$ subprotocols.

\paragraph{All-to-Quorum Broadcast} The $n$ parties in an AQB protocol have each an input value $\vin$. There is a known set of $k = 9t+1 = \Theta(t)$ quorum parties which can output a value $v_{out}$ which can be $\ast$. An AQB protocol satisfies the following definition:
\begin{itemize}
    \item \textit{Robust termination}: All but at most $f$ honest quorum parties will eventually decide a value.
    \item \textit{Robust validity}: If all honest parties have the same input $v$, then at least $k-2t$ honest quorum parties will decide $v$.
    \item \textit{Non-amplification}: There exists $C > 0$ such that for any input $v \ne \ast$, if at least $C \cdot t$ honest parties do not have $v$ as input, then at most $t$ honest quorum parties will decide $v$.
\end{itemize}

\paragraph{Quorum Agreement}
The Quorum Agreement is a protocol run on $n$ parties. When used together with AQB and QAB we run it on the quorum made of $k = 9t+1 = \Theta(t)$ parties. Each party has an input $\vin$ and can decide an output $v_{out}$. A Quorum Agreement protocol satisfies the following properties:
\begin{itemize}
    \item \textit{Validity}: If all honest parties have the same input $v$, then only this value can be decided
    \item \textit{Agreement}: If two honest parties decide respectively $v_1$ and $v_2$, then $v_1 = v_2$.
    \item \textit{Partial Termination}: At least $n/4$ honest parties eventually decide. 
    \item \textit{Provability}: When deciding a value, parties also provide a certificate which any party can use to prove this value was decided.
    \item \textit{Representativity}: If an honest party decides a non-$\ast$ value $v$, let $n_v$ the number of honest parties with input $v$, then $n_v + f > n/3$.
\end{itemize}
In the asynchronous setting, partial termination is replaced with probabilistic partial termination where the property holds only almost surely. We remark that quorum agreement is really similar to byzantine agreement. In particular if a quorum agreement protocol also satisfies termination (or probabilistic termination in the asynchronous setting), then it is also a Byzantine Agreement protocol.

The round complexity of a Quorum Agreement protocol is the number of rounds before $n/4$ honest parties decide.

\paragraph{Quorum-to-All Broadcast}: The QAB protocol is used to broadcast the decision of the quorum of $k$ parties to all $n$ parties. More specifically, every party in the quorum can receive an input $\vin$ and a certificate $cert$. There exists a function local \textbf{check} that every party possesses such that:
\begin{itemize}
    \item If an honest quorum party receives $\vin$, $cert$, then $\textbf{check}(\vin, cert) = 1$ and it is computationally infeasible for an adversary to find $v' \ne \vin, cert'$ such that $\textbf{check}(v', cert') = 1$.
    \item If two honest quorum parties receive respectively $\vin$ and $\vin'$ as input, then $\vin = \vin'$.
\end{itemize}
Note that quorum parties may receive an input at different times or not receive an input at all. All parties can also decide an output value $v_{out}$. We say that the protocol is a QAB protocol if it satisfies the following property:
\begin{itemize}
    \item \textit{Validity}: If an honest party decides a value, it is $\vin$.
    \item \textit{Termination}: If at least $k/4$ honest quorum parties receive input $\vin$, then every honest party eventually decide $\vin$.
\end{itemize}

The round complexity of a QAB protocol is the number of rounds between the time $k/4$ parties receive an input and all honest parties decide an output. A particularity of our QAB implementation is that its bit complexity will depend on a parameter $h$ which is the number of parties that do not know $\vin$, assuming $\vin \ne \ast$. A party may already know $\vin$ (but not know it was the value decided) if it receives it as the input of a previous subprotocol (in our case the AQB). To give some foresight about it, using the AQB's non-amplification and quorum agreement's representativity, we will guarantee that $h = \mathcal{O}(t)$.

We now describe how to combine these $3$ subprotocols together:
\begin{theorem}
    Given:
    \begin{itemize}
        \item An AQB protocol with bit complexity $\textsc{BITS}_{AQB}(n,t)$ and constant round complexity.
        \item A Quorum Agreement protocol which for $k$ parties has with bit complexity $\textsc{BITS}_{QA}(k, f)$, round complexity $\textsc{ROUNDS}_{QA}(f)$ and resiliency at least $\lfloor k / 3 \rfloor$.
        \item A QAB protocol with bit complexity $\textsc{BITS}_{QAB}(n, t, f, r, h)$ where $r$ is the number of rounds for $k/4$ honest parties of the quorum to receive an input while $h$ is the number of parties which do not know the QAB input, assuming the decided value is not $\ast$, and round complexity $\textsc{ROUNDS}_{QA}(n,f)$
    \end{itemize}
    Then there exists a constant $C > 0$ and a protocol solving Byzantine Agreement with resiliency $t < n/9$, bit complexity:
    \begin{align*}
        \textsc{BITS}_{AQB}(n,t) + \textsc{BITS}_{QA}(t, 2f) + \textsc{BITS}_{QAB}(n, t, f, \textsc{ROUNDS}_{QA}(2f), C \cdot t)
    \end{align*}
    and round complexity:
    \begin{align*}
        \mathcal{O}(1) + \textsc{ROUNDS}_{QA}(2f) + \textsc{ROUNDS}_{QAB}(n, f) 
    \end{align*}
\end{theorem}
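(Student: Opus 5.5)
The construction runs the three subprotocols in sequence on a fixed quorum $Q$ of $k = 9t+1$ parties. Every party with input $\vin$ first runs the AQB. Each quorum party $q$ then takes the AQB output $v_q$ as its Quorum Agreement input; if $q$ decided $\ast$ or never decided, it uses $\ast$, and in a timed setting it adopts $\ast$ after the constant AQB round bound. Quorum parties then run the Quorum Agreement among themselves. When a quorum party decides $v$ with certificate $cert$, it feeds $(v, cert)$ into the QAB, whose \textbf{check} verifies the Quorum Agreement certificate. Every party outputs what the QAB outputs. If the final value is $\ast$, we need a deterministic fallback; the simplest option is to exclude it via validity and representativity, as explained below.

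\textbf{Fault count inside the quorum.} A quorum party is effectively faulty if it is Byzantine or if it failed to get an AQB output. By robust termination there are at most $f$ of the latter, so at most $2f \le 2t < k/3$ quorum parties are effectively faulty. We can treat the honest non-terminating AQB parties as crashed, or better, have them input $\ast$ and count them as honest. Either way the Quorum Agreement runs with actual fault count at most $2f$ and within resilience $\lfloor k/3 \rfloor$. This yields the terms $\textsc{BITS}_{QA}(t,2f)$ (since $k = \Theta(t)$) and $\textsc{ROUNDS}_{QA}(2f)$.

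\textbf{Properties.}
\begin{itemize}
\item \emph{Partial termination.} At least $k/4$ honest quorum parties decide and hand their input to the QAB. The QAB hypotheses hold: provability makes \textbf{check} sound, and Quorum Agreement agreement makes the inputs consistent. QAB termination then gives termination for all $n$ parties, and QAB validity together with Quorum Agreement agreement gives agreement.
\item \emph{Strong unanimity.} If all honest parties hold $v$, robust validity gives at least $k-2t$ honest quorum parties with input $v$. The remaining parties number at most $2t$, and their inputs are arbitrary. Standard validity needs all honest inputs to be $v$, so I would invoke the representativity side instead. If the Quorum Agreement is run with honest-but-divergent parties treated as faulty, the faulty count is at most $t + t + f$, which is still below $k/3$ since $3t < (9t+1)/3$. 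Then validity applies to the remaining honest set, which is uniformly $v$. This bookkeeping, reconciling ``all honest inputs equal'' with the quorum's view, is the step I expect to be most delicate. It also rules out a $\ast$ decision in the unanimous case.
\end{itemize}

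\textbf{The bound $h \le Ct$.} Suppose the decided value is $v \neq \ast$. Representativity says more than $k/3$ minus $2f$ quorum parties input $v$, which is more than $t$ honest quorum parties since $k/3 - 2t > t$. By the contrapositive of non-amplification, fewer than $C\cdot t$ honest parties lack $v$ as input. Counting Byzantine parties as well, at most $(C+1)t$ parties do not know $v$, so after adjusting $C$ we get $h \le Ct$. Substituting $r = \textsc{ROUNDS}_{QA}(2f)$ and this $h$ into $\textsc{BITS}_{QAB}$ and summing the three phases gives the stated bit and round bounds. The AQB contributes $\mathcal{O}(1)$ rounds, and resilience $t < n/9$ ensures $k \le n$ so that the quorum fits.
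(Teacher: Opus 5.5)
Your proposal is correct and follows essentially the same route as the paper: the same sequential composition on a fixed quorum of $9t+1$ parties, robust termination to get the $2f$ fault budget inside the quorum, and robust validity plus Quorum Agreement validity for strong unanimity, with divergent honest quorum members reinterpreted as faulty (this stays within the $\lfloor k/3\rfloor = 3t$ resilience, and your arithmetic here is cleaner than the paper's); then QAB validity and termination give agreement and termination, and representativity combined with the contrapositive of non-amplification gives $h = \mathcal{O}(t)$. Two small clean-ups: no fallback for $\ast$ is needed, since strong unanimity only constrains unanimous runs, where you already exclude $\ast$; and you should keep the reading in which honest quorum members without an AQB output simply do not participate, as the paper does, because a timeout-based switch to $\ast$ is not safe in asynchrony or before GST.
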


\begin{proof}
    Our Byzantine Agreement consists of using each of our subprotocol one after the other. To be more specific, the quorum is made of an arbitrary fixed set of $k = 9t + 1$ parties (for example the ones with the lowest ids). We then run the AQB protocol with the Byzantine Agreement's input, use its output as the Quorum agreement's input within the quorum and then use the Quorum's agreement output as the input of the QAB protocol. The output of the QAB protocol is the output of the Byzantine Agreement protocol.

    We first note that the conditions for the QAB protocol are satisfied: agreement from the Quorum agreement protocol guarantees that all parties will get the same input for QAB and provability shows that the certificate is correct. We now prove the different properties:
    \begin{itemize}
        \item Agreement: Guaranteed by QAB validity.
        \item Validity: If all honest parties have the same input $v$, then AQB's robust validity guarantees that at least $k - 2t \geq 4t + 1$ honest quorum parties will join quorum agreement with input $v$. Because Quorum Agreement has resiliency at least $\lfloor k / 3 \rfloor = 2t = k - (4t + 1)$, using the Quorum Agreement's validity, $v$ will be decided by the Quorum Agreement which implies that it is the only value that can be decided by the QAB and thus the Byzantine Agreement protocol because of the QAB validity property.
        \item Termination: Using AQB's robust termination, at least $k - 2f \geq k-2t$ honest quorum parties will decide a value and thus participate in the quorum agreement. We remark that $k - (k - 2t) = 2t < k/3$ which is at least the resiliency of the quorum agreement protocol. Therefore, using the Quorum Agreement partial termination, at least $k/4$ honest parties will eventually decide a value, which allows us to conclude using the QAB termination.
    \end{itemize}

    Regarding the bit/round complexity. it is the sum of the bit/round complexity of each protocol. The first thing to remark is that by definition of the quorum agreement round complexity, the number of rounds the QAB has to wait before at least $k/4$ quorum parties have an input is exactly the quorum agreement round complexity. Moreover, because of AQB's robust termination, up to $f$ honest quorum parties may not join quorum agreement (in addition to the $f$ byzantine). Therefore, $f$ must be replaced by $2f$ in quorum agreement by and round complexity.
    Finally, we need to explicit the parameter $h$ given in the theorem's definition as the number of parties which do not know the QAB input. Let $v$ be the value decided by the algorithm and assume $v \ne \ast$. Using AQB's non-amplification, there exists $C > 0$ such that if $C \cdot t$ honest parties do not have input $v$, then at most $t$ honest quorum parties get $v$ as output of the AQB. Assume by contradiction that $h > C \cdot t$, then with this non-amplification property, at most $t$ honest parties have input $v$ for the quorum agreement. We remark that $t + 2f \leq 3t \leq k/3$ so using the quorum agreement's representativity, $\ast$ will be decided, and using QAB's validity, $\ast$ will be decided by the byzantine agreement protocol. But this contradicts the fact that $v \ne \ast$ was decided by the protocol.
\end{proof}

We remark that the resulting protocol only has resiliency $t < n/9$. This non-optimal resiliency is caused by the AQB protocol. When $t \geq n/9$, we get similar results and optimal resiliency by just removing the AQB protocol from the overall algorithm (the AQB is used to decouple $n$ and $t$, which is not useful in this case).

\section{QAB in Partial Synchrony}
Let $k = \Theta(t)$ be the quorum size. Quorum Agreement guarantees that $k/4$ honest quorum nodes output a value $v^\ast$, together with a threshold-signed cryptographic accumulator of the erasure encoding of $v^\ast$. We denote this certificate by $proof$. The goal of the Quorum to All Broadcast (QAB) protocol is to disseminate the pair $\langle v^\ast, proof\rangle$ from the quorum to the rest of the system.

To keep bit complexity bounded, each quorum node $i$ locally erasure-encodes $v^\ast$ and is responsible for disseminating only the $i$-th share of the encoding.

Our main building block is a dissemination procedure called a \emph{QAB wave}. A QAB wave is parameterized by an estimate $\hat{f}$ of the number of faulty nodes. The procedure $\textsc{Wave}(\cdot,\hat{f})$ succeeds whenever $f \leq \hat{f} $, where $f$ is the (unknown) actual number of faults. Moreover, each $\hat{f}$-wave is lightweight: informally, it communicates a number of bits proportional to $\hat{f}$ and terminates within $O(\hat{f})$ rounds.

We then run a family of waves in parallel using geometrically increasing estimates:
we invoke $\textsc{Wave}(\cdot,\hat{f})$ for $\hat{f} \in \{1,2,4,\ldots,2^{\lceil \log_2 t\rceil}\}$,
wait until one invocation succeeds, and abort all remaining invocations.

Concretely, quorum node $i$ invokes $\textsc{Wave}(v,\hat{f})$ with input $v = \langle acc, s_i, proof\rangle$,
where $acc$ is the accumulator for $v^\ast$, $s_i$ is node $i$'s share of $v^\ast$,
and $proof$ is a certificate proving that $v^\ast$ was decided and its accumulator is $acc$.

\begin{dianabox}{\textsc{QABNode}}
\algoHead{QAB for Quorum Node $i$}
    \begin{algorithmic}
        \UponTrue{deciding $\langle v^\ast, proof\rangle$ in Quorum Agreement}
            \State $(acc, (s_j))\gets encode_k(v^\ast)$. 
            \State $v \gets \langle acc, s_i, proof\rangle$

            \Statex
        
            \InParallel{}
                \For{$j = 1, \ldots, \lceil \log_2 t\rceil$}
                    \State $\hat{f} \gets 2^j$
                    \State Initiate $\textsc{Wave}(v, \hat{f})$
                \EndFor
            \EndParallel
        \EndUpon

        \UponTrue{some wave is done}
            \State \textbf{Terminate}
        \EndUpon
     \end{algorithmic}
\end{dianabox}

Let us now describe what comprises a QAB Wave.

\subsection{QAB Wave}
For an $\hat{f}$-wave, we restrict the communication between nodes to a specific graph, which we call a \emph{communication graph}. If there is an edge between two nodes in this graph, we say that these two nodes are \emph{linked}. 

We construct the communication graph $G$ as follows. First, we assign parties to committees according to Theorem \ref{thm:nodes2committees} with parameters $n$ and $\hat{f}$. This way, we obtain a set $\mathcal{C}$ of $O(\hat{f}\log n)$ committees with each party being assigned to $O(\log n)$ committees. For each committee $C \in \mathcal{C}$, we choose (arbitrarily) a node in $C$ ``responsible'' for this committee and call such a node a \emph{relayer} for $C$, denoted $relayer(C)$. Hence, we have $O(\hat{f}\log n)$ relayers. In the communication graph, we then link every party to the relayer of each committee this party is assigned to. That is $\{(u, relayer(C))\mid u \in C, C \in \mathcal{C}\}  \subseteq E(G)$. We also link every relayer to every quorum node.

\begin{figure}[ht]
    \centering
    \includegraphics[width=0.5\linewidth]{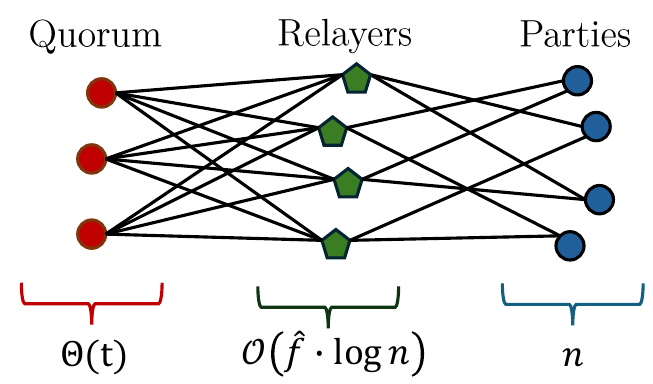}
    \caption{The communication of an $\hat{f}$-wave. Nodes are assigned three roles (non-exclusive):  \textit{quorum}, \textit{relayers} and \textit{parties},  with respective sizes of $3t + 1$, $O(\hat{f}\log n)$ and $n$. Each party is only linked to $O(\log n)$ relayers, and each relayer is linked to each quorum node.}
    \label{fig:communication graph}
\end{figure}

Given the communication graph, the high-level description of our QAB wave with parameter $\hat{f}$ is as follows.

\begin{enumerate}  
    
    \item \textbf{Quorum-Relayers: Disseminate.} Each quorum node, upon invoking $\textsc{Wave}(v, \hat{f})$, starts disseminating $v$ to relayers. This is done by sending $v$ to subgroups of relayers of size $O(\log n)$ in intervals of $\Delta$ time.

    \item \textbf{Quorum-Relayers: Ping.} After sending $v$ to all the relayers, a quorum node starts to ping them. This is done by sending "ping" to subgroups of relayers of size $O(\log n)$ in intervals of $\Delta$ time.

    \item \textbf{Relayers-Parties.} Upon receiving and verifying a share from a quorum node, the relayer stores it. Once it has $K/4$ distinct shares, it reconstructs the value $v^\ast$ decided in the Quorum Agreement. Then it sends $v^\ast$ with a proof $proof$ to the parties in its committee.

    \item \textbf{Parties-Relayers-Quorum.} Every committee has an aggregate signature setup. After receiving $(v^\ast, proof)$ from a relayer and verifying it, each party produces a partial signature for a message saying that it knows $v^\ast$ and sends it back to the relayer. After receiving a valid partial signature from every party linked to them, a relayer aggregates the signatures into a single one denoted $\Sigma$. Once a relayer receives a "ping" message from a quorum node, it responds with $\Sigma$.

    \item \textbf{Quorum-Undecided.} Each quorum node $u$ maintains a set $\mathrm{Acknowledged}$, initially empty, of parties of whom $u$ is aware that they know $v^\ast$. Once $u$ receives and verifies an aggregated signature for a committee $C$ from a $relayer(C)$, $u$ knows that all the parties in $C$ know $v^\ast$, and hence it updates $\mathrm{Acknowledged} = \mathrm{Acknowledged} \cup C$. Once $|\mathrm{Acknowledged}| \geq |n - c_b\cdot \hat{f}|$ where $c_b$ is a constant from Theorem \ref{thm:nodes2committees}, $u$ sends $v$ directly to the parties in $[n] \setminus \mathrm{Acknowledged}$ one by one in intervals of $\Delta$ time and concludes the wave done.
\end{enumerate}

As a remark, note that in the last step, when $u$ is sending $v$ to the parties in $[n] \setminus \mathrm{Acknowledged}$, this happens \emph{not} according to the communication graph. For $[n] \setminus \mathrm{Acknowledged}$ is determined dynamically. This is the only such place; all other messages are sent according to the graph.

The pseudocode and analysis for our QAB protocols are provided in \cref{section:QAB-appendix}.

\section{Moving from \texorpdfstring{$Ln$}{Ln} to \texorpdfstring{$Lt$}{Lt}}

The main idea for improving the $O(Ln)$ term to $O(Lt)$ in the bit complexity of our consensus algorithm is the following.
We ensure that quorum nodes propose a value to the Quorum Agreement (QA) protocol only if that value is highly prevalent in the system, i.e., held by most parties; otherwise, they propose $\ast$.
Consequently, if QA decides a non-$\ast$ value $v$, we can show that $v$ is possessed by most parties.
In this case, relayers need not transmit $v$ to those parties that already hold it, yielding bit complexity savings.
On the other hand, if QA decides $\ast$, this indicates that parties hold many different values; thus, $\ast$ is admissible by Strong Unanimity.
Moreover, broadcasting $\ast$ from the quorum to all parties is inexpensive.

The main technical ingredient of this section is an \emph{All-to-Quorum Broadcast (AQB)} procedure, which enables quorum nodes to determine whether there exists a prominent value in the system, or whether no such value exists.

\subsection{All to Quorum Broadcast}
For the AQB, we use a communication graph that resembles that of a QAB wave, however, we would require more from the committee assignment.  To this end, we use a disperser-based committee assignment similar to Theorem \ref{thm:nodes2committees} to obtain an additional property. We say that a committee is covered by a set of parties if at least one of its parties is in this set.
\begin{restatable}{theorem}{nodesToCommitteesPlusThm}
    \label{thm:nodes2committeesPlus}
    There exist $C > 0, D \geq 1, F > 9D$ such that, given $n$ parties and $t \leq n$, there is a way to assign parties to committees such that:
    \begin{enumerate}
        \item There are $4F \cdot (9t + 1) \cdot \log n$ committees.
        \item Every party is assigned to $D \cdot \log n$ committees.
        \item Any subset of parties $S \subset [n]$ of size at least $C \cdot t$ covers at least $90\%$ of all committees.
    \end{enumerate}
\end{restatable}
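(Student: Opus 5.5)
We obtain the assignment directly from Theorem~\ref{thm:bipartite disperser}, taking the left side to be the parties and the right side to be the committees. Let $m = 4F\cdot(9t+1)\cdot\log n$ and $\varepsilon = 1/10$. Theorem~\ref{thm:bipartite disperser} gives a $(k,\varepsilon)$-disperser $G = (L\sqcup R, E)$ with $|L| = n$, $|R| = m$, left degree $d = O_\varepsilon(\log n)$, and $k = \Omega_\varepsilon(m/\log n)$. We assign party $u$ to committee $r$ exactly when $(u,r)\in E$.

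Property~2 is immediate, with one caveat. Let $D$ be the constant hidden in $d$; if the degree is smaller for some $n$, we pad it with arbitrary extra edges, since adding edges preserves the disperser property. Property~1 holds by the choice of $m$.

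For property~3, any set $S$ of size at least $k$ satisfies $|N(S)|\ge 0.9\,m$, and this is exactly the statement that $S$ covers $90\%$ of the committees. Write $k = c_\varepsilon\, m/\log n$. Then
\[
k = c_\varepsilon\cdot 4F\cdot(9t+1) \le 40\, c_\varepsilon F\, t,
\]
so we may take $C = 40\,c_\varepsilon F$. A party set of size at least $C\cdot t$ has at least $k$ elements, and the disperser guarantee applies to it.

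The main obstacle is the order of quantifiers among the constants, together with the side condition $m \le n$. We choose $\varepsilon$ first, which fixes $D$ and $c_\varepsilon$. We then choose $F > 9D$, for example $F = 9D+1$, and only afterwards set $C$. Since $C$ depends on $F$ but $D$ does not, there is no circularity.

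For the requirement $m\le n$, we need $4F(9t+1)\log n\le n$. When this holds we are done. Otherwise $t = \Omega(n/\log n)$, and two options are available. We could invoke the disperser with $|R| > |L|$; the standard existence proof, a random $d$-left-regular graph with a union bound, works for any $m$ once $d = O(\log(m)/\varepsilon)$, and here $\log m = O(\log n)$. Alternatively, we could replicate committees. I would re-prove the existence claim briefly by the probabilistic method to cover this regime. Pick each party's $d$ neighbours uniformly at random. For a fixed set $S$ with $|S| = k$ and a fixed set $T$ of $0.1m$ committees, the probability that $N(S)$ avoids $T$ is at most $0.9^{dk}$. A union bound over at most $n^k\cdot 2^m$ pairs $(S,T)$ gives total probability below $1$ once $d\ge c\log n$ and $k\ge c' m/\log n$. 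This argument imposes no relation between $m$ and $n$.

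Finally, $t\le n$ guarantees $C t$ is meaningful. If $Ct > n$, property~3 holds vacuously.
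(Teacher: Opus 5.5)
Your argument is correct and is essentially the paper's proof: instantiate Theorem~\ref{thm:bipartite disperser} with $\varepsilon = 0.1$ and $m = 4F(9t+1)\log n$, take $F = 9D+1$, and set $C = 40\alpha F$ using $9t+1 \le 10t$ (which, as in the paper, tacitly requires $t \ge 1$). Your explicit handling of the side condition $m \le n$ via the direct union-bound argument goes beyond the paper, whose proof never addresses it (it even states the constraint as $n \le m$), although the condition fails when $t = \Omega(n/\log n)$.
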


The proof is deferred to the Appendix \ref{sec:appendix:proofs}. The communication graph of AQB is then defined as follows.

\paragraph*{Communication Graph of AQB}
First, we assign parties to committees according to Theorem \ref{thm:nodes2committeesPlus} with parameters $n$ and $t$. This way, we obtain a set $\mathcal{C}$ of $O(t\log n)$ committees with each party being assigned to $O(\log n)$ committees. For each committee $C \in \mathcal{C}$, we choose (arbitrarily) a node in $C$ ``responsible'' for this committee and call such a node a \emph{relayer} for $C$, denoted $relayer(C)$. Hence, we have $R := O(t\log n)$ relayers. In the communication graph, we then link every party to the relayer of each committee this party is assigned to. That is $\{(u, relayer(C))\mid u \in C, C \in \mathcal{C}\}$ is a subset of edges. Now, denote a quorum size to be $k = 9t+1$. We split relayers into $k$ batches, each of size $R/k = 4F \log n$. We then link every relayer of the $i$-th batch to the $i$-th quorum node.

\begin{figure}[ht]
    \centering
    \includegraphics[width=0.5\linewidth]{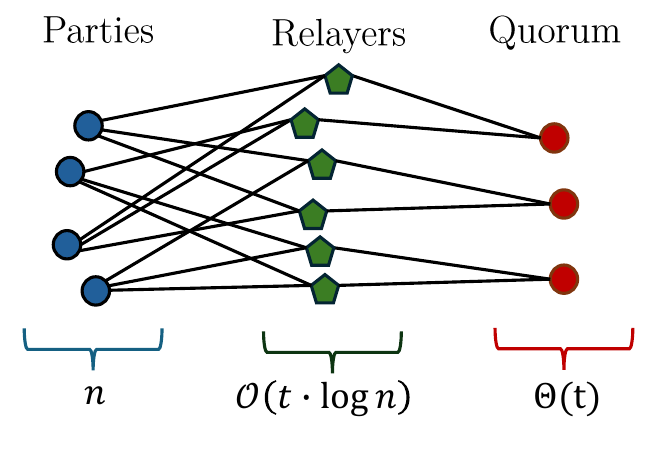}
    \caption{The communication of an AQB. Nodes are assigned three roles (non-exclusive):  \textit{quorum}, \textit{relayers} and \textit{parties},  with respective sizes of $k := O(t)$, $R := O(t\log n)$ and $n$. Each party is only linked to $O(\log n)$ relayers. Relayers are split into $k$ roughly equal batches of size $R/k = \Theta(\log n)$. All relayers from the $i$-th batch are linked to the $i$-th quorum node.}
    \label{fig:communication graph aqb}
\end{figure} 

We now describe the AQB protocol that uses this graph.

\paragraph*{AQB Protocol}
The protocol is two-hop:
\begin{enumerate}
    \item \textbf{Parties-Relayers:} Each party sends a hash of its input value to its relayer
    \item \textbf{Relayers-Quorum:} After receiving an hash from all its committee member, if the relayer always received the same hash, it forwards it to its designated target quorum party. Otherwise it sends $\ast$ instead.
    \item \textbf{Quorum Propose:} Upon receiving the same hash $h$ from $50\%$ of the relayers in its batch, and having as input a value $v$ with $hash(v) = h$, decide $v$ (and use input as input to the quorum agreement). If the quorum node received already messages from $75\%$ of its relayers and the previous event did not happen yet, then decide $\ast$ instead.
\end{enumerate}

We show in \cref{section:AQB-appendix} that this is an AQB protocol which supports an arbitrary number of byzantine parties, works deterministically in asynchrony and has constant latency (last $2$ rounds in synchrony).

\section{Quorum agreement}
\subsection{Synchrony}

Our quorum agreement protocol builds on the ones described by \cite{constantinescu2025few}, but with multiple improvements to handle the fact that the value decided is an $L$-bit value instead of a binary one.

\begin{itemize}
    \item \textbf{(Dis-)Agreeing on the hash}: If at least $t+1$ honest parties have the same hash as input, a threshold signature for it can be obtained, which proves that that value associated with this hash can be decided. If this is not the case, we need to get what we call an \textit{evidence of disagreement} which shows that not all honest parties have the same input and therefore any value can be decided.
    \item \textbf{Evidence of disagreement in Partial Synchrony}: In partial synchrony, because $t < n/3$, we can reuse the Big-Bucket approach from \cite{rambaud2022bucket}. It can guarantee a threshold signature for a value or an evidence of disagreement will be obtained as long we are after GST and the leader is honest.
    \item \textbf{Evidence of disagreement in Synchrony}: In synchrony, because of the higher resiliency $t < n/2$, the previous Big-Bucket approach alone may fail to return a valid proof. Instead, we combine the Big-Bucket approach with the idea from \cite{constantinescu2025few}, where parties drop their input if they realize there is a disagreement among honest parties. To be more specific, we use the latter idea with hashes which appear more than $n/4$ times and the Big-Bucket approach with other values. We show that if $f < n/4$, then after $f+1$ views with honest leader, one of the leader will be able to retrieve either a proof of agreement of a proof of disagreement.
    \item \textbf{Distributing shares instead of the whole value}: If we were to reuse the approach from \cite{constantinescu2025few} while replacing the binary decision by the whole $L$-bit value, the resulting bit complexity would be $\mathcal{O}(n \cdot f \cdot L)$. Instead, we use error-correction and cut the input value into $n$ parts such that the original value can be recovered using any subset of size at least $n/4$. Carefully sharing only some of these shares instead of the full value allows us to reduce the bit complexity to $\mathcal{O}(n\cdot (f \cdot \kappa + L))$.
    \item \textbf{Improved Complaining}: We note that the complaining approach from \cite{constantinescu2025few}, which was used to achieve $\mathcal{O}(f)$ round complexity, does not work with $L$-bit inputs as it would incur a $\mathcal{O}(n \cdot f \cdot L)$ bit complexity. Instead, we use the expander approach from the DARE to Agree paper \cite{civit2024dare}. To be more specific, once a party gets the full value, it starts spreading it in the expander network. Meanwhile, if a party receives a proof for the hash, it waits for $\mathcal{O}(\log n)$ rounds, then asks every party for their shares. This approach guarantees that if $f = o(n)$, then every honest party decides within $\mathcal{O}(f + \log n)$ rounds.
\end{itemize}

\subsection{Retrieval protocols}

In their work \cite{constantinescu2025few}, Constantinescu and al. abstracted away getting a value that satisfied strong unanimity using what they called a retrieval protocol. To be more specific, during a view, a leader could run a retrieval protocol to obtain a value and proof that satisfies some predicate, similar to external validity. A key idea is that this retrieval protocol will only have to be run $\mathcal{O}(f)$ times, meaning it is acceptable for it to have $\mathcal{O}(n)$ message complexity for binary agreement.

\begin{definition}[Adapted from (Definition 7.1)\cite{constantinescu2025few}]
    A retrieval protocol consists of two subprotocols \textsc{RetrieveLeader} and \textsc{RetrieveParty}($\vin$). \textsc{RetrieveLeader} is called by a leader while \textsc{RetrieveParty} is called by every honest party. We assume parties know who the leader is and if the leader is honest, then all honest parties call \textsc{RetrieveParty} at most one round after the leader called \textsc{RetrieveLeader}. A call to \textsc{RetrieveLeader} along with its associated \textsc{RetrieveParty} is called an instance. We say that $(\textsc{RetrieveLeader},\textsc{RetrieveParty}(\vin))$ is a retrieval procedure if it satisfies the following properties
    \begin{itemize}
        \item \textsc{RetrieveLeader} either returns $\bot$ or a $proof$.
        \item If called after GST by $f+1$ different honest leaders, this protocol will at least once \textbf{not} return $\bot$.
    \end{itemize}
\end{definition}

\subsubsection{Partially Synchronous retrieval protocol} \label{sec:gst-retrieval}

In the partially synchronous setting, we can directly use the big bucket approach from Rambaud et al. \cite{rambaud2022bucket}. To be more specific, if we cannot get a $(t+1)$-threshold signature on an accumulator, we partition the accumulator range $[\![ 0, 2^\kappa - 1 ]\!]$ into $\mathcal{O}(1)$ intervals such that the input of at least $t+1$ honest party is outside any given interval.

\begin{lemma}\label{lemma:input-inter-partition}
    Given a constant $X >= 2$ and a multiset $S$ of at most $n$ values in $[\![ 0, 2^\kappa - 1 ]\!]$ such that every value appears at most $n/X$ times, it is possible to partition the input range into a set $\mathcal{I}$ of $\mathcal{O}(1)$ intervals such that each interval contains at most $n/X$ elements in $S$.
\end{lemma}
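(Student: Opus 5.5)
A greedy left-to-right sweep gives the partition. Sort the multiset $S$, so it can be viewed as a nondecreasing sequence $s_1 \le s_2 \le \dots \le s_m$ with $m \le n$. Build intervals from left to right. Start the first interval at $0$ and extend it as far as possible while it contains at most $n/X$ elements of $S$, counted with multiplicity. Concretely, if the current interval starts at position $a$, let $b$ be the largest value such that $[\![ a, b ]\!]$ contains at most $n/X$ elements of $S$; close the interval at $b$ and start the next one at $b+1$. Continue until the range $[\![ 0, 2^\kappa - 1 ]\!]$ is exhausted. By construction, every interval contains at most $n/X$ elements, and the intervals partition the range.

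The substantive claim is that only $\mathcal{O}(1)$ intervals are produced. Take any interval $I = [\![ a, b ]\!]$ that is not the last one. Maximality of $b$ means the value $b+1$ has positive multiplicity $c_{b+1}$ in $S$, and $|I \cap S| + c_{b+1} > n/X$. Charge each such interval to the block consisting of $I$ together with the point $b+1$, which holds more than $n/X$ elements. These blocks are not disjoint: the point $b+1$ is the first point of the next interval, so each point of $S$ lies in at most two blocks. Summing over all non-final intervals gives
\[
\#\{\text{non-final intervals}\}\cdot \frac{n}{X} < 2|S| \le 2n,
\]
so there are fewer than $2X + 1$ intervals in total. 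Since $X$ is constant, this is $\mathcal{O}(1)$.

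The hypothesis that every value appears at most $n/X$ times is needed to make the construction well defined. A singleton interval $[\![ a, a ]\!]$ always satisfies the bound, so each greedy step makes progress and $b \ge a$. Without the hypothesis, a single heavy value could never be placed in any admissible interval.

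The main point requiring care is this counting step: a pure element count fails because an interval can be closed "early" by a heavy value sitting just beyond it. Pairing each interval with the next point resolves this. An alternative that avoids double counting is to group consecutive intervals in pairs, each pair holding more than $n/X$ elements; this gives the same $\mathcal{O}(X)$ bound. If the later application needs each interval to exclude at least $t+1$ honest inputs, one takes $X$ large enough (for example $X \ge 4$), so that each interval contains at most $n/4$ of the inputs and the complement remains large.
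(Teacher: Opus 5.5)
Your proof is correct and takes the paper's route: the same greedy left-to-right construction, with the number of intervals bounded by the maximality of each greedy step. The only difference is in the counting. The paper argues that of any two consecutive intervals one must hold at least $n/(2X)$ elements (otherwise they would have been merged), which gives $|\mathcal{I}| \le 4X+1$; your charging of each non-final interval together with the first point of the next one gives the slightly sharper bound of fewer than $2X+1$ intervals.
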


\begin{proof}
    We partition using a greedy approach. To do so, we consider an interval starting at $0$ and give its the maximal range possible while containing at most $n/X$ elements from $S$. We then add the resulting interval $I = [l,r]$ to $\mathcal{I}$ then repeat this approach again starting from $r+1$ until the whole input range has been partitioned.

    By construction, each interval contains at most $n/X$ elements from $S$. Moreover, when considering two consecutive intervals in $\mathcal{I}$, we claim that one of them contains at least $n/(2X)$ elements from $S$. Indeed, if that were not the case, then the greedy approach would have merged these two intervals. Because there can be at most $2X$ intervals containing at least $n/(2X)$ elements, we conclude that $|\mathcal{I}| \leq 4X + 1 = \mathcal{O}(1)$.
\end{proof}

We now provide of description of the retrieval protocol for the partially synchronous setting. Basically, if the leader is able to get a $(t+1)$-threshold signature for a given accumulator (an evidence of agreement), it then reconstructs the value associated with the accumulator from the shares it received. In the last step, it broadcasts the full value to every party and gets a $n-t$ threshold signature for parties acknowledging it received the value.
If it could not get an evidence of agreement, it divides the input range using \cref{lemma:input-inter-partition} and asks parties to provide a threshold signature for intervals which do \textbf{not} contain their input. We will show that assuming the protocol is run after GST, this will always succeed.

Pseudocode and analysis of our partially synchronous retrieval protocol are provided in \cref{section:gst-retrieval-appendix}.

\subsubsection{Fully synchronous retrieval protocol}

In the fully synchronous setting, because of the higher resilience $t < n/2$, the Big-Bucket approach on its own will fail. For example, if $n/3$ parties have input $0$ and $n/3$ parties have input $1$ while the remaining one crash, it will be impossible to partition the input range into intervals such that for each interval, there are $t+1$ parties outside this interval.

To work around this issue, we combine big buckets with the approach from \cite{constantinescu2025few}. To explain this latter approach, if the leader is not able to get a threshold signature for an accumulator, it means honest parties do not share the same value (even though the leader is not able to prove it). In this case the leader drops its input (he sets it to $\bot$). The idea is that these small changes (dropping your input) accumulate and allow for a proof to be made within $\mathcal{O}(f)$ rounds. The issue is that the approach for \cite{constantinescu2025few} was given for binary agreement and in general only works for binary decisions. By coupling these two previous approaches, we obtain a retrieval protocol for the synchronous setting. To be more specific, after parties send their hash to the leader, it does the following:
\begin{itemize}
    \item For any accumulator $acc$ that appears at least $n/4$ times, it uses the idea from \cite{constantinescu2025few} separately on each hash.
    \item For the remaining accumulators, which appear less than $n/4$ times, we use the big bucket approach on them.
\end{itemize}

We now describe our proof of disagreement. 
\begin{definition}
    For the synchronous setting, an evidence of disagreement consists of a set $A \subset 2^\kappa$ such that $|A| \leq 4$, a partition of the hash range into intervals $\mathcal{I}$ such that $|\mathcal{I}| \leq 9$ and $(t+1)$-threshold signatures for every element such that:
    \begin{itemize}
        \item For $a \in A$, an honest party provide a partial signature for $a$ if its input's accumulator is not $a$.
        \item For $I \in \mathcal{I}$, an honest party provides a partial signature for $I$ if its input's accumulator is not in $I\setminus S$.
    \end{itemize}
\end{definition}
We remark that if an honest party sets its input to $\bot$, it will provide partial signatures for any element requested.

Pseudocode and Analysis of our synchronous retrieval protocol are given in \cref{section:sync-retrieval-appendix}.

\subsection{Partially Synchronous Algorithm}

We can now describe our algorithm achieving partial termination in the partially synchronous setting. To do so, we use the view-based byzantine agreement from Constantinescu et al. \cite[algorithm \textsc{ViewByzantineAgreement}]{constantinescu2025few} and replace their retrieval protocol with our own defined in \cref{sec:gst-retrieval}. At the end, if the party has a disagreement evidence, it decides $\ast$. If it has an agreement proof for an accumulator $acc$ and knows the value associated with $acc$ (it received it from an $INFORM$ message in \textsc{RetrievalPartyGST}), it decides it. Otherwise it does not decide anything.

\begin{dianabox}{\textsc{QuorumGST}}
\algoHead{Adaptive protocol for partial agreement with a party $p$ and input $\vin$}
\begin{algorithmic}
    \State $(evidence, proof) \gets \textsc{ViewByzantineAgreement}(\vin)$
    \If{$evidence$ is a disagreement evidence}
        \State Decide $(\ast, proof)$
    \Else
        \State $evidence$ is an agreement evidence for an accumulator $acc$
        \If{$p$ received an input $v$ with accumulator $acc$ from an $INFORM$ message}
            \State Decide $(v, proof)$
        \Else
            \State Do not decide anything (The Quorum agreement never returns)
        \EndIf
    \EndIf
\end{algorithmic}
\end{dianabox}

In \cref{section:gst-quorum-appendix}, we prove that \textsc{QuorumGST} achieves agreement, strong unanimity, Provability and partial termination in $\mathcal{O}(f)$ rounds with $\mathcal{O}(n \cdot (L + f \cdot \kappa))$ bit complexity.

\subsection{Synchronous Algorithm}

Due to the higher resiliency $t < n/2$, our quorum protocol for the synchronous setting is a lot more complex compared to the partially synchronous setting. In particular, we need to modify the \textsc{ViewByzantineAgreement} protocol from \cite{constantinescu2025few} to ensure an optimal bit complexity. We now give the main ideas and main points describing how we modified our protocol to accommodate for $L$-bit inputs:

\textbf{Use of the retrieval protocol}
If the retrieval protocol returns a non-$\bot$ value to an honest leader, the leader may still not be able to get a commit proof if $f$ is close to $t$ and the byzantine parties do not cooperate. Nevertheless, it will still send its proof to the subsequent leaders until a commit proof is created. This ensures that at most one honest leader will have its retrieval protocol return a non-$\bot$ value and is necessary to bound the bit complexity.

\textbf{Complaining is not enough}
In the original paper, a byzantine leader could get a commit proof and only distribute it to the first half of the honest parties. To work around this issue, Constantinescu et al. add a complain mechanism where parties can complain and get the value back if the leader is honest and already has the commit proof. In our case, if the leader were to send the full value to parties which complain, it would incur a $\mathcal{O}(n \cdot f \cdot L)$ bit complexity. If the leader were to send only a share, even with $f = 1$ a party might take $\Omega(n)$ views to reconstruct the decided value. We therefore use a different approach to share the value using an \emph{expander}.

\textbf{Expander-based dispersion}
We use the same approach as in the paper DARE to Agree \cite{civit2024dare}: once receiving a value as well as a commit proof for it, a party will share it using an expander. The expander used has constant degree and ensures that if $f < \alpha \cdot n$ and at least $\beta \cdot n$ honest parties start sharing the value, with $\alpha, \beta > 0$ constants, then every but $\mathcal{O}(f)$ parties will learn it within $\mathcal{O}(\log n)$ rounds.

\textbf{Dispersal proof}
We note that even with the help of an expander, $\mathcal{O}(f)$ honest parties may not get the value. In their paper, Civit et Al. make parties wait for $\mathcal{O}(n + \log n)$ rounds and if they still did not get the value, they ask for it. We note that this approach does not work in an adaptive setting, as parties would have to wait $\mathcal{O}(f + \log n)$ rounds, but they do not know $f$. Instead, on top of the existing proofs created in the view protocol (key then lock then commit), we add a fourth proof called a dispersion proof. The main idea is that if a party receives a dispersion proof, then assuming $f < n/8$, it knows than $\Omega(n)$ parties have already started dispersing their value and thus after an additional $\mathcal{O}(\log n)$ rounds, only $\mathcal{O}(f)$ parties won't have the value.

\textbf{Improved fallback}
As in the protocol for the binary variant, if $f > n/4$, then the views based agreement protocol may not achieve termination. We have to detect when this happens and run a fallback protocol when this happens. There are two differences: first we must use Nayak et al. \cite{nayak2020extension} protocol for MVBA as a fallback. Second, we must be careful about how parties distribute shares of the decided value to ensure the bit complexity does not exceed $\mathcal{O}(n \cdot (L + f \cdot \kappa))$.

As stated above, our protocol relies on expander graph, more precisely the one described by Upfal \cite{upfal1992expander}:

\begin{theorem}[\cite{upfal1992expander}]\label{theo:sync-expander}
    There exists $D \geq 1, c \geq 1$ such that for any $n$, there exists a $D$-regular graph $G$ on $n$ vertices, such that for any subset of vertices $T$ with $|T| \leq n/72$, $G \setminus T$ has a connected component of size at least $n - 6 |T|$ and diameter at most $c \cdot \log n$. Moreover, $G$ can be computed deterministically in polynomial time.
\end{theorem}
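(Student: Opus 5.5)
We would prove \cref{theo:sync-expander} from a fixed explicit family of constant-degree spectral expanders, combined with a pruning argument that removes the faulty set $T$ and everything it cuts off. The constants $D$, $c$, $1/72$ and $6$ come out of the expansion parameter of that family.

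\textbf{Step 1 (choice of $G$).} Take an explicit, deterministically constructible, $D$-regular family (e.g.\ Margulis--Gabber--Galil, LPS Ramanujan graphs, or a zig-zag construction, adjusted to every $n$ by standard padding) whose second eigenvalue satisfies $\lambda \le \lambda_0 D$ for a constant $\lambda_0<1$. Cheeger's inequality, or the expander mixing lemma, then gives edge expansion: every $S\subseteq V$ with $|S|\le n/2$ has $|E(S,V\setminus S)|\ge h|S|$ with $h=\Omega(D)$. We fix $D$ large enough that $h \ge \gamma D$ for a constant $\gamma$ that makes the constants in Step 2 work (Ramanujan graphs give $\gamma$ close to $1/2$).

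\textbf{Step 2 (pruning).} Starting from $H_0 = V\setminus T$, repeatedly remove any set $S\subseteq H_i$ with $|S|\le |H_i|/2$ that has fewer than $(h/2)|S|$ edges into $H_i\setminus S$. Let $P$ be the union of the removed sets and $H$ the final set.
\begin{itemize}
\item \emph{Size of $P$.} Suppose $|P|\le n/2$; this is maintained inductively and is where $|T|\le n/72$ is used. Every edge leaving $P$ goes either into $T$ or into $H$. When a set was pruned, fewer than $(h/2)|S|$ of its edges went into the part that remained. Hence the edges leaving $P$ toward $H$ number at most $(h/2)|P|$, while those toward $T$ number at most $D|T|$.
\item Expansion of $P$ gives $h|P| \le (h/2)|P| + D|T|$, so $|P|\le (2D/h)|T|$. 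With $h\ge 2D/5$ this yields $|P|\le 5|T|$, and hence $|H|\ge n-6|T|$.
\item To justify $|P|\le n/2$ during the process, run this same inequality at every intermediate stage, which gives $|P|\le 5n/72$ throughout. One subtlety is that a pruned $S$ is at most half of the current $H_i$, not of $V$. We handle it by noting $|P\cup S| \le 5n/72 + n/2 \le$ a quantity to which the mixing lemma still applies; if needed, apply the bound to the complement.
\end{itemize}

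\textbf{Step 3 (diameter).} By construction, every $S\subseteq H$ with $|S|\le |H|/2$ has at least $(h/2)|S|$ edges into $H\setminus S$, and hence at least $(h/2D)|S|$ vertex neighbours in $H$. BFS balls inside $H$ therefore grow by a factor $1+h/(2D)$ per step until they exceed $|H|/2$, which takes $O(\log n)$ steps. Two balls of size greater than $|H|/2$ intersect, so $H$ is connected with diameter at most $c\log n$. Since $|H| > n/2$, $H$ is in particular the large component claimed.

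\textbf{Main obstacle.} The main difficulty is the constant bookkeeping in Step 2: $h/D$ must be large enough relative to $5$ and $1/72$, and the pruning invariant must be made rigorous when a pruned set is large relative to the current $H_i$. We would first try a Ramanujan family with $h\approx D/2 - O(\sqrt D)$ and $D$ large. If the constants do not close, we would instead cite Upfal's original argument directly, since the statement is exactly his theorem.
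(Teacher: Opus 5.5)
Your argument is correct in substance, but it takes a different route from the paper. The paper does not prove this statement: it imports it from \cite{upfal1992expander}, whose construction covers only $n=p+1$, and appeals to \cite{cohen2016ramanujan} for general $n$. You instead give a self-contained proof in three steps:
\begin{itemize}
\item spectral edge expansion $h\ge (D-\lambda_2)/2$;
\item iterative pruning of sets with fewer than $(h/2)|S|$ edges into the survivors;
\item BFS growth inside the survivor set $H$.
\end{itemize}
This makes the constants transparent. In your argument the factor $6$ requires $h\ge 2D/5$, i.e.\ $\lambda_2\le D/5$, which (near-)Ramanujan graphs satisfy once $D$ is around $100$. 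The bound $1/72$ is used only to keep the pruned set below $n/2$, and it has slack. The citation buys brevity; your route actually verifies the numbers. It does not remove the external input the paper relies on, though, and three steps need tightening.

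\textbf{The construction for every $n$.} ``Adjusted to every $n$ by standard padding'' is not justified. The families you list either have a spectral ratio far from $1/5$ (the known bound for Margulis--Gabber--Galil is $\lambda_2\le 5\sqrt2\approx 0.88D$) or exist only for special sizes (LPS, zig-zag). Deleting or merging vertices to reach exactly $n$, while keeping exact $D$-regularity and a ratio this close to optimal (no $D$-regular graph has $h$ much above $D/2$), is not a standard operation. You need an explicit (near-)Ramanujan construction on every size. That is precisely the role of \cite{cohen2016ramanujan} in the paper; Alon's explicit near-Ramanujan graphs of every degree and size would also do. Note too that ``every $n$'' must be read as every sufficiently large $n$, with $D$ even.

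\textbf{The pruning invariant.} The step you flagged as the main obstacle closes in two lines, so no fallback to Upfal is needed. By induction $|P_i|\le 5|T|$, hence $|P_{i+1}|\le |P_i|+|H_i|/2=(n-|T|+|P_i|)/2\le n/2+2|T|$. Suppose $|P_{i+1}|>n/2$. Expansion of the smaller side $T\cup H_{i+1}$ gives $h(n/2-2|T|)<(h/2)(n/2+2|T|)+D|T|$. This means $hn/4<(3h+D)|T|\le (3h+D)n/72$, which forces $h<D/15$, a contradiction. So $|P_{i+1}|\le n/2$, and your main inequality then gives $|P_{i+1}|\le 5|T|$.

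\textbf{The last sentence of Step 3.} It is imprecise. $H$ is a connected set with $G[H]$ of diameter $O(\log n)$, but it need not be a maximal component of $G\setminus T$. The component containing $H$ may also contain pruned vertices whose distances you do not control. A connected set of small diameter is exactly the form used in \cref{lemma:sync-view-honest-dispersal} (a set $X$ with $G[X]$ connected of diameter at most $c\log n$), so it suffices, but state it that way.
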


We remark that the construction of $G$ in \cite{upfal1992expander} is only given for $n = p + 1$, where $p$ is a prime number. Using a more recent result $\cite{cohen2016ramanujan}$ allows such a graph to be built for any $n$.

In this protocol, we use a graph expander $G$ on the $n$ parties with constant $D, c \geq 1$. 

In \cref{section:sync-quorum-appendix}, we provide the pseudocode for our protocol and prove that it satisfies agreement, strong unanimity, Provability, partial termination in $\mathcal{O}(f)$ rounds, termination in $\mathcal{O}(f + \log n)$ rounds and has $\mathcal{O}(n \cdot (L + f\cdot  \kappa))$ bit complexity.

\subsection{Asynchrony} \label{subsec:AsyncQAMain}

In asynchrony, as was proven by Constantinescu et al. \cite{constantinescu2025few}, byzantine algorithms cannot be adaptive in their communication complexity. As such, existing algorithms are already near-optimal when $t = \Theta(n)$. Our main contribution is to extend these results to $t = o(n)$, which requires a quorum agreement protocol. We explain how to obtain a quorum agreement protocol from an existing MVBA protocol. The protocol is described in \cref{app:asyncQA}.

\begin{theorem}
    There exists a Quorum Agreement protocol in asynchrony with resiliency $t < n/3$, expected constant latency and expected bit complexity $\mathcal{O}(n \cdot L + n^2 \cdot \kappa)$.
\end{theorem}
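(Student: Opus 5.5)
The plan is to build the quorum agreement protocol by a black-box reduction to an existing asynchronous MVBA protocol with external validity, such as Dumbo-MVBA \cite{yuhan2020dumbo}. That protocol has expected bit complexity $\mathcal{O}(nL + n^2\kappa)$ and expected constant latency at resiliency $t<n/3$. Running it directly on the raw inputs would give agreement and provability, but not validity or representativity. So I will first run a cheap pre-round that turns each party's input into an \emph{externally valid proposal}, and only then invoke MVBA on these proposals.

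\textbf{Pre-round.} Each party erasure-codes its input with $encode_n$ and sends a partial $(t+1)$-threshold signature on the accumulator $acc$ of its input to every party. This costs $\mathcal{O}(n^2\kappa)$ bits. A party that collects $t+1$ matching partials on some $acc$ obtains an \emph{agreement evidence} for $acc$.

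To deal with the case where no value is shared by many honest parties, I would run a Big-Bucket round in the style of \cref{lemma:input-inter-partition}. Each party waits for $n-t$ accumulators. It then builds a partition, either locally or, more simply, using all $n-t$ accumulators it has seen, following Rambaud et al. Finally it requests $(t+1)$-threshold signatures on each interval from parties whose accumulator lies outside that interval. The resulting signatures form a \emph{disagreement evidence}, which justifies proposing $\ast$.

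The proposal fed to MVBA is $(acc, \sigma_{acc})$ or $(\ast, evidence)$. The external predicate checks the attached threshold signature(s). Only the accumulator is agreed upon inside MVBA, so each MVBA input has size $\mathcal{O}(\kappa)$.

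\textbf{Recovering the value.} After MVBA outputs $acc$ with $\sigma_{acc}$, every party that holds a share for $acc$ forwards it to all parties. This costs $\mathcal{O}(n\cdot(L/n+\kappa)\cdot n)=\mathcal{O}(nL+n^2\kappa)$. A $(t+1)$-signature on $acc$ means at least one honest party holds the value. That is not enough for reconstruction, so I would strengthen the evidence to a threshold of $2t+1$ partials. This guarantees $t+1\ge n/4$-ish honest holders. Parties then decode with $decode_n$ from $n/4$ shares, each verified against $acc$.

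\textbf{Properties.} The certificate is the MVBA output plus its evidence, which gives provability. Agreement comes from MVBA. For representativity, a non-$\ast$ decision needs a threshold of signers on $acc$, which is more than $n/3$ signers, hence at least $n/3-f$ honest holders. Validity: if all honest parties share $v$, no disagreement evidence can form, because every interval containing $acc(v)$ has at most $f\le t$ signers. So only $acc(v)$ is externally valid.

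\textbf{Main obstacle.} The hard step is liveness of the pre-round. Every honest party must eventually obtain either agreement evidence or disagreement evidence, even though different parties see different sets of $n-t$ accumulators. I would resolve this by raising the agreement threshold to $t+1$ honest parties sharing the value, and otherwise showing that the greedy partition built from any $n-t$ reports leaves at least $t+1$ honest reporters outside each interval. This should be possible because $n\ge 3t+1$ and each bucket is capped at $t$ values. I would handle the resulting tension between the two thresholds by choosing the agreement threshold carefully, using whichever bound \cref{lemma:input-inter-partition} admits with $X=3$.
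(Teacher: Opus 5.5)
\textbf{Gap 1: the pre-round is not live in asynchrony.} You correctly flag the pre-round as the main obstacle, but neither fix you propose makes it live.

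With the $2t+1$ agreement threshold you adopt for reconstruction, take $n=3t+1$ with $f=t$ silent Byzantine parties, $t+1$ honest parties holding $a$, and $t$ holding $b$. At most $t+1$ partials on $acc(a)$ ever exist. Every interval containing $acc(a)$ has at most $t$ parties outside it willing to sign. So no party ever obtains either kind of evidence, and MVBA never receives an externally valid input.

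With threshold $t+1$, your claim that a greedy partition of any $n-t$ reports leaves $t+1$ honest parties outside each interval is false. Let the heard reports be the $t$ Byzantine ones (for values other than $a$) plus $t+1$ honest ones with distinct values. Let the $t$ unheard honest parties share an input $a$ whose accumulator lies in the same interval $I$ as one heard honest accumulator. Then only $t$ honest parties lie outside $I$, the Byzantine parties refuse to sign, and $a$ never gathers $t+1$ partials, so the party waits forever. Big-Bucket works in the paper's partially synchronous retrieval only because an honest leader after GST hears \emph{every} honest report, which cannot be waited for in asynchrony.

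Both problems can be fixed. A single honest holder suffices for reconstruction if it sends each $p_j$ the share $s_j$ and every $p_j$ echoes $s_j$ to all parties. This is the trick in the paper's synchronous fallback, and it still costs $\mathcal{O}(nL+n^2\kappa)$ bits, so threshold $t+1$ is enough. A live disagreement evidence is simply the $n-t$ individually verifiable signed reports in which no accumulator occurs $t+1$ times. If all honest parties agreed, any $n-t$ reports would contain at least $n-2t\geq t+1$ honest copies.

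\textbf{Gap 2: provability.} The MVBA output together with its external-validity evidence certifies admissibility, not decision. Evidences for different values coexist whenever honest inputs are split: with $f=0$, $t+1$ honest parties on $a$ and $t+1$ on $b$ yield agreement evidence for both. A Byzantine party can therefore exhibit a valid-looking certificate for a value that was not decided, violating the uniqueness the QAB check requires. You need a final round in which parties $(t+1)$-threshold sign the value they decided.

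\textbf{How the paper proceeds.} The paper ends its protocol with exactly that signing round, and its route avoids evidences altogether. It runs $\Pi_{MVBA}$ on the raw inputs, and each party broadcasts whether the output equals its own input. A party sets $b=1$ exactly when it sees $t+1$ OKs among $n-t$ such messages, and a binary BA on $b$ decides whether to replace the output by $\ast$. Representativity then follows from the binary BA's strong unanimity. Validity there rests on strong unanimity of $\Pi_{MVBA}$, which the paper takes as given. Your externally-valid-proposal wrapper, once repaired as above, would instead derive it from external validity alone.
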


\section{Conclusion}
We analyze how much communication and how many rounds are required for deterministic multi-value Byzantine agreement under the standard spectrum of timing assumptions: fully synchronous, partially synchronous, and fully asynchronous settings.

For the large-network setting where the number of participants is much larger than the maximum number of Byzantine faults, we develop adaptive protocols that enjoy almost optimal message complexity while matching the asymptotically optimal round complexity.

A main technical tool is a deterministic method for selecting committees built from bipartite disperser constructions. This yields a scalable way to spread information efficiently while staying resilient to adaptive corruption strategies.

Our protocols for the partially synchronous model currently depend on ideal clock alignment. A natural next step is to clarify how varying degrees of clock synchronization influence what properties can be achieved. In particular, can adaptive agreement be obtained without assuming any bound on clock skew?

\bibliographystyle{plainurl}

\appendix

\section{Quorum to All Broadcast}\label{section:QAB-appendix}

\subsection{Pseudocode}

\begin{dianabox}{\textsc{QuorumNodeWave}}
\algoHead{$\hat{f}$-Wave for Quorum Node}
    \begin{algorithmic}
        \UponTrue{Initiated $\textsc{Wave}(v, \hat{f})$}
            \State $R_{\hat f} \gets$ relayers of $\hat{f}$-wave 
            \State $\mathrm{relayer\_batches}\gets$ split $R_{\hat f}$  into $\hat f$ batches of size $O(\log n)$
            \For{$\mathrm{relayer\_batch} \in \mathrm{relayer\_batches}$}
                \State Send $v$ to relayers in $\mathrm{relayer\_batch}$
                \State Wait $\Delta$
            \EndFor
            \State

            \While{true}
                \For{$\mathrm{relayer\_batch} \in \mathrm{relayer\_batches}$}
                    \State $\tau \gets$ current time stamp
                    \State Send $\langle \mathrm{''ping''}, \tau\rangle$  to relayers in $\mathrm{relayer\_batch}$
                    \State Wait $\Delta$
            \EndFor
            \EndWhile
        \EndUpon

        \Statex

        \UponTrue{receive valid $\Sigma$ from relayer of committee $C$}
            \State $\mathrm{Acknowledged} \gets \mathrm{Acknowledged} \cup C$ 
        \EndUpon

        \Statex

        \UponTrue{$|\mathrm{Acknowledged}| \geq n - c_b\cdot \hat{f}$}
            \For{$p \in [n] \setminus \mathrm{Acknowledged}$}
                \State Send $v$ to $p$
                \State Wait $\Delta$
            \EndFor
            \State \textbf{Wave done}
        \EndUpon
    \end{algorithmic}
\end{dianabox}

\begin{dianabox}{\textsc{RelayerWave}}
    \algoHead{$\hat{f}$-Wave for Relayer of Committee $C$}
    \begin{algorithmic}[1]
        \UponTrue{receive valid $\langle acc, \mathrm{share}, proof \rangle$ from quorum node $i$}
            \State Store $proof$
            \State $\mathrm{shares} \gets \mathrm{shares} \cup \{\mathrm{share}\}$
        \EndUpon

        \Statex

        \UponTrue{$|\mathrm{shares}| \geq K/4$}
            \State $v^\ast \gets decode(\mathrm{shares})$
            \State Send $\langle v^\ast, proof \rangle$ to parties in $C$
        \EndUpon

        \Statex

        \UponTrue{receive valid $\sigma$ from a party in $C$}
            \State $\mathrm{sigs} \gets \mathrm{sigs} \cup \{\sigma\}$
        \EndUpon

        \Statex

        \UponTrue{Received a valid signature from every party in $C$}
            \State $\Sigma \gets$ aggregate signature of $\mathrm{sigs}$
        \EndUpon

        \Statex

        \UponSimple{$\Sigma \neq \bot$ and receive $\langle \mathrm{''ping''}, \tau\rangle$ from quorum node $q$}
            \If{$\tau \geq \text{current time stamp} - \Delta$  \textbf{and} didn't send $\Sigma$ to $q$} \label{alg line:check time stamp}
                \State Send $\Sigma$ to $q$
            \EndIf
        \EndUpon
    \end{algorithmic}
\end{dianabox}

\begin{dianabox}{\textsc{PartyWave}}
    \algoHead{$\hat{f}$-Wave for Party $p$}
    \begin{algorithmic}[1]
        \UponTrue{receive valid $\langle v^\ast, proof \rangle$ from relayer $r$ of committee $C$ s.t. $p \in C$}
            \State $\sigma \gets$ sign a share for message "I know $v^\ast$" with Aggregate Signature Scheme of $C$
            \State Send $\sigma$ to $r$
            \State \textbf{Decide} $v^\ast$
        \EndUpon

        \Statex

        \UponTrue{receive a valid $\langle acc, \mathrm{share}, proof \rangle$ from a quorum node $q$}
            \State $\mathrm{shares} \gets \mathrm{shares} \cup \{\mathrm{share}
            \}$
        \EndUpon

        \Statex

        \UponTrue{$|\mathrm{shares}| \geq K/4$}
            \State $v^\ast \gets decode(\mathrm{shares})$
            \State \textbf{Decide} $v^\ast$
        \EndUpon
    \end{algorithmic}
\end{dianabox}

\subsection{Analysis}

\begin{lemma}
    \label{lem:relayers ready}
    Let $f$ be the actual number of faults in a given run, and let $\hat{f}$ be the minimal power of two such that $\hat{f} \geq f$. Then every non-blocked relayer of $\hat{f}$-wave will create an aggregate signature of its committee within $O(f)$ rounds after GST.
\end{lemma}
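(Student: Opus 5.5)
We will trace one wave, the $\hat{f}$-wave with $\hat{f}$ the minimal power of two above $f$, and count rounds after GST. A non-blocked relayer means the relayer of a non-compromised committee $C$: every member of $C$, the relayer included, is honest. Such a committee exists because, by Theorem~\ref{thm:nodes2committees}, at most $c_b\hat{f}$ parties are blocked. The plan is to bound, in order, the time until the relayer can reconstruct $v^\ast$, the time until $v^\ast$ reaches $C$, and the time until the partial signatures come back.

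\emph{Step 1 (shares reach the relayer).} By the QAB precondition, at least $k/4$ honest quorum nodes obtain $\langle v^\ast, proof\rangle$ and invoke $\textsc{Wave}(v,\hat{f})$. Each such node splits the $O(\hat{f}\log n)$ relayers into $\hat{f}$ batches of size $O(\log n)$ and sends $v$ to one batch every $\Delta$. So within $\hat{f}$ sending slots, i.e.\ $O(\hat{f}) = O(f)$ rounds after the later of GST and its invocation, every relayer has been sent that node's share. After GST each message arrives within $\Delta$. Hence within $O(f)+1$ rounds our relayer holds at least $k/4$ distinct valid shares from honest quorum nodes; these shares all come from the same $v^\ast$, since the QAB input hypothesis guarantees honest inputs agree. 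The relayer then decodes $v^\ast$ correctly.

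One subtlety needs care here. Byzantine quorum nodes may send shares too, but a share is accepted only if it verifies against the accumulator certified by $proof$. Computational soundness of the accumulator and of \textbf{check} therefore prevents wrong shares from corrupting decoding. Relayers that reach $k/4$ shares early through byzantine contributions still decode $v^\ast$.

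\emph{Step 2 (value reaches $C$ and signatures return).} The relayer sends $\langle v^\ast, proof\rangle$ to all members of $C$, all of which are honest. Each member verifies the message, signs ``I know $v^\ast$'' under $C$'s aggregate scheme, and replies. After GST this round trip costs $2\Delta$. Since every member of $C$ is honest and responds, the relayer collects a valid partial signature from every party in $C$ and aggregates them into $\Sigma$. This gives $O(f)+O(1)=O(f)$ rounds in total.

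\emph{Main obstacle.} The step needing most care is the timing relative to GST and invocation time. Quorum nodes may invoke the wave before GST, and then their pre-GST messages are only guaranteed to arrive by GST$+\Delta$. We will handle this by noting that every message sent before GST is delivered by GST$+\Delta$. After GST, the remaining batch schedule of each honest quorum node completes within $\hat{f}\Delta$. This covers both cases, invocation before and after GST, with the same $O(\hat{f})$ bound, and $\hat{f} < 2f$ (for $f\ge1$; the case $f=0$ gives $\hat{f}=1$ and constant rounds) turns it into $O(f)$.
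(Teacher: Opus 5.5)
There is a genuine gap in the timing. Your argument never bounds \emph{when} the $k/4$ honest quorum nodes invoke $\textsc{Wave}(v,\hat f)$. The QAB precondition you cite only says that they eventually receive an input. What you actually establish is that every non-blocked relayer has $\Sigma$ within $O(\hat f)$ rounds after $\max\{\mathrm{GST},T\}$, where $T$ is the time by which $k/4$ honest quorum nodes have decided in Quorum Agreement. Your closing paragraph says invocation after GST is covered ``with the same $O(\hat f)$ bound''. That bound, however, is measured from the invocation, not from GST. So if $T$ lies far beyond GST, the claim ``within $O(f)$ rounds after GST'' does not follow.

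The missing ingredient is the round complexity of Quorum Agreement, which is the first step of the paper's proof: at least $k/4$ honest quorum nodes decide, and hence start the wave, within $O(f)$ rounds after GST. Plugging $T\le \mathrm{GST}+O(f)$ into your Steps~1 and~2 gives the lemma.

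The GST-relative form matters downstream. Lemma~\ref{lem:quorum terminate} needs relayers to be ready by $\mathrm{GST}+O(f)$ independently of the decision round $R$ of the quorum node under consideration. A bound relative to the last of the $k/4$ invocations would not give that.

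The rest of your reasoning matches the paper's: the batch schedule, pre-GST messages arriving by $\mathrm{GST}+\Delta$, accumulator soundness against Byzantine shares, and the $O(1)$ round trip inside a fully honest committee. Your remark that non-compromised committees exist is not needed.
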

\begin{proof}
    Quorum Agreement (QA) guarantees that at least $k/4$ honest quorum nodes will decide a value $v^\ast$ in QA within $O(f)$ rounds after GST. After a quorum node decides a value in QA, it takes it $\hat{f} = O(f)$ rounds to send its share to all the relayers of the $\hat{f}$-wave. Therefore, every relayer of an $\hat{f}$-wave will receive $k/4$ distinct shares within $O(f)$ rounds after GST. And will reconstruct $v^\ast$ according to Corollary \ref{cor:reed-solomon}. Consequently, any non-blocked relayer of $\hat{f}$-wave will get an aggregate signature of its committee within $O(f)$ rounds after GST.
\end{proof}

\begin{lemma}
    \label{lem:quorum terminate}
    Let $f$ be the actual number of faults in a given run. Let $q$ be an honest quorum node that decides in Quorum Agreement in round $R$. Then, $q$ terminates in round $\max\{\mathrm{GST}, R\} + O(f)$. 
\end{lemma}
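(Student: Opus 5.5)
We prove the claim by tracking the $\hat{f}$-wave run by $q$, where $\hat{f}$ is the minimal power of two with $\hat{f} \geq f$, so $\hat{f} < 2f$. Since $q$ terminates as soon as \emph{any} wave is done, it suffices to show that this particular wave is done by round $\max\{\mathrm{GST}, R\} + O(f)$. The wave has three phases at $q$: dissemination, pinging until $|\mathrm{Acknowledged}| \geq n - c_b \hat{f}$, and the final direct sends. We bound each phase by $O(f)$ rounds after $\max\{\mathrm{GST}, R\}$.

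\textbf{Dissemination.} Starting at round $R$, $q$ sends $v$ to the $\hat{f}$ relayer batches, one batch every $\Delta$. This phase ends by round $R + \hat{f} = R + O(f)$, regardless of GST, because it is only local waiting.

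\textbf{Pinging.} After dissemination, $q$ pings the batches cyclically, so each relayer is pinged once every $\hat{f}$ rounds. By Lemma~\ref{lem:relayers ready}, every non-blocked relayer of the $\hat{f}$-wave holds its aggregate signature $\Sigma$ within $O(f)$ rounds after GST. Take a ping sent after both this moment and $\mathrm{GST}$. It arrives within $\Delta$, so the freshness check on line~\ref{alg line:check time stamp} passes. The relayer, if honest, then answers with $\Sigma$, which reaches $q$ within another $\Delta$.

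The wrinkle is that a relayer of a non-compromised committee is honest, since the relayer belongs to its own committee. Hence every non-compromised committee eventually has its $\Sigma$ delivered to $q$. One full ping cycle of $\hat{f}$ rounds, started after $\max\{\mathrm{GST}, R\} + O(f)$, therefore collects $\Sigma$ from all non-compromised committees. Every non-blocked party lies in some non-compromised committee, so it is added to $\mathrm{Acknowledged}$. By Theorem~\ref{thm:nodes2committees}, at most $c_b \hat{f}$ parties are blocked, so $|\mathrm{Acknowledged}| \geq n - c_b \hat{f}$ by round $\max\{\mathrm{GST}, R\} + O(f)$.

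\textbf{Final sends.} When the threshold triggers, $|[n]\setminus\mathrm{Acknowledged}| \leq c_b \hat{f}$. Sending to each such party one per $\Delta$ takes $O(\hat{f}) = O(f)$ rounds, after which the wave is declared done and $q$ terminates. Summing the three phases gives termination by $\max\{\mathrm{GST}, R\} + O(f)$.

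\textbf{Main obstacle.} The delicate step is the pinging argument. We must check that a ping sent before GST, or before the relayer has $\Sigma$, does not cause the relayer to drop the response forever. This holds because the relayer only marks $q$ as answered when it actually sends $\Sigma$. We must also confirm that Lemma~\ref{lem:relayers ready}'s timing, measured from GST and from the $k/4$ quorum decisions, fits within $\max\{\mathrm{GST}, R\} + O(f)$. If the lemma is only stated relative to GST, we combine it with the fact that QA partial termination occurs within $O(f)$ rounds after GST. If $R$ is later than that, the relayers are already ready, so the bound still holds.
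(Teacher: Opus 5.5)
Your proof is correct and follows the same route as the paper's proof. Both fix the $\hat f$-wave with $\hat f$ the least power of two at least $f$, and use Lemma~\ref{lem:relayers ready} together with one full $\hat f$-round ping cycle (begun by round $R+\hat f$) to collect $\Sigma$ from every non-compromised committee; Theorem~\ref{thm:nodes2committees} then bounds the final direct sends by $c_b\hat f$. Your extra checks fill in details the paper leaves implicit: that relayers of non-compromised committees are honest, and that neither the freshness test nor the once-per-$q$ rule ever suppresses a needed response.
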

\begin{proof}
    Let $\hat{f}$ be the minimal power of two such that $\hat{f} \geq f$. According to Lemma \ref{lem:relayers ready}, all non-blocked relayers of $\hat{f}$-wave will have their aggregate signature by the round $\mathrm{GST} + O(f)$, and hence by the round $\max\{\mathrm{GST}, R\} + O(f)$. Further, $q$ will start pinging all the relayers of $\hat{f}$-wave by the round $R + \hat{f} = R + O(f)$ and it takes $\hat{f} = O(f)$ rounds to do a full round of pinging, hence it will receive all the signatures from non-blocked relayers by the round $\max\{\mathrm{GST}, R\} + O(f)$.

    According to Theorem \ref{thm:nodes2committees}, at most $c_b\cdot \hat{f}$ parties are blocked in the $\hat{f}$-wave, therefore, $q$ will receive an acknowledgment from at least $n - c_b\cdot\hat{f}$ parties, hence will send its value to at most $c_b\cdot\hat{f} = O(f)$ parties in $O(f)$ additional rounds. Then it will call the $\hat{f}$-wave done, and thus will terminate.
\end{proof}

We can now prove that all the parties will decide a value fast. 

\begin{lemma}
    Let $f$ be the actual number of faults in a given run. Assume $v^\ast$ is decided in Quorum Agreement. Then all honest parties will decide $v^\ast$ within $O(f)$ rounds after GST.
\end{lemma}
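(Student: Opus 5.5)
We prove the statement by splitting the honest parties according to whether they are blocked in the $\hat{f}$-wave. As before, $\hat{f}$ is the minimal power of two with $\hat{f} \geq f$. By Theorem~\ref{thm:nodes2committees}, applied with parameters $n$ and $\hat{f}$, at most $c_b\cdot\hat{f} = O(f)$ parties are blocked.

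\emph{Non-blocked honest parties.} Let $p$ be honest and non-blocked. Then $p$ belongs to some committee $C$ with no Byzantine member, so in particular $relayer(C)$ is honest.

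1. By the argument of Lemma~\ref{lem:relayers ready}, at least $k/4$ honest quorum nodes decide within $O(f)$ rounds after GST. Each of them then needs $\hat{f}=O(f)$ rounds of batched sending to reach every relayer of the $\hat{f}$-wave.

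2. Hence $relayer(C)$ collects $K/4$ valid shares by round $\mathrm{GST}+O(f)$. It decodes $v^\ast$ by Corollary~\ref{cor:reed-solomon} and sends $\langle v^\ast, proof\rangle$ to every member of $C$.

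3. That message reaches $p$ within one further $\Delta$. Since $proof$ verifies, $p$ decides $v^\ast$ by \textsc{PartyWave}.

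\emph{Blocked honest parties.} Let $p$ be honest and blocked. We claim that every honest quorum node $q$ that decided in QA sends $\langle acc, s_q, proof\rangle$ directly to $p$.

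1. The node $q$ adds a committee $C$ to $\mathrm{Acknowledged}$ only after receiving a valid aggregate signature on ``I know $v^\ast$''. An aggregate signature here requires a partial signature from every member of $C$, so by unforgeability $C \ni p$ can be acknowledged only if $p$ itself signed.

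2. The party $p$ signs only when it decides $v^\ast$ through some relayer. So either $p$ has already decided, or $p \notin \mathrm{Acknowledged}$ at the moment $q$ fires its last step.

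3. In the latter case $p$ lies in $[n]\setminus\mathrm{Acknowledged}$. By Lemma~\ref{lem:quorum terminate}, $q$ reaches that step and sends its share to all of these at most $c_b\hat f$ parties within $\max\{\mathrm{GST},R\}+O(f)$ rounds.

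4. Applying this to the $k/4$ honest quorum nodes that decide within $O(f)$ rounds after GST, $p$ receives at least $K/4$ distinct valid shares within $O(f)$ rounds after GST. It then decodes and decides $v^\ast$.

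\emph{Agreement with $v^\ast$.} Any decision $p$ makes is on $v^\ast$. Shares are checked against the accumulator $acc$ certified by $proof$, and relayer messages carry $proof$. By the QAB input guarantees (uniqueness and the hardness of finding a second valid $(v',cert')$) and the accumulator's soundness, no other value can be decoded or verified.

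\emph{Main obstacle.} The delicate step is the blocked case: a Byzantine relayer, or a party's own delay, must not make $q$ count $p$ as acknowledged when $p$ has not learned $v^\ast$. This rests on the aggregate signature requiring all of $C$'s members, which must be stated carefully. A second subtlety is the timing: a node $q$ that terminates via a smaller successful wave $\hat{f}'<\hat f$ must also have handled $p$. This holds because in the $\hat f'$-wave the same acknowledgment argument applies, so $p$ was either acknowledged there (hence decided) or sent shares directly.
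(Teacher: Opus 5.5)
Your proof is correct, and its core is exactly the paper's argument. With $Q'$ the at least $k/4$ honest quorum nodes that decide in Quorum Agreement, an honest party in the $\mathrm{Acknowledged}$ set of some node of $Q'$ must have signed and hence already decided, while one in none of these sets receives a distinct share from every node of $Q'$ and decodes; these two cases exhaust $[n]$. Your blocked-case reasoning never uses that $p$ is blocked, so it already covers every honest party and the non-blocked case can be dropped, which is worthwhile because that case is the more fragile step: it leans on Lemma~\ref{lem:relayers ready}, whose argument tacitly assumes quorum nodes do not abort the $\hat f$-wave when a smaller wave finishes first.
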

\begin{proof}
    Quorum Agreement (QA) guarantees that $K/4$ honest quorum nodes, denote them $Q'$, will decide in QA within $O(f)$ rounds after GST. Lemma \ref{lem:quorum terminate} then tells us that $K$ honest quorum nodes terminate within $O(f)$ rounds after GST.

    Let $ack_q$ be the Acknowledged set of an honest quorum node $q$ at a time it terminates. We are guaranteed that every party in $ack_q$ knows $v^\ast$, and hence all the nodes in $\bigcup\limits_{q \in Q'}ack_q$ know $v^\ast$. What is left to show is that nodes in $[n]\setminus \bigcup\limits_{q \in Q'}ack_q$ also decide. Recall that before terminating, $q$ sends its share to parties in $[n] \setminus ack_q$. Therefore, parties in $\bigcap\limits_{q\in Q'}[n] \setminus ack_q$ receive $K$ distinct shares and therefore can restore $v^\ast$ and decide. But 
    \begin{align*}
        \left[\bigcap\limits_{q\in Q'}[n] \setminus ack_q\right] \cup\left[\bigcup\limits_{q \in Q'}ack_q\right] =[n],
    \end{align*}
    hence, every honest party decides within $O(f)$ rounds after GST.

    Finally, every party will decide $v^\ast$ and not any other value since before deciding, parties perform a check (either a certificate for a full value from a relayer, or certificates for a share from a quorum node), and only $v^\ast$ can pass these checks.
\end{proof}

We now move on to analyzing the bit complexity of the protocol. We do so by analyzing a bit complexity of communication between nodes of different roles of our protocol. 

\begin{lemma}
    \label{lem:bit complexity quorum relayers}
    In total, honest quorum nodes send at most $O(L\cdot f\cdot\log t\cdot \log n + f\cdot\log t\cdot \log n\cdot \kappa)$ bits to relayers after GST.
\end{lemma}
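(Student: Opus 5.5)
The plan is to bound the traffic by time. After GST, every honest quorum node finishes in $O(f)$ rounds, and while a wave is running it sends to only one batch of $O(\log n)$ relayers per $\Delta$. So the number of post-GST messages can be bounded without any assumption about how the adversary schedules things.

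\textbf{Step 1: termination time.} Fix an honest quorum node $q$ that decides in Quorum Agreement in round $R$. By Lemma~\ref{lem:quorum terminate}, $q$ terminates by round $\max\{\mathrm{GST},R\}+O(f)$. When some wave is done, \textsc{QABNode} terminates all waves. Hence the part of $q$'s execution that falls after GST lasts $O(f)$ rounds. This holds also when $R<\mathrm{GST}$, since then the bound is $\mathrm{GST}+O(f)$.

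\textbf{Step 2: per-wave rate.} In \textsc{QuorumNodeWave}, $q$ first runs the dissemination loop and then the ping loop. Each iteration sends to one $\mathrm{relayer\_batch}$ of size $O(\log n)$ and then waits $\Delta$. So in each round each wave sends $O(\log n)$ messages to relayers. The $\hat f$-wave sends at most $\hat f$ batches of $v$ in total. The final step sends to undecided parties, not to relayers, so it is not counted here. There are $\lceil\log_2 t\rceil$ waves running in parallel. Over the $O(f)$ post-GST rounds, $q$ therefore sends $O(f\log t\log n)$ messages to relayers.

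\textbf{Step 3: message sizes.} A dissemination message is $v=\langle acc,s_i,proof\rangle$. By the \textbf{encode}$_k$ interface the share $s_i$ has size $O(L/k+\kappa)$, while $acc$ and the threshold-signed $proof$ have size $O(\kappa)$. A ping is $\langle\text{ping},\tau\rangle$, of size $O(\kappa)$. I will split the cost into the erasure-symbol part, of size $O(L/k)$, and the cryptographic part, of size $O(\kappa)$.

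\textbf{Step 4: summing over quorum nodes.} For the $L$-term, the $k=9t+1$ quorum nodes together send $k\cdot O(f\log t\log n)\cdot O(L/k)=O(L\cdot f\log t\log n)$ bits. The $1/k$ in the share size cancels the factor $k$, which is why erasure coding is used.

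\textbf{Main obstacle: the $\kappa$-term.} Charging $O(\kappa)$ to every message of every quorum node gives $O(t\cdot f\log t\log n\cdot\kappa)$. This is a factor $t$ too large for the stated bound, so the naive per-message accounting is not enough, and I do not yet have an argument that removes this factor.

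\begin{itemize}
\item \emph{First attempt: rate-limit the messages of size $\Theta(\kappa)$ across the whole quorum.} Only $v$-messages carry $proof$, and each wave sends at most $\hat f$ batches of them per node. I would check whether pings can be sent by a designated subset of quorum nodes rather than by all of them.
\item \emph{Second attempt: shrink the per-share cryptographic overhead.} The witness and $acc$ inside each share could be amortized, for instance by batching the shares destined to one relayer.
\end{itemize}

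If neither attempt succeeds, the argument above proves only the weaker $O(L f\log t\log n+t f\log t\log n\,\kappa)$. In that case the bound stated in Lemma~\ref{lem:bit complexity quorum relayers} should be read as a per-node bound on the $\kappa$ contribution. Steps 1–3 are routine. The whole difficulty is removing the factor $k$ from the $\kappa$ contribution in Step 4.
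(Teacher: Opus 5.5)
Your Steps 1--4 are the paper's argument: Lemma~\ref{lem:quorum terminate} gives an $O(f)$ post-GST lifetime, each of the $O(\log t)$ waves sends to $O(\log n)$ relayers per round, there are $O(t)$ quorum nodes, and each message has size $O(L/t+\kappa)$. The paper's proof ends by substituting $|v|=O(L/t+\kappa)$ into $O(t\cdot f\cdot\log t\cdot\log n\cdot|v|)$. That gives exactly your $O(L\cdot f\cdot\log t\cdot\log n + t\cdot f\cdot\log t\cdot\log n\cdot\kappa)$. The statement's $\kappa$-term has simply lost its factor $t$; this is a typo, not an idea you are missing.

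The bound as stated cannot hold for this protocol, so your ``main obstacle'' should not be attacked. Take $\mathrm{GST}=0$, $f=1$, $L\le\kappa$ and $t\gg\log t\cdot\log n$ (say $t=\sqrt n$).
\begin{itemize}
\item At least $k/4=\Omega(t)$ honest quorum nodes decide in Quorum Agreement.
\item Each of them sends at least one message $\langle acc,s_i,proof\rangle$ to relayers, and that message alone carries $\Theta(\kappa)$ bits of accumulator, witness and certificate.
\item So honest quorum nodes send $\Omega(t\cdot\kappa)$ bits, which exceeds $O(\log t\cdot\log n\cdot\kappa)$.
\end{itemize}

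The same observation shows why both proposed fixes are dead ends, beyond the fact that each would change the protocol.
\begin{itemize}
\item Restricting pings does not help, because the dissemination messages already incur the factor $t$.
\item Batching does not help, because a quorum node sends only its own share $s_i$ to a given relayer, so there is nothing to amortize.
\end{itemize}

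Drop the hedge and the ``per-node'' reading, and state the corrected bound. It is all that is needed downstream. The QAB bit-complexity theorem already contains the term $O(t\cdot f\cdot\log n\cdot\log t\cdot\kappa)$ from Lemma~\ref{lem:bit complexity relayers quorum}, so the final total is unchanged.
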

\begin{proof}
    From Lemma \ref{lem:quorum terminate}, we know that a quorum node operates in QAB for at most $O(f)$ rounds after GST. In each of these rounds, in each wave, it sends its value to $O(\log n)$ relayers. For $O(t)$ quorum nodes, this yields $O(t\cdot f\cdot\log t\cdot \log n \cdot |v|)$ bits where v is the size of a value.
    
    The value comprises a share which is $O(L/k + \kappa) = \mathcal{O}(L/t + \kappa)$ bits, an accumulator and a proof, which are $\mathcal{O}(\kappa)$ bits. So the value is of size $O(\frac{L}{t} + \kappa)$ bits. Substituting into the above, we get the expected bit complexity.
\end{proof}

\begin{lemma}
    \label{lem:bit complexity relayers parties}
    In total, honest relayers send at most $O((L + \log t \cdot \kappa)\cdot n \cdot \log n \cdot \log t)$ bits to parties after GST. 
\end{lemma}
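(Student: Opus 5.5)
The plan is to bound, wave by wave, the traffic on relayer-to-party edges of the communication graph, and then sum over the $\lceil \log_2 t\rceil$ waves run in parallel. Fix a wave with estimate $\hat f = 2^j$. In \textsc{RelayerWave}, the only message a relayer sends to parties is the single $\langle v^\ast, proof\rangle$ message it sends to the members of its committee $C$. That message is guarded by an \textbf{upon once} trigger, namely collecting $K/4$ shares. So an honest relayer sends it at most once to each member of each committee it relays for, and no timing or GST argument is needed. This is why the bound is stated without dependence on $f$.

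For a fixed wave, the total number of relayer-to-party messages equals the number of relayer-party edges of the communication graph. That number is $\sum_{C\in\mathcal{C}} |C|$, the total number of (party, committee) incidences. By Theorem~\ref{thm:nodes2committees}, every party belongs to $O(\log n)$ committees, so this sum is $O(n\log n)$ for each $\hat f$. I count via party memberships rather than via the number of committees, because a committee may be large. Each message has size $L + |proof|$. Here $v^\ast$ has $L$ bits, and $proof$ is the threshold-signed accumulator from Quorum Agreement. I would take the proof size to be $O(\log t\cdot\kappa)$: an accumulator plus threshold signature is $O(\kappa)$, and a Merkle-style witness over $k=\Theta(t)$ leaves, as the paper allows, costs $O(\kappa\log t)$. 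Using the Merkle variant in the worst case matches the $\log t\cdot\kappa$ term in the claim. The per-wave cost is therefore $O((L+\log t\cdot\kappa)\, n\log n)$.

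Finally, I would multiply by the $O(\log t)$ waves $\hat f\in\{2,4,\dots,2^{\lceil\log_2 t\rceil}\}$ that each quorum node may initiate. This gives $O((L+\log t\cdot\kappa)\, n\log n\log t)$, as claimed. Aborting waves early only helps, so no extra accounting is needed.

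The main obstacle I expect is justifying that an honest relayer really sends $\langle v^\ast,proof\rangle$ only once per wave. A worry is that Byzantine quorum nodes could feed it shares of different values and cause repeated decodings. I would handle this using the QAB precondition that \textbf{check} only accepts $v^\ast$. Every valid share carries the certified accumulator $acc$, so only shares of the unique $v^\ast$ are stored, and the \textbf{upon once} guard then fires a single time. A second point to confirm is that a relayer serving several committees in the same wave is counted per committee, which the edge count above already does.
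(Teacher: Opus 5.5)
Your proposal is correct and follows the paper's proof. Both bound the relayer--party links by $O(n\log n)$ per wave via party memberships, charge one $\langle v^\ast, proof\rangle$ message of size $O(L+\log t\cdot\kappa)$ per link, and multiply by the $O(\log t)$ waves; your Merkle-witness justification for the $\log t\cdot\kappa$ term is unnecessary (witnesses accompany shares, not the full value, and the certificate is $O(\kappa)$), but it only overestimates, so the bound is unaffected.
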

\begin{proof}
    Independent of a wave number, one party belongs to $O(\log n)$ committees, making it $O(n \log n)$ links between parties and relayers in the communication graph for each wave. In each wave, along each link, a relayer sends a decided value and a certificate only once.
    
    Adding that this happens for $O(\log t)$ waves, we evaluate the total bits sent from relayers to parties to be $O((L + \log t \cdot \kappa)\cdot n \cdot \log n \cdot \log t)$.
\end{proof}

\begin{lemma}
    \label{lem:bit complexity parties relayers}
    In total, honest parties send at most $O(n \cdot \log n \cdot \log t \cdot \kappa)$ bits to relayers after GST. 
\end{lemma}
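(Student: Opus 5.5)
The plan is to follow the same counting scheme as Lemma~\ref{lem:bit complexity relayers parties}, this time counting messages along the party-to-relayer edges of the communication graph.

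First, fix a single $\hat{f}$-wave. By Theorem~\ref{thm:nodes2committees}, every party is assigned to $O(\log n)$ committees. Each committee is linked to exactly one relayer, $relayer(C)$. So the communication graph of the wave has $O(n \log n)$ party--relayer edges.

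Next, I will examine the code of \textsc{PartyWave} to see what an honest party ever sends to a relayer. The only such message is the partial signature $\sigma$ on ``I know $v^\ast$''. It is produced inside an \textbf{upon once} handler, triggered by receiving a valid $\langle v^\ast, proof\rangle$ from the relayer of committee $C$. I will argue that this handler fires at most once per committee the party belongs to. The reason is that the certificate check forces $v^\ast$ to be unique, so there is no second valid value that could cause a re-send. Hence at most one message travels along each edge per wave. A partial signature of the aggregate scheme has size $O(\kappa)$, so a single wave costs honest parties at most $O(n \log n \cdot \kappa)$ bits.

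Finally, I will sum over the waves. \textsc{QABNode} launches waves for $\hat{f} = 2^j$ with $j \le \lceil \log_2 t \rceil$, which gives $O(\log t)$ waves. Each wave has its own committee assignment, but each assignment satisfies the same per-party bound of $O(\log n)$ committees. The total is therefore $O(n \log n \log t \cdot \kappa)$, as claimed.

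The only point needing care is the ``once per edge'' claim. Two facts make it robust. First, a Byzantine relayer cannot trigger a second signature: it could only do so with a different valid $v^\ast$, and second-preimage resistance and the certificate property of QAB inputs rule that out. Second, a party that belongs to several committees sharing the same relayer sends at most one signature per committee, and that is still $O(\log n)$ messages. Counting only bits sent after GST does not change anything, because the bound is on total messages over the whole execution.
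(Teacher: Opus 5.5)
Your proposal is correct and follows the paper's proof essentially verbatim: $O(n\log n)$ party--relayer links per wave, one $O(\kappa)$-bit partial signature per link, and $O(\log t)$ waves. Your extra justification of the once-per-link claim is fine, though the \textbf{upon once} trigger (read as once per committee) already suffices on its own, so the appeal to uniqueness of $v^\ast$ is not needed.
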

\begin{proof}
     Independent of a wave number, one party belongs to $O(\log n)$ committees, making it $O(n \log n)$ links between parties and relayers in the communication graph for each wave. In each wave, along each link, a sends its signature. Adding that this happens for $O(\log t)$ waves, we evaluate the total bits sent from parties to relayers to be $O( \kappa\cdot n \cdot \log n \cdot \log t)$.
\end{proof}

\begin{lemma}
    \label{lem:bit complexity relayers quorum}
    In total, honest relayers send at most $O(t\cdot f\cdot \log n \cdot \log t \cdot \kappa)$ bits to quorum nodes after GST.
\end{lemma}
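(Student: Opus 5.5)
We bound the relayer-to-quorum traffic by counting how many aggregate signatures an honest relayer can be asked for, and how often. Relayers send to a quorum node only in reply to a ping (line~\ref{alg line:check time stamp} of \textsc{RelayerWave}). Each reply is a single aggregate signature $\Sigma$ of size $O(\kappa)$. The check ``didn't send $\Sigma$ to $q$'' means an honest relayer answers each quorum node at most once per wave. So it suffices to count (relayer, quorum node) pairs that actually produce a reply after GST. For each such pair we then argue the reply is triggered by a ping sent after GST, or by one sent just before it.

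First, a crude count. For each $\hat{f}$-wave there are $O(\hat{f}\log n)$ relayers (Theorem~\ref{thm:nodes2committees}) and $k = O(t)$ quorum nodes. The one-reply-per-pair rule then gives $O(t\cdot \hat{f}\log n\cdot\kappa)$ bits for that wave. Summing over $\hat{f}\in\{2,4,\ldots,2^{\lceil\log_2 t\rceil}\}$ gives $O(t^2\log n\cdot\kappa)$. This is too large when $f\ll t$, so the large waves must be handled more carefully.

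The sharper count uses Lemma~\ref{lem:quorum terminate}: every honest quorum node terminates, and aborts all waves, within $O(f)$ rounds after $\max\{\mathrm{GST},R\}$. In each round a quorum node pings only one batch of $O(\log n)$ relayers per wave. The time-stamp check $\tau \geq \text{now}-\Delta$ ensures a relayer replies only to fresh pings. Pings sent long before GST and delivered at GST are therefore discarded, and only pings sent in the last $\Delta$ before GST can trigger replies. Hence after GST, in each wave, an honest quorum node can trigger replies from at most $O(\log n)$ relayers per round, over $O(f)+O(1)$ rounds. That is $O(f\log n)$ replies per quorum node per wave. Multiplying by $O(t)$ quorum nodes, $O(\log t)$ waves and $O(\kappa)$ bits per reply gives $O(t\cdot f\cdot\log n\cdot\log t\cdot\kappa)$.

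The main obstacle is replies addressed to Byzantine quorum nodes, which can ping arbitrarily many relayers at any time. Here we again use the one-reply-per-pair rule. The $f$ Byzantine quorum nodes can extract at most one signature from each relayer of each wave. That is $O(f\cdot \hat{f}\log n)$ replies per wave, which still looks too large for $\hat f\gg f$. I would try to resolve this by arguing that such replies are only $O(\kappa)$ each and that, per wave, Byzantine pings are bounded by $f\cdot|R_{\hat f}|$. Then I would note that $\sum_{\hat f} f\hat f\log n = O(f t\log n)$, which already fits within the claimed $O(t f\log n\log t\,\kappa)$ bound. Combining the honest and Byzantine contributions yields the lemma.
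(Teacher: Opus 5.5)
Your proposal is correct and follows the paper's proof. Replies to honest quorum nodes are bounded by counting pings sent in $[\mathrm{GST}-\Delta,\infty)$, using Lemma~\ref{lem:quorum terminate} and the time-stamp check, which gives $O(t\cdot f\cdot \log n\cdot \log t)$ replies; replies to Byzantine quorum nodes are bounded by the once-per-quorum-node rule and the fact that there are $\sum_{\hat f} O(\hat f \log n) = O(t\log n)$ relayers over all waves, giving $O(t\cdot f\cdot \log n\cdot \kappa)$ bits, exactly as in the paper. Your opening crude count is not needed, and the geometric-sum argument you hedge on in your last paragraph is precisely the paper's type~(II) bound, so you can state it outright.
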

\begin{proof}
    Here we distinguish between two types of messages from relayers: (I) to honest quorum nodes and (II) to malicious quorum nodes. We start by analyzing messages of type (I).

    From Lemma \ref{lem:quorum terminate}, we know that a quorum node operates in QAB for at most $O(f)$ rounds after GST. In one wave, per round, it pings $O(\log n)$ relayers. This means that a single honest quorum node pings in total at most $O(f\cdot \log n \cdot \log t)$ relayers in the time period $[\mathrm{GST}-\Delta, +\infty)$, and so the total number of pings from honest nodes in the time period $[\mathrm{GST}-\Delta, +\infty)$ is $O(t\cdot f\cdot \log n \cdot \log t)$. Since a relayer only responds to a ping sent at most $\Delta$ time before, after GST relayers only respond to pings from the time period $[\mathrm{GST}-\Delta, +\infty)$. A relayer responds by sending a signature of size $\kappa$, thus we conclude that the total number of bits of type (I) sent after GST is $O(t\cdot f\cdot \log n \cdot \log t \cdot \kappa)$.

    Now, for messages of type (II), notice that a relayer only responds to a ping at most once. Since there are $O(t \cdot \log n)$ relayers in total and $f$ malicious nodes, we conclude that the bit complexity of messages of type (II) is at most $O(t\cdot f \cdot \log n \cdot \kappa)$.
\end{proof}

\begin{lemma}
    \label{lem:bit complexity quorum parties}
    In total, honest quorum nodes send at most $O(L\cdot f\cdot \log t + t\cdot f \cdot \log t \cdot \kappa)$ bits to parties after GST.
\end{lemma}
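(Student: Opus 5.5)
The only messages from quorum nodes to parties are those in the last step of each wave, the ``Quorum-Undecided'' step. There, once $|\mathrm{Acknowledged}| \geq n - c_b\cdot\hat{f}$, a quorum node $q$ sends its wave input $v = \langle acc, s_q, proof\rangle$ directly to every party in $[n]\setminus\mathrm{Acknowledged}$. So the plan is to count, for each honest quorum node and each wave, how many parties it addresses, multiply by the size of one message, and sum over waves and over quorum nodes.

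First, the message size. As in the proof of Lemma~\ref{lem:bit complexity quorum relayers}, $v$ consists of one erasure share of size $O(L/k + \kappa) = O(L/t + \kappa)$ (since $k = \Theta(t)$) plus an accumulator and a certificate of size $O(\kappa)$. Hence $|v| = O(L/t + \kappa)$.

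Next, the number of recipients per wave. In the $\hat{f}$-wave, the trigger condition guarantees that $|[n]\setminus\mathrm{Acknowledged}| \leq c_b\cdot\hat{f}$ at the moment sending starts. The set $\mathrm{Acknowledged}$ only grows, and the loop iterates over the set fixed when the trigger fires, so each honest quorum node sends at most $c_b\hat{f}$ messages in that wave. The key point is that we cannot bound this by $O(f)$ for every wave, because waves with $\hat{f} \gg f$ also run in parallel. Here I would use Lemma~\ref{lem:quorum terminate}: an honest quorum node terminates, aborting all waves, within $O(f)$ rounds after $\max\{\mathrm{GST}, R\}$. In this step a quorum node sends one message per $\Delta$, so after GST each wave contributes at most $\min\{c_b\hat{f}, O(f)\} = O(f)$ messages. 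Since there are $O(\log t)$ waves, each honest quorum node sends $O(f\log t)$ direct messages after GST.

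Multiplying by $O(t)$ quorum nodes and by the message size gives
\[
O(t)\cdot O(f\log t)\cdot O(L/t + \kappa) = O(L\cdot f\cdot\log t + t\cdot f\cdot\log t\cdot\kappa),
\]
which is the claimed bound. The main subtlety is the per-wave cap via termination time rather than via $c_b\hat{f}$. One point needs care: messages sent before GST are not counted, and Lemma~\ref{lem:quorum terminate} measures time from $\max\{\mathrm{GST}, R\}$, where $R$ is when $q$ decided. So the post-GST activity window of each honest quorum node is indeed $O(f)$ rounds, with one send per round per wave.
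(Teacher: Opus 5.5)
Your proof is correct and is essentially the paper's argument. Both use Lemma~\ref{lem:quorum terminate} to cap each honest quorum node's post-GST activity at $O(f)$ rounds, count one direct send per round in each of the $O(\log t)$ waves, and multiply by $O(t)$ quorum nodes and the $O(L/t+\kappa)$ message size.
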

\begin{proof}
    By Lemma \ref{lem:quorum terminate}, each honest quorum node only operates for $O(f)$ rounds after deciding in Quorum Agreement. In one round, it sends its value to at most one party in a given wave, hence to at most $O(\log t)$ parties. The value comprises a share, which is $O(\frac{L}{k} +  \kappa) = O(\frac{L}{t} +  \kappa) $ bits. Thus, we get the total number of bits sent from honest quorum nodes to parties is 
    \begin{align*}
        O(t) \cdot O(f) \cdot O(\log t) \cdot O(\frac{L}{t} + \kappa) = O(L\cdot f\cdot \log t + t\cdot f \cdot \log t \cdot \kappa).
    \end{align*}
\end{proof}

\begin{theorem}
    QAB has bit complexity $O(((L+\log t \cdot\kappa)\cdot n + t \cdot f\cdot \kappa) \cdot \log n \cdot \log t)$.
\end{theorem}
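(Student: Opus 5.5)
The bound follows by summing the per-channel bounds already proved. Communication in a QAB wave happens only along five kinds of links: quorum$\to$relayers, relayers$\to$parties, parties$\to$relayers, relayers$\to$quorum, and quorum$\to$parties, where the last is the only off-graph channel. No other honest messages are sent, since the pseudocode of \textsc{QuorumNodeWave}, \textsc{RelayerWave} and \textsc{PartyWave} only sends along these links. So the total honest bit complexity after GST is at most the sum of the bounds in Lemmas \ref{lem:bit complexity quorum relayers}, \ref{lem:bit complexity relayers parties}, \ref{lem:bit complexity parties relayers}, \ref{lem:bit complexity relayers quorum} and \ref{lem:bit complexity quorum parties}. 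Each lemma already accounts for all $O(\log t)$ parallel waves, so I will not multiply by the number of waves again.

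The proof then reduces to absorbing each of the five terms into $O(((L+\log t\cdot\kappa)\cdot n + t\cdot f\cdot\kappa)\cdot\log n\cdot\log t)$, using $f\le t\le n$:
\begin{itemize}
\item Lemma \ref{lem:bit complexity quorum relayers} gives $O(L f\log t\log n + f\log t\log n\,\kappa)$. Since $f\le n$, the first part is at most $L n\log n\log t$. The second part is at most $t f\kappa\log n\log t$.
\item Lemma \ref{lem:bit complexity relayers parties} is exactly the leading term $(L+\log t\,\kappa)\, n\log n\log t$.
\item Lemma \ref{lem:bit complexity parties relayers} gives $n\log n\log t\,\kappa$, which is dominated by $\log t\,\kappa\cdot n\log n\log t$ as long as $t\ge 2$. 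For constant $t$, the $\log t$ factors are constants throughout and the bound is interpreted with $\log t$ read as $\max(1,\log t)$.
\item Lemma \ref{lem:bit complexity relayers quorum} is exactly the $t f\kappa\log n\log t$ term.
\item Lemma \ref{lem:bit complexity quorum parties} gives $O(L f\log t + t f\log t\,\kappa)$, which is dominated by the same two terms as in the first item.
\end{itemize}

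All five bounds count bits after GST, which is the convention for partially synchronous protocols fixed in the model section. The sum is therefore the claimed complexity.

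The only delicate point is the $\log t$ versus $\log^2 t$ factor in the parties$\to$relayers term. It needs the $\log t\cdot\kappa$ inside the parenthesis of the statement, and the convention $\log t\ge 1$, to be absorbed. Apart from this the argument is a routine asymptotic summation.
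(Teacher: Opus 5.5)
Your proposal is correct and follows the same route as the paper, which likewise sums the bounds of the five per-channel lemmas (quorum$\to$relayers, relayers$\to$parties, parties$\to$relayers, relayers$\to$quorum, quorum$\to$parties) and absorbs them into the stated expression. Your term-by-term absorption makes explicit what the paper leaves implicit, including why the $\log t\cdot\kappa$ summand and the convention $\log t\ge 1$ are needed for the parties$\to$relayers term. One small remark: the proof of the quorum$\to$relayers lemma actually yields a $\kappa$-term of $t\cdot f\cdot\log t\cdot\log n\cdot\kappa$, not the stated $f\cdot\log t\cdot\log n\cdot\kappa$, but this is still absorbed by the $t\cdot f\cdot\kappa\cdot\log n\cdot\log t$ term, so your conclusion is unaffected.
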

\begin{proof}
    Summing up the bit complexities of \cref{lem:bit complexity quorum relayers,lem:bit complexity relayers parties,lem:bit complexity parties relayers,lem:bit complexity relayers quorum,lem:bit complexity quorum parties} we obtain
    \begin{align*}
        O(L\cdot f\cdot\log t\cdot \log n + f\cdot\log t\cdot \log n\cdot \kappa) &+ O((L + \log t \cdot\kappa)\cdot n \cdot \log n \cdot \log t) \\
        &+O(n \cdot \log n \cdot \log t \cdot \kappa) \\
        &+O(t\cdot f\cdot \log n \cdot \log t \cdot \kappa)\\
        &+O(L\cdot f\cdot \log t + t\cdot f \cdot \log t \cdot \kappa)\\
        &=O(((L+\log t \cdot\kappa)\cdot n + t \cdot f\cdot \kappa) \cdot \log n \cdot \log t).
    \end{align*}
\end{proof}

\section{Improved Disperser Committees}
\label{sec:appendix:proofs}
\nodesToCommitteesPlusThm*
\begin{proof}
    We first write explicitly the constants from \cref{thm:bipartite disperser}: there exists $D \geq 1, \alpha > 0$ such that for any $n \leq m$, there exists a $(k, 0.1)$ disperser $G=(L \sqcup R)$ with $|L| = n$, $|R| = m$, left degree $d = D \log n$ and $k \geq \alpha m / \log n$.

    Let $F = 9D + 1$. We use this theorem with $n$, $m = 4F \cdot (9t+1)\cdot \log n$. Properties (1) and (2) are verified. Moreover, by setting $C = \alpha \cdot 4F \cdot 10 > 0$, property (3) is a consequence of $G$ being a $(C \cdot t, 0.1)$ disperser.
\end{proof}

\section{All to Quorum Broadcast}\label{section:AQB-appendix}

\subsection{Pseudocode}

\begin{dianabox}{\textsc{PartyAQB}}
\algoHead{AQB for Party $p$ with input $\vin$}
    \begin{algorithmic}
        \For{each $C \in \mathcal{C}$ s.t. $p \in C$}
            \State Send $hash(\vin)$ to $relayer(C)$
        \EndFor
    \end{algorithmic}
\end{dianabox}

\begin{dianabox}{\textsc{RelayerAQB}}
\algoHead{AQB for relayer $r$ of committee $C$}
    \begin{algorithmic}
        \UponTrue{received valid hash from all members of $C$}
            \State $q \gets $ the quorum node linked to $r$
            \If{All hashes are the same hash $h$}
                \State Send $h $ to $q$
            \Else 
                \State Send $\ast$ to $q$
            \EndIf
        \EndUpon
    \end{algorithmic}
\end{dianabox}

\begin{dianabox}{\textsc{QuorumNodeAQB}}
    \algoHead{AQB for Quorum Node $q$ with input $\vin$}
    \begin{algorithmic}
        \State $v \gets $ $q$'s input in consensus
        \State $R \gets$ the size of the relayer batch ($4F \cdot \log n$)
        \Statex
        \UponTrue{received more than $R_q/2$ hashes from batch for the same $h$ \textbf{and} $hash(\vin) = h$ \textbf{and} haven't decided yet} 
            \State Decide $\vin$
        \EndUpon

        \Statex
        \UponTrue{received $3 R_q / 4$ hashes \textbf{and} haven't decided yet}
            \State Decide $\ast$ 
        \EndUpon
    \end{algorithmic}
\end{dianabox}

\subsection{Analysis}

We now give the properties of our protocol for AQB. 

\begin{lemma}
    The protocol satisfies robust termination
\end{lemma}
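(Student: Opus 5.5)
Robust termination says that all but at most $f$ honest quorum parties eventually decide. The plan is to show that every honest quorum node whose relayer batch contains few Byzantine relayers eventually receives messages from at least $3R/4$ relayers of its batch. Such a node then decides, either $\vin$ via the first rule or $\ast$ via the second. It remains to count the honest quorum nodes that can fail this condition.

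\textbf{Step 1: honest relayers eventually send.} Consider an honest relayer $r$ of a committee $C$. In \textsc{RelayerAQB}, $r$ waits for a valid hash from \emph{all} members of $C$. If some member of $C$ is Byzantine and stays silent, $r$ may wait forever. So the waiting rule has to be handled; I will read it in one of two ways.
\begin{itemize}
\item Every party is obliged to send, so a missing or invalid message is treated as $\ast$ after a timeout. In synchrony this is a timeout of one round.
\item Only honest members are guaranteed to report.
\end{itemize}
The cleanest route is to argue as follows. A relayer whose committee is entirely honest, and which is itself honest, eventually receives a hash from every member of $C$. This is because every honest party runs \textsc{PartyAQB} unconditionally, and messages are eventually delivered. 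That relayer therefore forwards either $h$ or $\ast$ to its quorum node. I will call such a relayer \emph{good}. A relayer is \emph{bad} if its committee contains a Byzantine party, which includes the case where the relayer itself is Byzantine.

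\textbf{Step 2: counting bad relayers.} Each Byzantine party belongs to $D\log n$ committees by Theorem~\ref{thm:nodes2committeesPlus}(2). So at most $f\cdot D\log n$ committees are compromised, and hence at most that many relayers are bad. The batches are disjoint and each has size $R=4F\log n$.

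Call a batch \emph{spoiled} if more than $R/4 = F\log n$ of its relayers are bad. Since batches are disjoint, the number of spoiled batches is at most
\[
\frac{fD\log n}{F\log n} = \frac{fD}{F} < \frac{f}{9},
\]
using $F>9D$. This is at most $f$. Every honest quorum node whose batch is not spoiled eventually hears from at least $3R/4$ relayers of its batch.

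\textbf{Step 3: conclusion.} Consider an honest quorum node $q$ whose batch is not spoiled, and which has not yet decided when the $3R/4$-th message arrives. The second upon-rule then makes $q$ decide $\ast$. Otherwise $q$ had already decided $\vin$. Either way $q$ decides. Hence at most $f$ honest quorum nodes (in fact at most $f/9$) fail to decide, which is robust termination.

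\textbf{Main obstacle.} The delicate point is Step 1: justifying that a relayer with an all-honest committee does not block, and making sure nothing else blocks a good relayer. In particular, the "valid hash" check must always pass for honest senders. I would first try to make this precise via eventual delivery plus the unconditional sending in \textsc{PartyAQB}. If the intended reading is instead that relayers wait only for honest parties, the same count of compromised committees still bounds the number of non-forwarding relayers, so Step 2 goes through unchanged.
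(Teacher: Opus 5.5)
Your proof is correct and uses the same counting as the paper. At most $fD\log n$ committees contain a Byzantine party, and only their relayers can fail to forward, since an all-honest committee's relayer eventually hears from every member; so an honest quorum node that never reaches its $3R/4$ threshold must have more than $F\log n$ such relayers in its disjoint batch. The paper phrases this as a contradiction with $f+1$ non-deciding nodes using only $D\le F$, while your direct count of spoiled batches with $F>9D$ gives the slightly sharper bound of fewer than $f/9$ non-deciding honest quorum nodes.
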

\begin{proof}
     Assume by contradiction that $f+1$ parties never decide a value. For a party not to decide a value, it must not receive a message from $25\%$ of its batch of relayers. Because each quorum node has a batch of $4 \cdot F \cdot \log n$ relayers and each relayer is matched to a single quorum node, this means that at least $(f + 1) \cdot F \cdot \log n$ relayers did not send a message to their quorum node, which can only happen when their committee is blocked. However, there are $f$ byzantine parties, each part of $D \log n$ committees. So at most $f \cdot D \cdot \log n \leq f \cdot F \cdot \log n$ committees are blocks (because $D \leq F$ by definition). This is a contradiction.
\end{proof}

\begin{lemma}
    The protocol satisfies robust validity
\end{lemma}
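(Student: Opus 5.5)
We need to show: if all honest parties have input $v$, then at least $k-2t$ honest quorum parties decide $v$. The plan is a counting argument on relayer messages, in the style of the robust termination proof, with a case analysis on what an honest quorum node receives.

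\textbf{Step 1: honest committees forward $hash(v)$.} Call a committee \emph{clean} if it contains no byzantine party; since its relayer belongs to it, that relayer is honest. All members of a clean committee are honest with input $v$, so they all send $hash(v)$. Hence the relayer eventually forwards $hash(v)$ to its quorum node. Conversely, an honest relayer of a committee that does contain byzantine members sends either $hash(v)$ or $\ast$ (if byzantine members send a different hash), or nothing. It never sends a hash $h\neq hash(v)$, since all honest members send $hash(v)$. Only byzantine relayers can send arbitrary hashes.

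\textbf{Step 2: count compromised committees per quorum node.} By the argument in the robust termination lemma, the $f$ byzantine parties lie in at most $f\cdot D\log n$ committees in total. Since each relayer lies in its own committee, this also bounds the number of byzantine relayers. Call a quorum node \emph{bad} if more than $\frac{1}{4}$ of its batch of $R=4F\log n$ relayers belong to committees containing a byzantine party, that is, more than $F\log n$ such committees. Each bad node accounts for more than $F\log n$ compromised committees and batches are disjoint, so the number of bad nodes is less than $fD\log n/(F\log n) = fD/F < f$, using $F>9D$. So at most $f$ quorum nodes are bad, plus at most $f$ quorum nodes are themselves byzantine. That leaves at least $k-2f\geq k-2t$ honest good quorum nodes.

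\textbf{Step 3: good honest nodes decide $v$.} For a good honest quorum node $q$, at least $\frac34 R$ of its relayers are clean, and each of them eventually sends $hash(v)$. That is more than $R/2$, and $q$ has input $v$ with matching hash. The obstacle is ruling out that $q$ first triggers the $\ast$ rule, i.e. receives $3R/4$ messages before more than $R/2$ of them equal $hash(v)$. The non-$hash(v)$ messages ($\ast$ or other hashes) can come only from relayers of compromised committees, of which there are at most $R/4$ in its batch. So when $q$ has $3R/4$ messages, at least $3R/4 - R/4 = R/2$ of them are $hash(v)$.

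This gives only $R/2$, not strictly more, so I would handle the boundary with a strict-versus-nonstrict check of the thresholds. Either define "bad" with a slightly smaller fraction, which costs only a constant using $F>9D$ (slack of $9\times$), or use the fact that the $\frac34$ rule fires on exactly $3R/4$ messages while the $\frac12$ rule requires more than $R/2$. Concretely, I would call $q$ bad if more than $R/8$ of its batch is compromised. This still gives fewer than $8fD/F<f$ bad nodes. Then at $3R/4$ received messages at least $5R/8>R/2$ are $hash(v)$, so the first rule triggers no later than the second, and $q$ decides $v$. Since $q$ decides once, it decides $v$, which concludes the proof.
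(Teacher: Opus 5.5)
Your proof is correct and follows essentially the same counting argument as the paper: each Byzantine party compromises at most $D\log n$ committees, an honest quorum node that fails to decide $v$ must have a quarter of its disjoint batch compromised, so dividing bounds the number of such nodes by $t$ (you get at most $f$), and subtracting the Byzantine quorum nodes leaves at least $k-2t$. Your explicit handling of the race between the $\ast$ rule and the $hash(v)$ rule is more careful than the paper's one-line version; the strictness issue at $R/2$ also disappears if, as the paper does, you call a node bad when \emph{at least} $25\%$ of its batch is compromised, rather than strictly more.
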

\begin{proof}
    If all honest parties have the same input $v$, then every non-blocked committee (made only of honest parties) will send $hash(v)$ to its quorum node. There are at most $t$ byzantine parties, each part of $D \log n$ committees, so there are at most $D \cdot t \cdot \log n \leq F \cdot t \cdot \log n$ blocked committees. 
    We remark that for an honest quorum node node to decide $v$, given that its input is $v$, it must receive messages for a value different than $hash(v)$ from at least $25\%$ of its batch of relayers. Because only relayers for blocked committees would do so, there are at most $F \cdot t \cdot \log n$ blocked committees and each quorum node has a batch size of $4 \cdot F \cdot \log n$, there can only be at most $4 \cdot F \cdot t \cdot \log n / 4 \cdot F \cdot \log n = t$ honest quorum parties which do not decide $v$. Out of the $k$ quorum parties, given $f \leq t$ of them can be byzantine and $t$ can be honest and not decide $v$, the remaining $k - 2t$ will decide $v$.
\end{proof}

\begin{lemma}
    The protocol satisfies non-amplification.
\end{lemma}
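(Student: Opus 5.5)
We prove non-amplification with the constant $C$ from Theorem~\ref{thm:nodes2committeesPlus}. The argument counts relayers that can forward $hash(v)$.

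Fix an input $v \neq \ast$ and suppose at least $C\cdot t$ honest parties do not have $v$ as input. Let $S$ be this set of honest parties.

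\textbf{Step 1: most committees cannot forward $hash(v)$.} By property (3) of Theorem~\ref{thm:nodes2committeesPlus}, $S$ covers at least $90\%$ of all committees. Take a committee $C'$ covered by $S$ and let $p\in S\cap C'$.
\begin{itemize}
\item Since $p$ is honest and $hash(\vin_p)\neq hash(v)$ (by second-preimage resistance, Property~\ref{prop:hash_second_preimage_resistance}), $p$ sends a hash different from $hash(v)$ to $relayer(C')$.
\item If $relayer(C')$ is honest, it therefore sends either $\ast$ or nothing, and never $hash(v)$.
\end{itemize}
So $hash(v)$ can reach a quorum node only through a relayer of an uncovered committee, or through a byzantine relayer. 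A byzantine relayer lies in a committee containing a byzantine party, so it is covered by the byzantine set.

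\textbf{Step 2: bound the bad relayers.} Call a relayer bad if its committee is uncovered by $S$ or its committee contains a byzantine party.
\begin{itemize}
\item Uncovered committees number at most $10\%$ of all committees, i.e.\ $0.1\cdot 4F(9t+1)\log n$.
\item Each byzantine party lies in $D\log n$ committees, giving at most $f D\log n \le tD\log n$ further bad relayers.
\end{itemize}
Hence there are at most $0.4F(9t+1)\log n + tD\log n$ bad relayers in total.

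\textbf{Step 3: count quorum nodes that can decide $v$.} An honest quorum node decides $v$ only after receiving the same hash $hash(v)$ from more than half of its batch, i.e.\ from more than $2F\log n$ relayers. Batches are disjoint, so the number of such quorum nodes is at most
\[
\frac{0.4F(9t+1)\log n + tD\log n}{2F\log n} = 0.2(9t+1) + \frac{tD}{2F}.
\]
Using $F > 9D$, the second term is below $t/18$, so the whole expression is about $1.8t + 0.2 + t/18$.

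This is the step I expect to be the main obstacle: the bound exceeds $t$, so the stated constants do not directly give ``at most $t$''. To close the gap, I would strengthen the coverage guarantee in property (3) from $90\%$ to $1-\varepsilon$ with smaller $\varepsilon$, for example $\varepsilon = 0.01$. Theorem~\ref{thm:bipartite disperser} allows any fixed $\varepsilon$ at the cost of changing the constants $D$ and $\alpha$, hence $C$. With $\varepsilon = 0.01$ the count becomes $\frac{0.04F(9t+1)}{2F} + \frac{t}{18} \le 0.18t + 0.02 + t/18 < t$ for $t \ge 1$. I would adjust the proof of Theorem~\ref{thm:nodes2committeesPlus} accordingly by choosing $\varepsilon$ small enough, keeping $F>9D$.

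Alternatively, if the ``at most $t$'' is only used downstream as ``$t + 2f \le k/3$'', a bound of $2t$ suffices: $2t + 2f \le 4t$, and this is still below the representativity threshold $(9t+1)/3 = 3t + 1/3$ after reworking the final accounting in the combination theorem. I would also note that the case $t = 0$ is trivial, since then no hash differing from $hash(v)$ changes the count.
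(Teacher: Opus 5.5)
Your counting is the same as the paper's proof. A committee covered by $S$ whose relayer is honest cannot forward $hash(v)$, so every relayer that sends $hash(v)$ belongs to an uncovered or a compromised committee. Disjointness of the batches then bounds the number of quorum nodes that see $hash(v)$ from more than half their batch.

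Your arithmetic is also correct, and it exposes a slip in the paper's own proof. The paper bounds the set $B$ of committees with no member in $S$ by $0.1\cdot(9t+1)\cdot F\cdot\log n$. But property~(1) of Theorem~\ref{thm:nodes2committeesPlus} gives $4F(9t+1)\log n$ committees. So $90\%$ coverage only yields $|B|\le 0.4F(9t+1)\log n=(3.6t+0.4)F\log n$. This exceeds the paper's lower bound $|A|\ge (t+1)F\log n$ for every $t\ge 1$, so the claimed contradiction $A\subseteq B$, $|A|>|B|$ does not follow with the stated constants. Your estimate of roughly $1.8t+0.2+t/18$ is the correct one.

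Your repair is what is actually needed. Instantiate Theorem~\ref{thm:bipartite disperser} with a smaller $\varepsilon$ (e.g.\ $0.01$), re-derive $D$ and $\alpha$, set $F=9D+1$, and rescale $C$. Robust termination and robust validity only use $D\le F$ and the $1/4$, $1/2$ batch thresholds, so they are unaffected. $C$ also stays a constant, so the downstream bound $h=O(t)$ survives.

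Two side remarks in your proposal are wrong and should be dropped.

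\begin{itemize}
\item \textbf{The alternative route.} A bound of $2t$ does not suffice as stated. With $n_v\le 2t$ and up to $2f\le 2t$ effective faults in the quorum agreement, $n_v+2f$ can reach $4t$. That is larger than $k/3=3t+1/3$, so representativity no longer forces $\ast$. Making this route work would require a larger quorum, and then the uncovered-committee term $2\varepsilon k$ grows with $k$ too. Keep the $\varepsilon$ fix instead.
\item \textbf{The case $t=0$.} It is not trivial but degenerate. Then $C\cdot t=0$, so the hypothesis of non-amplification holds vacuously, and property~(3) would have to hold for $S=\emptyset$, which is false. Moreover, robust validity forces the single quorum node to decide $v$ when every party holds $v$. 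So non-amplification as literally stated fails at $t=0$. Like the paper, which writes ``assuming $t\ge 1$'', simply restrict to $t\ge 1$.
\end{itemize}
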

\begin{proof}
    We use the constant $C$ defined in our committee construction from \cref{thm:nodes2committeesPlus}. Let $v \ne \ast$ be any value and assume that at least $C \cdot t$ honest parties do not have $v$ as input. 
    Assume by contradiction that $t+1$ honest quorum parties decide $v$. For a quorum party to decide $v$, it must have received $hash(v)$ from at least half of its batch, i.e it must have received $hash(v)$ from half of its $4 \cdot F \cdot \log n$ relayers. So at least $(t+1)\cdot 2 \cdot F \cdot \log n$ relayers sent $hash(v)$ as their message. As seen in the previous two lemmas, at most $t \cdot F \cdot \log n$ are from blocked committees. So at least $(t+1)\cdot F \cdot \log n$ relayers from fully honest committees sent $hash(v)$, meaning all of their committee members had $v$ as input, let $A$ be this set of committees. 
    Let $S$ be the set of honest parties whose input is not $v$, by assumption, $|S| \geq C \cdot t$. We can therefore use \cref{thm:nodes2committeesPlus} which shows that at least $90\%$ of committees have a member in $S$. So at most $10\%$ of committees have no member in $S$, let $B$ be this set, we have $|B| < 0.1 \cdot (9t+1) \cdot F \cdot \log n \leq t \cdot F \cdot \log n$ (assuming $t \geq 1$). But this is a contradiction as $A \subseteq B$ but $|A| > |B|$.
\end{proof}

\begin{lemma}
    The bit complexity of AQB is $O(n\cdot \log n\cdot \kappa)$.
\end{lemma}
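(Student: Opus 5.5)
The plan is to count the bits that honest parties send along each type of edge in the AQB communication graph. The AQB protocol has only two hops, so there are exactly two terms to bound.

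\textbf{Parties to relayers.} In \textsc{PartyAQB}, each party sends $hash(\vin)$ once to the relayer of each committee it belongs to. By \cref{thm:nodes2committeesPlus}, every party is assigned to $D\cdot\log n$ committees, so it sends $D\log n$ messages. Each message is a hash of $\kappa$ bits. Summing over all $n$ parties gives $O(n\cdot\log n\cdot\kappa)$ bits. Equivalently, this term is the number of party–relayer edges, $O(n\log n)$, times $\kappa$.

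\textbf{Relayers to quorum.} In \textsc{RelayerAQB}, each honest relayer sends exactly one message to its single linked quorum node: a hash $h$ of $\kappa$ bits or the symbol $\ast$ of $O(1)$ bits. The message is sent once, guarded by ``upon once''. There are $R = 4F\cdot(9t+1)\cdot\log n = O(t\log n)$ committees, and hence at most that many relayers. A node relaying for several committees sends one message per committee. This term is therefore $O(t\cdot\log n\cdot\kappa)$, which is at most $O(n\cdot\log n\cdot\kappa)$ because $t\le n$.

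\textbf{Quorum nodes.} In \textsc{QuorumNodeAQB}, quorum nodes send nothing; they only decide locally. They contribute $0$ bits.

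Adding the three contributions gives $O(n\log n\,\kappa)+O(t\log n\,\kappa)=O(n\cdot\log n\cdot\kappa)$.

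There is no real obstacle. The only point needing care is that every honest send is one-shot and independent of Byzantine behavior: Byzantine parties cannot trigger extra honest messages, because relayers respond once and quorum nodes never respond. The count is therefore worst-case rather than adaptive.
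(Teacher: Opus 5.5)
Your proposal is correct and follows essentially the same two-hop count as the paper. Each party sends $O(\log n)$ hashes, giving $O(n\log n\,\kappa)$, and each relayer sends one hash over $O(t\log n)$ committees, giving $O(t\log n\,\kappa)$, which is absorbed since $t\le n$.
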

\begin{proof}
    Each party sends $O(\log n)$ hashes, this is $O(n\cdot \log n\cdot \kappa)$ bits, each relayer sends $1$ hash, that is $O(t\cdot \log n \cdot \kappa)$ bits.
\end{proof}

\section{Improved QAB}\label{section:Improved-QAB-Appendix}

We now slightly modify QAB to account for the case where most of the parties hold the value decided in the Quorum Agreement. 

In particular, relayers first ask parties before sending a value, whether they have it (by sending a hash), and only send the full value in case of no response. 

\subsection{Pseudocode}

\begin{dianabox}{\textsc{ImprovedRelayerWave}}
\algoHead{Improved $\hat{f}$-Wave for Relayer of Committee $C$}
    \begin{algorithmic}
        \UponTrue{$|\mathrm{shares}| \geq k/4$} \Comment{Overwrite previous version}
            \State $v^\ast \gets decode(\mathrm{shares})$
            \If{$v^\ast = 0$}
                \State Send $\langle v^\ast, proof \rangle$ to parties in $C$
            \Else
                \State $h \gets hash(v^\ast)$
                \State Send $\langle \mathrm{''need?''}, h\rangle$ to parties in $C$
            \EndIf
        \EndUpon

        \UponTrue{receive $\langle \mathrm{''need''}\rangle$ from $p \in C$}
            \State Send $\langle v^\ast, proof\rangle$ to $p$
        \EndUpon
    \end{algorithmic}
\end{dianabox}

\begin{dianabox}{\textsc{ImprovedPartyWave}}
\algoHead{Improved $\hat{f}$-Wave for Party $p$}
    \begin{algorithmic}
        \State $v \gets$ $p$'s proposal in consensus
        \Statex
        \UponTrue{receive $\langle \mathrm{''need?''}, h\rangle$ from relayer $r$ of $C \in \mathcal{C}$ s.t. $p \in C$}
            \If{$hash(v) \neq h$}
                \State Send $\langle \mathrm{''need''}\rangle$ to $r$

            \Else
                \State $v^\ast \gets v$
                \State $\sigma \gets$ sign $h$ with aggregate signature scheme for $C$
                \State Send $\sigma$ to $r$
                \State \textbf{Decide} $v^\ast$
            \EndIf
        \EndUpon
    \end{algorithmic}
\end{dianabox}

\subsection{Analysis}

What we need to show is that now, relayers send fewer bits to parties. In particular, we improve the Lemma \ref{lem:bit complexity relayers parties}. 

\begin{lemma}[Lemma \ref{lem:bit complexity relayers parties} Improved.] 
    When using AQB before the quorum Agreement and Improved QAB after, in Improved QAB, in total, honest relayers send at most $O((L +\log t \cdot \kappa)\cdot t \cdot \log t+ n \cdot \log n \cdot \log t\cdot \kappa)$ bits to parties after GST. 
\end{lemma}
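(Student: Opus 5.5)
We split the messages sent by honest relayers to parties in Improved QAB into two kinds. The first kind consists of the short messages: $\langle \mathrm{''need?''}, h\rangle$, or $\langle v^\ast, proof\rangle$ when $v^\ast = \ast$, which the relayer sends to every party of its committee. The second kind consists of the full messages $\langle v^\ast, proof\rangle$, sent only to a party that answered $\langle \mathrm{''need''}\rangle$.

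\textbf{Short messages.} As in the proof of Lemma~\ref{lem:bit complexity relayers parties}, each wave has $O(n\log n)$ party--relayer links. Along each link a relayer sends at most one short message of size $O(\kappa)$. The case $v^\ast=\ast$ also costs $O(\kappa)$, since the proof is $O(\kappa)$ bits. Over the $O(\log t)$ waves this gives $O(n\cdot\log n\cdot\log t\cdot\kappa)$ bits.

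\textbf{Full messages.} We may assume $v^\ast \neq \ast$. An honest party $p$ sends $\langle \mathrm{''need''}\rangle$ only if its proposal $v$ satisfies $hash(v)\neq hash(v^\ast)$, hence $v \neq v^\ast$. So only honest parties whose input differs from $v^\ast$ can trigger a full transmission, together with Byzantine parties. Byzantine parties are at most $f\le t$, and each lies in $O(\log n)$ committees. This is where the AQB enters, through the parameter $h$ of the composition theorem.

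\begin{itemize}
\item Suppose at least $C\cdot t$ honest parties do not hold $v^\ast$. By non-amplification of AQB, at most $t$ honest quorum parties enter Quorum Agreement with input $v^\ast$.
\item Combining this with representativity of Quorum Agreement, as in the proof of the composition theorem ($t+2f\le 3t\le k/3$), a non-$\ast$ value $v^\ast$ could not be decided. This is a contradiction.
\end{itemize}

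Hence $h\le C t = O(t)$ honest parties lack $v^\ast$. The total number of parties that can say ``need'' is therefore $O(t)$. Each belongs to $O(\log n)$ committees per wave, and a relayer answers each requester in its committee once. This gives $O(t\log n)$ full messages per wave and $O(t\log n\log t)$ over all waves.

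\textbf{Main obstacle.} Multiplying naively by the message size $O(L+\log t\cdot\kappa)$ yields $O((L+\log t\cdot \kappa)\cdot t\cdot\log n\cdot\log t)$, which has an extra $\log n$ compared to the claim. This step needs more care. The first thing I would try is to absorb the $\kappa$-part of the factor $t\log n\log t\cdot\log t\,\kappa$ into the $n\log n\log t\,\kappa$ term (valid since $t\log t\le n$ up to constants), which leaves only the $L$-part to fix.

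For the $L$-part I would argue that a needy party should be charged only once per wave rather than once per committee. Concretely, an honest party that has already received $v^\ast$ from one relayer in a wave does not need to request it again. I would add the rule that a party which has decided stops replying ``need'', and answers later ``need?'' messages with its signature instead. With this rule, each needy honest party receives at most $O(1)$ full values per wave, up to simultaneous arrivals within one round. If that bookkeeping fails, I would accept the bound with the extra $\log n$ factor and note that it does not change the final theorem up to log factors.
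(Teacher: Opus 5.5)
Your skeleton matches the paper's: the $O(n\log n\log t\cdot\kappa)$ bits for ``need?'' and $\langle \ast, proof\rangle$ messages, the bound of $O(t)$ needy honest parties via non-amplification plus representativity (as in the composition theorem), and a separate charge for Byzantine requesters. You diverge on the full-value term, and your suspicion there is justified. The paper charges one full transmission per needy party per wave, writing $O((L+\log t\cdot\kappa)\cdot t\cdot\log t)$ for honest requesters and $O((L+\log t\cdot\kappa)\cdot f\cdot\log t)$ for Byzantine ones. It never accounts for a party lying in $O(\log n)$ committees per wave and, by the pseudocode, answering ``need'' to each of their relayers. The handler has to fire per committee; otherwise a party holding $v^\ast$ would sign for only one committee and the wave's acknowledgment mechanism would break. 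So the paper's proof contains no idea that removes the extra $\log n$.

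Your proposal does not remove it either, so it does not establish the stated bound.

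\textbf{The $\kappa$-absorption fails.} It requires $t\log t = O(n)$, which is false in the range where AQB is used. For $t=n/10<n/9$, the term $t\log n\log^2 t\cdot\kappa$ exceeds $n\log n\log t\cdot\kappa+t\log^2 t\cdot\kappa$ by a $\Theta(\log n)$ factor.

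\textbf{The new rule does not fix the $L$-part.} Making a decided party stop replying ``need'' modifies Improved QAB, so it cannot prove a lemma about the protocol as specified. Even with the rule, the claim of $O(1)$ full values per wave fails:
\begin{itemize}
\item relayers in the same batch receive the quorum shares in the same rounds and decode together;
\item a party's $O(\log n)$ relayers may all sit in one batch, since for constant $\hat f$ there are only $O(1)$ batches;
\item after GST the adversary can still align deliveries within $\Delta$.
\end{itemize}
Hence a needy party can receive all its ``need?'' messages before any full value arrives, and reply ``need'' to every honest relayer.

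What your argument rigorously gives is $O((L+\log t\cdot\kappa)\cdot t\cdot\log n\cdot\log t+n\cdot\log n\cdot\log t\cdot\kappa)$, a $\log n$ factor weaker than claimed. Reaching the stated bound would need a new mechanism that caps each needy party at $O(1)$ full transmissions per wave regardless of timing, for instance requesting from one relayer at a time. You would also have to show this does not break the $O(f)$ latency when the requested relayer is Byzantine.
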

\begin{proof}
    Independent of a wave, the number of links between relayers and parties is $O(n \log n)$.
    Therefore, in each wave, relayers send $O(n\log n)$ messages of type $\langle \mathrm{''need?''}, h\rangle$, which sums to $O(n \cdot \log n \cdot \log t\cdot \kappa)$ bits over all waves. 

    If $\ast$ was decided in Quorum Agreement, relayers send  $O(n\log n)$ messages of type $\langle \ast, proof \rangle$, which sums to $O(n \cdot \log n \cdot \log t\cdot \kappa)$ bits over all waves.

    If some value $v' \neq \ast$ was decided in Quorum Agreement, by representativity and QAB's non-amplification,
    $v'$ is an input of all but $O(t)$ honest parties. Therefore, at most $O(t)$ honest parties will send a message saying they need a whole value, which means relayers will send in total $O((L +\log t \cdot \kappa)\cdot t \cdot \log t)$ bits in all waves for sharing an actual value.

    What is left to observe is that in all waves, relayers might send full value to all the malicious parties, which sums up to $O((L + \log t \cdot \kappa)\cdot f\cdot\log t)$.

    Summing all up, the total complexity is $O((L +\log t \cdot \kappa)\cdot t \cdot \log t+ n \cdot \log n \cdot \log t\cdot \kappa)$ as stated.
\end{proof}

\section{Partially Synchronous Retrieval Protocol}\label{section:gst-retrieval-appendix}

\subsection{Pseudocode}

\begin{dianabox}{\textsc{RetrievalLeaderGST}}
\algoHead{Retrieval protocol for the leader in synchrony}
\begin{algorithmic}
    \State Wait for $2\Delta$ units of time
    \If{there exists $\textbf{acc}$ which received at least $t+1$ partial signatures $(\rho_p)$ with valid shares $(s_p)$}
        \State $\textbf{proof} \gets tcombine(\textbf{acc}, (\rho_p))$
        \State $v \gets decode((s_p))$
        \State Send $(INFORM, \textbf{acc}, v)$ to every party 
        \State Wait for $n-t$ partial signatures $(\rho'_p)$ for $(RECEIVED, \textbf{acc})$
        \State $\textbf{proof'} \gets tcombine((\rho'_p), (RECEIVED, \textbf{acc}))$
        \State \Return $(AGREE,  \textbf{value}, \textbf{proof}, \textbf{proof'})$
    \EndIf
    
    \State Let $S$ be the set of hashes received
    \State Let $\mathcal{I}$ be a partition of the hash range using \cref{lemma:input-inter-partition}
    \State Broadcast $(RANGES, \mathcal{I})$ to every party
    \State Wait for $t+1$ valid threshold signatures $(\rho_{I,p})$ for every $I \in \mathcal{I}$
    \State For every $I \in \mathcal{I}$, $\textbf{proof}_I \gets tcombine(I, (\rho_{I,p}))$
    \State \Return $(DISAGREE, I, (\textbf{proof}_I)_{I \in \mathcal{I}})$
\end{algorithmic}
\end{dianabox}

\begin{dianabox}{\textsc{RetrievalPartyGST}($\vin$)}
\algoHead{Retrieval protocol for a party $p_i$ in synchrony}
\begin{algorithmic}
    \State $(acc, (s_j)) \gets encode(\vin)$
    \State Send $(acc, tsign(acc),  s_i)$ to the leader
    \UponTrue{Received $(RANGES, \mathcal{I})$ with $|I| \leq 7$ from the leader}
        \State $R \gets \emptyset$
        \For{$I \in \mathcal{I}$ such that $acc \notin I$}
            \State $R \gets R \cup \{(I, tsign(I))\}$
        \EndFor
        \State Send $R$ to the leader
    \EndUpon
    \UponTrue{Received valid $(INFORM, \textbf{acc}, v)$ from the leader}
        \State Store $v$
        \State $\rho \gets tsign((RECEIVED, acc))$
        \State Send $\rho$ to the leader
    \EndUpon

\end{algorithmic}
\end{dianabox}

\subsection{Analysis}

\begin{lemma}\label{lemma:gst-retrieval-agreement}
    If the retrieval protocol returns an agreement evidence for an accumulator $acc$, then the value $v$ associated with $acc$ satisfies strong unanimity. Moreover, at least $t+1$ honest parties know $v$.
\end{lemma}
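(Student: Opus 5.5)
The plan is to read both claims directly off the two threshold signatures the leader must hold before \textsc{RetrievalLeaderGST} can return $(AGREE,\dots)$. The first is $\textbf{proof} = tcombine(\textbf{acc},(\rho_p))$, a $(t+1)$-threshold signature on the accumulator. The second is $\textbf{proof'}$, an $(n-t)$-threshold signature on $(RECEIVED,\textbf{acc})$. By unforgeability of the threshold scheme, a valid signature with threshold $k$ requires $k$ distinct partial signatures. At most $t$ of these can come from Byzantine parties. So at least one honest party produced a partial signature on $\textbf{acc}$, and at least $n-2t \geq t+1$ honest parties produced partial signatures on $(RECEIVED,\textbf{acc})$; the last inequality uses $t<n/3$.

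\textbf{Strong unanimity.} Inspecting \textsc{RetrievalPartyGST}($\vin$), an honest party partially signs exactly one accumulator, namely $acc$ from $encode(\vin)$ on its own input. Hence some honest party $p$ has input $\vin_p$ with accumulator $\textbf{acc}$. By the binding property of the cryptographic accumulator (and second-preimage resistance), no other value $v'\neq \vin_p$ can be exhibited with accumulator $\textbf{acc}$ and valid shares. Therefore the value $v$ associated with $\textbf{acc}$ is the input of an honest party. Now suppose all honest parties propose the same value $w$. Then $p$'s input is $w$, so $v=w$, which is strong unanimity. I will also check that the $v$ the leader reconstructs via $decode$ from $t+1$ valid shares is indeed this value. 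This holds because shares are checked against $\textbf{acc}$ through their witnesses, and at least $t+1 \geq n/4$ valid shares suffice for decoding; I need $t+1\ge n/4$ here, or else I will argue that the leader waits for enough shares.

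\textbf{Knowledge.} An honest party signs $(RECEIVED,acc)$ only in the handler for a valid $(INFORM,\textbf{acc},v)$, and it stores $v$ just before signing. Validity of the INFORM message means $v$ matches $\textbf{acc}$, which by the binding argument above is the same unique value. So the at least $t+1$ honest signers of $(RECEIVED,\textbf{acc})$ all know $v$.

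\textbf{Main obstacle.} The delicate point is making ``valid'' precise. I must ensure that an honest party checks, upon INFORM, that $v$ actually encodes to $\textbf{acc}$, and that the leader's decoded value is pinned down by $\textbf{acc}$. Both reduce to the accumulator's infeasibility of membership forgery: re-encoding $v$ must reproduce $\textbf{acc}$, otherwise the message is discarded. Equivocation by a Byzantine leader is then ruled out computationally, so all statements hold except with negligible probability.
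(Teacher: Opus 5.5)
Your proof is correct and matches the paper's argument. The $(t+1)$-threshold signature on $\textbf{acc}$ forces an honest signer whose input has that accumulator (modulo accumulator binding), and the $(n-t)$-threshold signature on $(RECEIVED,\textbf{acc})$ gives $n-2t\ge t+1$ honest parties that stored $v$. Your side check on the leader decoding from $t+1$ shares is not needed: honest parties sign $(RECEIVED,\textbf{acc})$ only after a valid INFORM whose value matches $\textbf{acc}$.
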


\begin{proof}
    The first part of the agreement proof is a $t+1$ threshold signature that an honest party only sign if its input value is $v$ (assuming no collision on the accumulator value). Therefore, at least one honest party has $v$ as input so this value satisfies strong unanimity. Moreover, the second part of the agreement proof is a $n-t$ threshold signature signed by parties which know $v$. This implies that at least $n - t - t \geq t + 1$ honest parties know $v$.
\end{proof}

\begin{lemma}\label{lemma:gst-retrieval-disagreement}
    If the retrieval protocol returns a disagreement evidence, then not all honest parties have the same input.
\end{lemma}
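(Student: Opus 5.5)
The plan is a contradiction argument: assume all honest parties share a common input $v$, and show that a disagreement evidence can then never be assembled. A disagreement evidence returned by \textsc{RetrievalLeaderGST} consists of a partition $\mathcal{I}$ of the accumulator (hash) range together with a $(t+1)$-threshold signature $\textbf{proof}_I$ for every $I \in \mathcal{I}$. Since $\mathcal{I}$ is a partition, the common accumulator $acc = encode(v)$ lies in exactly one interval $I^\ast \in \mathcal{I}$. It therefore suffices to show that no valid $(t+1)$-threshold signature on $I^\ast$ can exist.

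\textbf{Step 1: only Byzantine parties sign $I^\ast$.} I would inspect \textsc{RetrievalPartyGST}. An honest party $p$ produces $tsign(I)$ for an interval $I$ only if $acc_p \notin I$, where $acc_p$ is the accumulator of its own input. Because every honest party has input $v$ and encoding is deterministic, every honest party computes the same $acc \in I^\ast$. Hence no honest party ever releases a partial signature on $I^\ast$, in any instance or view. Some care is needed to ensure that partial signatures on intervals are domain-separated from the other signed messages ($acc$ and $(RECEIVED, acc)$), so that signatures produced for one purpose cannot be reused as interval signatures. I would state this as an implicit tagging assumption.

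\textbf{Step 2: invoke unforgeability.} By Step 1, at most $f \le t$ parties, all Byzantine, can contribute partial signatures on $I^\ast$. By the $tverify$ semantics, a valid $\textbf{proof}_{I^\ast}$ requires $t+1$ distinct partial signatures. A computationally bounded adversary holding at most $t$ key shares therefore cannot produce one, except with negligible probability. This contradicts the validity of the evidence, so not all honest parties have the same input.

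\textbf{Main obstacle.} The only delicate point is that the argument depends on the accumulator being a deterministic function of the input, together with the partition covering the whole range. If encoding were randomized, or if the leader could send overlapping or non-covering intervals, $acc$ could escape every interval. I would address this in two parts: honest parties reject a \textsc{RANGES} message that is not a genuine partition of $[\![0, 2^\kappa - 1]\!]$, and verifiers of the evidence likewise check the partition property. Beyond that, I would note explicitly that this lemma needs no timing assumptions, since it concerns only the safety of the evidence.
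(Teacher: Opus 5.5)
Your proposal is correct and is essentially the paper's argument: assume a common honest input, take the interval of the partition containing its accumulator, observe that no honest party ever signs that interval, and conclude that its $(t+1)$-threshold signature cannot be formed. Your extra caveats (domain separation of signed messages, deterministic encoding, and checking that $\mathcal{I}$ covers the whole range) are left implicit in the paper. Of these, only the verifier-side coverage check is actually needed: honest parties never sign an interval containing their accumulator, whatever the leader sends.
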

\begin{proof}
    A disagreement evidence consists of a partition of the root range $\mathcal{I}$ as well as for each interval $I \in \mathcal{I}$ a $(t+1)$-threshold that parties only sign if their input is not in $I$. Assume by contradiction that all honest parties have the same input, let $I$ be the interval in $\mathcal{I}$ containing this input. Then no honest party will vote for this interval, meaning the threshold signature could not get formed. Therefore all honest parties do not have the same input.
\end{proof}

\begin{lemma}\label{lemma:gst-honest-leader}
    After $GST$, when the leader is honest, the retrieval protocol will always return a proof within $4$ rounds.
\end{lemma}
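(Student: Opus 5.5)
The plan is to follow an honest leader's execution of \textsc{RetrievalLeaderGST} after GST, where every message between honest parties arrives within $\Delta$. We count rounds of length $\Delta$ and split on which branch the leader takes.

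\textbf{Step 1: the leader's collection phase.} By the retrieval protocol's definition, since the leader is honest, every honest party calls \textsc{RetrievalPartyGST} at most one round after the leader starts. Each such party immediately sends $(acc, tsign(acc), s_i)$. So after the leader's initial $2\Delta$ wait, it holds the messages of all $n-t$ honest parties, or at least $n-f \geq n-t$ of them.

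\textbf{Step 2, agreement branch.} Suppose some accumulator $acc$ has $t+1$ valid partial signatures and shares. The $t+1$ partial signatures already give the first threshold signature. We need decodability: $t+1$ shares suffice under the $n/4$-decoding guarantee of \textbf{decode}$_n$ only if $t+1 \geq n/4$. If $t+1 < n/4$, the plan is to argue that the leader actually holds more shares, or to use the explicit decoding threshold. This point needs care and may require reading the threshold as ``all received matching shares'' rather than exactly $t+1$. The leader then broadcasts $(INFORM, acc, v)$ in round $3$. Every honest party verifies it against $acc$, stores $v$, and replies with a $(RECEIVED,acc)$ partial signature, which arrives by round $4$. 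Because $n-t$ honest parties respond, the $(n-t)$-threshold signature forms, and the leader returns AGREE by round $4$.

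\textbf{Step 3, disagreement branch.} If no accumulator reaches $t+1$ signatures, then in the multiset $S$ of received accumulators every value appears at most $t$ times. We apply \cref{lemma:input-inter-partition} to $S$ with a suitable constant $X$, so that each interval contains at most $n/X$ received values; since $t<n/3$, $X=3$ is the natural choice. This gives $|\mathcal{I}| = O(1)$, and we need the bound $|\mathcal{I}|\le 7$ checked by parties, which the lemma's count $4X+1$ must be reconciled with by tightening the greedy count. The leader broadcasts $(RANGES,\mathcal{I})$ in round $3$, and honest parties reply in round $4$ with partial signatures on every interval not containing their accumulator.

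The key counting step is as follows. Fix an interval $I$. The honest parties whose accumulator lies in $I$ are among the received entries of $S$ in $I$, and there are at most $n/X$ of them. So at least $(n-t) - n/X$ honest parties sign $I$. We need $(n-t) - n/3 \geq t+1$, i.e.\ $n - 2t - n/3 \ge 1$. This fails for $t$ close to $n/3$, so the chosen threshold must instead be ``at most $t$ elements per interval'' rather than $n/X$. \textbf{This balancing is the main obstacle.}

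I would try to rerun the greedy partition with capacity $t$ per interval. Each value appears at most $t$ times, so this is feasible. Each pair of consecutive intervals then has more than $t$ elements, which gives $O(n/t)$ intervals. That is constant only when $t=\Theta(n)$, which holds in the quorum setting. Each interval then has at least $n-t-t\ge t+1$ honest signers, so every threshold signature forms and DISAGREE is returned by round $4$.
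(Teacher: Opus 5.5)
Your argument is essentially the paper's proof. After the $2\Delta$ wait the leader holds every honest message; an accumulator with $t+1$ signatures leads to the AGREE branch with $n-t$ honest RECEIVED replies by round $4$, and otherwise a greedy partition with at most $t$ entries per interval leaves at least $n-2t\ge t+1$ honest signers per interval, again by round $4$.

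The points you leave open are settled by the standing optimal-resilience assumption $n\le 3t+3$ (e.g.\ the quorum of $9t+1$ nodes tolerating $3t$), not by finding extra shares. It gives $t+1\ge n/3>n/4$, so the shares decode. The pairing bound $\lfloor |\mathcal{I}|/2\rfloor\,(t+1)\le n$ then gives $|\mathcal{I}|\le 7$, as the parties require. Your dismissal of $X=3$ is only a rounding slip, since $n-t-\lfloor n/3\rfloor\ge t+1$ for every $t<n/3$.
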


\begin{proof}
    Assume we are after GST and the leader is honest. If the leader receives any valid value $t+1$ times, it will be able to get a threshold signature on it and will return the agreement proof within $4$ rounds.

    In the other case, it means every value appeared at most $t < n/3$ times. The interval partition $\mathcal{I}$ is done using \cref{lemma:input-inter-partition}. This implies for any interval $I \in \mathcal{I}$ that at most $t$ honest parties have their input in $I$, so at least $n - t - t \geq t+1$ honest parties have their input outside $I$, so they will send partial signature for it and the leader will be able to get its disagreement evidence within $4$ rounds. 
\end{proof}

\begin{lemma}\label{lemma:gst-retrieval-bit-complexity}
    After GST, non-leader parties send together at most $\mathcal{O}(n \cdot \kappa + L)$ bits while the leader sends at most $\mathcal{O}(n \cdot(L + \kappa))$ bits.
\end{lemma}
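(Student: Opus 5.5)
The bound follows by going through the messages in \textsc{RetrievalPartyGST} and \textsc{RetrievalLeaderGST} one by one, charging each to its sender, and using that the protocol has a constant number of steps. Every message type is sent at most once per instance along each link.

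\textbf{Non-leader parties.} Each honest party sends at most three messages to the leader, and nothing to anyone else.
\begin{itemize}
\item The first message is $(acc, tsign(acc), s_i)$. The accumulator and the partial signature take $\mathcal{O}(\kappa)$ bits. The share $s_i$ comes from $encode_n$, so it has size $\mathcal{O}(L/n + \kappa)$.
\item The second message, sent in response to $(RANGES, \mathcal{I})$, contains at most $|\mathcal{I}|$ pairs, each an interval and a partial signature. An honest party only answers if $|\mathcal{I}| \le 7$, and \cref{lemma:input-inter-partition} with constant $X$ gives the same constant bound for an honest leader. Each interval is described by two endpoints in $[\![0, 2^\kappa - 1]\!]$, so this message has $\mathcal{O}(\kappa)$ bits.
\item The third message is a single partial signature on $(RECEIVED, acc)$, of $\mathcal{O}(\kappa)$ bits. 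The party sends it at most once, namely upon the first valid $INFORM$ ("upon once").
\end{itemize}
Summing over the $n$ parties gives $n \cdot \mathcal{O}(L/n + \kappa) = \mathcal{O}(L + n\kappa)$.

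\textbf{Leader.} The leader takes at most one of the two branches of \textsc{RetrievalLeaderGST}.
\begin{itemize}
\item In the agreement branch it sends $(INFORM, acc, v)$ to every party. This is $n \cdot \mathcal{O}(L + \kappa)$ bits. The threshold signature $\textbf{proof}$, if attached, adds $\mathcal{O}(\kappa)$ per message.
\item In the disagreement branch it broadcasts $(RANGES, \mathcal{I})$ with $|\mathcal{I}| = \mathcal{O}(1)$ intervals. This costs $\mathcal{O}(n\kappa)$.
\item The returned proofs are local outputs and are not sent inside the retrieval protocol.
\end{itemize}
Either way the leader sends $\mathcal{O}(n(L + \kappa))$ bits.

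The one point needing care is the share size in the first party message. Sending full values there would give $\mathcal{O}(nL)$ in total. The claimed $\mathcal{O}(L)$ term relies on $encode_n$ producing shares of size $\mathcal{O}(L/n + \kappa)$, which is where Corollary~\ref{cor:reed-solomon} and the $\mathcal{O}(\kappa)$ witness size of the accumulator are used. The restriction "after GST" does not affect the count: it only matters that each message is sent at most once per instance.
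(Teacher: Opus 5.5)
Your proposal is correct and uses the same message-by-message accounting as the paper: each non-leader sends one $\mathcal{O}(L/n+\kappa)$ share plus $\mathcal{O}(1)$ signatures and intervals, and the leader sends each party either one $\mathcal{O}(L+\kappa)$ $INFORM$ message or one $\mathcal{O}(\kappa)$ $RANGES$ message. One small quibble: \cref{lemma:input-inter-partition} guarantees $|\mathcal{I}| \le 4X+1$, not the bound $7$ that parties check, but this is harmless because you only need $|\mathcal{I}| = \mathcal{O}(1)$ (and you correctly use share size $\mathcal{O}(L/n+\kappa)$, where the paper's proof has the typo $n/L$).
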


\begin{proof}
    We remark that each of the $n-1$ non-leader parties send at most one share and accumulator, of size $\mathcal{O}(n/L + \kappa)$ and a constant amount of signatures of size $\mathcal{O}(\kappa)$ to the leader. So the total message complexity for non-leaders is $\mathcal{O}(L + n \cdot \kappa)$.

    Meanwhile, we remark that $\mathcal{I}$ consists of at most $7$ intervals, which can each be represented with $2\kappa$ bits. Otherwise the leader sends a constant amount of signatures to each party and, in the case of an agreement proof, the whole value. This takes in total $\mathcal{O}(n\cdot (L + \kappa))$ bits.
\end{proof}

\section{Synchronous Retrieval Protocol}\label{section:sync-retrieval-appendix}

\subsection{Pseudocode}

\begin{dianabox}{\textsc{RetrievalLeaderSync}}
\algoHead{Retrieval protocol for the leader in synchrony}
\begin{algorithmic}
    \State Wait for $2\Delta$ units of time
    \State Let $S$ be the multiset of hashes received
    \State Let $A$ be the elements of $S$ which occur at least $n/4$ times
    \State Let $\mathcal{I}$ be a partition of the hash range with $S \setminus A$ using \cref{lemma:input-inter-partition}
    \State Broadcast $(SIGN, A, \mathcal{I})$ to every party
    \State Wait $2\Delta$ units of time
    \If{there exists $(ACCEPT, \textbf{acc})$ which received at least $t+1$ partial signatures $(\rho_p)$ and at least $n/4$ valid shares $(s_p)$}
        \State $\textbf{proof} \gets tcombine(\textbf{acc}, (\rho_p))$
        \State $v \gets decode((s_p))$
        \State Send $(INFORM, \textbf{acc}, v)$ to every party 
        \State Wait for $t+1$ partial signatures $(\rho'_p)$ for $(RECEIVED, \textbf{acc})$
        \State $\textbf{proof'} \gets tcombine((\rho'_p), (RECEIVED, \textbf{acc}))$
        \State \Return $(AGREE,  \textbf{acc}, \textbf{proof}, \textbf{proof'})$
    \EndIf
    
    \State Set party's own input to $\bot$
    \If{Every element in $A$ and $\mathcal{I}$ received at least $t+1$ partial signatures}
        \State Combine all partial signatures into threshold signatures
        \State Make a disagreement evidence \textbf{evidence}
        \State \Return $(DISAGREE, \textbf{evidence})$
    \Else
        \State \Return $\bot$
    \EndIf
\end{algorithmic}
\end{dianabox}

\begin{dianabox}{\textsc{RetrievalPartySync}($\vin$)}
\algoHead{Retrieval protocol for a party $p_i$ in synchrony}
\begin{algorithmic}
    \If{$\vin \ne \bot$}
        \State $(acc, (s_j)) \gets encode(\vin)$
        \State Send $acc$ to the leader
    \Else 
        \State $root \gets \bot$
    \EndIf
    \UponTrue{Received $(A, \mathcal{I})$ with $|A| \leq 4$ and $|\mathcal{I}| \leq 9$ from the leader}
        \State $R \gets \emptyset$
        \For{$I \in \mathcal{I}$ such that $\vin \notin I$}
            \State $R \gets R \cup \{(I, tsign(I))\}$
        \EndFor
        \For{$a \in A$}
            \If{$acc \ne a$}
                \State $R \gets R \cup \{(a, tsign(a))\}$
            \EndIf
            
            \If{$acc = a$}
                \State $R \gets ((ACCEPT, a), tsign((ACCEPT, a)), s_i)$
            \ElsIf{$acc = \bot$}
                \State $R \gets ((ACCEPT, a), tsign((ACCEPT, a)), \bot)$
            \EndIf
        \EndFor
        \State Send $R$ to the leader
    \EndUpon
    \UponTrue{Received valid $(INFORM, \textbf{acc}, v)$ from the leader}
        \State Store $v$
        \State $\rho \gets tsign((RECEIVED, \textbf{acc}))$
        \State Send $\rho$ to the leader
    \EndUpon

\end{algorithmic}
\end{dianabox}

\subsection{Analysis}

\begin{lemma}\label{lemma:set-to-bot}
    If an honest party sets its input to $\bot$, then all honest parties do not have the same input.
\end{lemma}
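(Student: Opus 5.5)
The plan is to trace where \textsc{RetrievalLeaderSync} sets the leader's input to $\bot$, and to show that reaching that line while honest certifies disagreement among honest inputs. We argue by contraposition. Assume all honest parties hold the same non-$\bot$ input $v$ with accumulator $acc_v$. We then show that an honest leader always reaches the \textsc{AGREE} return, so it never executes the line ``Set party's own input to $\bot$''. There is one base case to settle first: $\bot$ inputs arise only through this very line. We therefore run an induction over the order in which honest parties drop their inputs. If the first honest drop happened in a run where all honest inputs were equal, then at that moment every honest input is still $v$, and that contradicts the argument below. So it suffices to handle the case where all honest inputs equal $v$ at the start of the instance.

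The core step uses synchrony and $t<n/2$. Every honest party sends $acc_v$ to the leader in the first $\Delta$, so the honest leader sees $acc_v$ at least $n-t > n/2 \geq n/4$ times in $S$, and hence $acc_v \in A$. Note that $|A|\le 4$ because each element of $A$ has multiplicity at least $n/4$. The bound $|\mathcal{I}|\le 9$ follows from \cref{lemma:input-inter-partition} with $X=4$ applied to $S\setminus A$; I would double-check the constant against $4X+1$. With both bounds in place, every honest party accepts the $(SIGN,A,\mathcal{I})$ message. Because its accumulator equals $acc_v \in A$, each honest party returns $((ACCEPT,acc_v), tsign(\cdot), s_i)$ within the next $2\Delta$. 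The leader thus collects at least $n-t\ge t+1$ partial signatures on $(ACCEPT,acc_v)$ and at least $n-t \ge n/4$ valid shares, which are checked against $acc_v$. The \textsc{AGREE} branch is therefore taken, and the leader returns before reaching the line that sets its input to $\bot$.

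The main obstacle is the interplay with the rest of the view protocol. Inputs might be set to $\bot$ in other places, and $acc=\bot$ parties also send ACCEPT signatures, so the induction has to be phrased over the protocol's whole history. The formal statement should read: ``the first time any honest party sets its input to $\bot$, honest inputs were not all equal,'' after which the lemma follows because inputs never change back. A second subtlety is that the pseudocode reassigns $R$ in the ACCEPT branch and overwrites earlier entries. I would read this as adding to $R$, which is all the argument needs: the ACCEPT signature and share for $acc_v$ are sent. Finally, I would note that a Byzantine leader is not relevant, since the lemma only concerns honest parties setting their own input.
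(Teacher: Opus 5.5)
Your proposal is correct and follows the same argument as the paper. You take the first honest drop, observe that the common accumulator occurs $n-t\ge n/4$ times and so lies in $A$, and conclude that the $n-t\ge t+1$ honest ACCEPT signatures and shares force the \textsc{AGREE} branch. The constant you flagged is not a problem either: with $X=4$, each non-final greedy interval together with its successor holds more than $n/X$ elements, so the partition has at most $2X-1=7\le 9$ intervals, and honest parties do respond to the honest leader's $(SIGN,A,\mathcal{I})$ message.
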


\begin{proof}
    Consider the first occurrence where an honest party $p$ sets its input to $\bot$. It does so as a leader after failing to retrieve $t+1$ partial signatures and $n/4$ valid shares for a value. By contradiction, assume that all honest parties agree on the same value $v$ and let $acc$ be its accumulator. Because there are at least $n-t \geq t + 1 \geq n/4$ honest parties, the leader must have received $acc$ at least $n/4$ times and thus added it to $A$. This means that every honest party would have sent a threshold signature and share for $v$, hence the leader would have been able to get an agreement evidence and would not set its input to $\bot$, hence the contradiction.
\end{proof}

\begin{lemma}\label{lemma:sync-retrieval-parties-know}
    If the retrieval protocol returns an agreement evidence for a value $v$, then it satisfies strong unanimity. Moreover, at least $t+1-f$ honest parties know $v$.
\end{lemma}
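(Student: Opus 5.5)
The statement has two parts, and I plan to follow the template of \cref{lemma:gst-retrieval-agreement}, adapted to the synchronous retrieval protocol \textsc{RetrievalLeaderSync}/\textsc{RetrievalPartySync}. An agreement evidence has two components. The first is a $(t+1)$-threshold signature on $(ACCEPT, \textbf{acc})$. The second is a $(t+1)$-threshold signature on $(RECEIVED, \textbf{acc})$. I will handle strong unanimity using the first component and the knowledge claim using the second.

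\emph{Strong unanimity.} Suppose all honest parties have the same input $v'$. I need to show that $v = v'$, i.e.\ that $\textbf{acc}$ is the accumulator of $v'$. In \textsc{RetrievalPartySync}, an honest party contributes a partial signature on $(ACCEPT, a)$ in only two cases: its accumulator equals $a$, or its input is $\bot$. By \cref{lemma:set-to-bot}, if any honest party has set its input to $\bot$, then the honest parties do not all share the same input, and strong unanimity holds vacuously. So under the unanimity hypothesis, no honest party has input $\bot$. Among the $t+1$ signers of $(ACCEPT, \textbf{acc})$, at least one is honest. That honest signer's accumulator is $\textbf{acc}$, and it equals the accumulator of $v'$. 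By the binding property of the accumulator (no feasible collision), $\textbf{acc}$ is the accumulator of $v'$ and of no other value, so $v = v'$.

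There is also the concern that the evidence might be associated with the wrong value. The leader decodes $v$ from at least $n/4$ valid shares, and each share is checked against $\textbf{acc}$ by its witness. By the security of the accumulator, the decoded value is therefore the unique value committed by $\textbf{acc}$.

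\emph{Knowledge.} An honest party signs $(RECEIVED, \textbf{acc})$ only after receiving a valid $(INFORM, \textbf{acc}, v)$ and storing $v$. The $(t+1)$-threshold signature has $t+1$ signers. At most $f$ of them are Byzantine, so at least $t+1-f$ honest parties stored $v$.

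\emph{Main obstacle.} The delicate step is the $\bot$ case in strong unanimity. Parties with input $\bot$ sign $ACCEPT$ for every $a \in A$, so they could help certify an arbitrary accumulator. The argument has to rely on \cref{lemma:set-to-bot} to rule out such parties whenever the unanimity hypothesis holds. I would carefully check that this lemma applies to all earlier instances of the protocol, not only to the current one. I would also confirm that ``valid'' for $INFORM$ means that $v$ is checked against $\textbf{acc}$, so that honest signers really know the right $v$.
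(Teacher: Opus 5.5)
Your proof is correct and follows the paper's argument. The $(ACCEPT, \textbf{acc})$ signature has an honest signer whose input is either $v$ or $\bot$, and you dispatch the $\bot$ case with \cref{lemma:set-to-bot}; the $(t+1)$-threshold $(RECEIVED, \textbf{acc})$ signature then gives $t+1-f$ honest parties that stored $v$, and your contrapositive framing (assume unanimity, then rule out $\bot$ signers) is equivalent to the paper's case split.
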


\begin{proof}
    The first part of the agreement evidence is a $t+1$ threshold signature that an honest party only sign if its input value is $v$ or $\bot$. Therefore, at least one honest party has $v$ or $\bot$ as input. If the input is $v$, this value satisfies strong unanimity. If the input it $\bot$, using \cref{lemma:set-to-bot}, the value also satisfies strong unanimity. Moreover, the second part of the agreement proof is a $t+1$ threshold signature signed by parties which know $v$. This implies that at least $t+1-f$ honest parties know $v$.
\end{proof}

\begin{lemma}\label{lemma:sync-retrieval-disagreement}
    If the retrieval protocol returns a disagreement evidence, then all honest parties do not have the same input.
\end{lemma}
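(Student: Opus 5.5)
We argue by contradiction, in the same way as \cref{lemma:gst-retrieval-disagreement}, but now accounting for both parts of the synchronous evidence of disagreement: the set $A$ of accumulators and the partition $\mathcal{I}$. Suppose all honest parties hold the same input $v$, and let $acc$ be its accumulator. We first need to know that no honest party has replaced its input by $\bot$. This follows from the contrapositive of \cref{lemma:set-to-bot}: if some honest party had set its input to $\bot$, the honest inputs would not all be equal, contradicting our assumption. So every honest party still has input $v$, and during the instance in question each honest party signs using accumulator $acc$.

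Next we check that no honest party signs the element of the evidence that covers $acc$. There are two cases, according to where $acc$ falls.

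\textbf{Case 1: $acc \in A$.} By the \textsc{RetrievalPartySync} rule, an honest party signs $a \in A$ only when its accumulator differs from $a$. Every honest party has accumulator exactly $acc$, so no honest party signs $acc$.

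\textbf{Case 2: $acc \notin A$.} Since $\mathcal{I}$ partitions the whole hash range, some interval $I \in \mathcal{I}$ contains $acc$. An honest party signs an interval $I$ only when its accumulator is not in $I$ (modulo the $I\setminus S$ notation in the definition). Since every honest accumulator equals $acc \in I$, no honest party signs $I$.

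The evidence contains a $(t+1)$-threshold signature on every element of $A$ and every interval of $\mathcal{I}$. In either case, the element covering $acc$ can carry partial signatures from at most $f \le t$ Byzantine parties, so by the unforgeability of the threshold scheme its $(t+1)$-threshold signature cannot exist. This contradicts the evidence being valid.

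The main obstacle is the interplay with $\bot$: we need honest parties that reached $\bot$ (and hence sign everything) to be ruled out. Invoking \cref{lemma:set-to-bot} at the start handles this cleanly, since the lemma covers the first time any honest party sets its input to $\bot$. A second subtlety is that evidence may have been built by a Byzantine leader, with $A$ and $\mathcal{I}$ chosen adversarially. Validity of the evidence still requires $\mathcal{I}$ to be a genuine partition of the hash range, and that is all Case 2 uses.
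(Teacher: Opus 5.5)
Your proof is correct and follows the paper's argument essentially step for step: you use \cref{lemma:set-to-bot} to rule out $\bot$-inputs, then split on whether $acc \in A$ or $acc \notin A$, and observe that the covering element receives no honest partial signature, so its $(t+1)$-threshold signature cannot exist. In Case 2 the paper states the key fact as $acc \in I \setminus A$, and since $acc \notin A$ there, your parenthetical about the $I\setminus S$ notation does not affect the argument.
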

\begin{proof}
    Assume by contradiction that all honest parties have the same input with accumulator $acc$ and the retrieval protocol returned a disagreement evidence with sets $A$ and $\mathcal{I}$. Using \cref{lemma:set-to-bot}, no party set its input to $\bot$. We have two cases:
    \begin{itemize}
        \item $acc \in A$, then no honest party will send a partial signature for the element $acc$.
        \item $acc \notin A$. Then, because $\mathcal{I}$ partitions the input space, we can find $I \in \mathcal{I}$ such that $acc \in I$. Thus $acc \in I \setminus A$, so no honest party will send a partial signature for $I$.
    \end{itemize}
    In both cases, no honest party will send a partial signature for an element in the disagreement signature, meaning it won't be possible to make a $(t+1)$-threshold signature for one of its element, hence  the disagreement evidence is not valid.
\end{proof}

\begin{lemma}\label{lemma:sync-retrieval-run-different-leaders}
    If $f < n/4$, when run by $f+1$ different honest leaders, at least one honest leader will get an agreement or disagreement evidence.
\end{lemma}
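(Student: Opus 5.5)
The plan is a potential argument on the number of honest parties whose input is $\bot$. In synchrony with an honest leader, every honest party receives $(SIGN,A,\mathcal{I})$ and replies in time, so the leader sees every honest partial signature. Let $n_\bot$ be the number of honest parties with input $\bot$ when the leader starts. For an accumulator $a$, let $n_a$ be the number of honest non-$\bot$ parties with accumulator $a$. There are $n-f$ honest parties in total. I will first show that an honest leader can fail, i.e.\ return $\bot$, only if $n_\bot < f$. I will then show that each failing honest leader permanently increases $n_\bot$ by one.

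\emph{Intervals never cause failure.} By \cref{lemma:input-inter-partition}, with the threshold chosen as $n/4$, each $I\in\mathcal{I}$ contains at most $n/4$ elements of $S\setminus A$. This includes the honest ones, since the leader received all honest hashes. An honest party refuses to sign $I$ only if its accumulator lies in $I\setminus A$. Hence at least $n-f-n/4 > n/2 > t$ honest parties sign $I$, using $f<n/4$. So every interval gets $t+1$ partial signatures.

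\emph{An element $a\in A$ causes failure only in a narrow window.} Honest refusers for $a$ are exactly the $n_a$ parties holding $a$; $\bot$-parties sign everything. So the element $a$ fails only if $n-f-n_a\le t$, i.e.\ $n_a \ge n-f-t$. Suppose this holds. Then $n_a \ge n - n/4 - n/2 = n/4$, so the leader receives at least $n/4$ valid shares for $a$. The $(ACCEPT,a)$ signers are the $n_a+n_\bot$ honest parties holding $a$ or $\bot$. So if also $n_a+n_\bot\ge t+1$, the leader obtains an agreement proof and decodes $v$. It then collects $t+1$ $(RECEIVED,\textbf{acc})$ signatures, since at least $n-f>t$ honest parties answer the $INFORM$, and so it succeeds. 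Therefore a failure forces both $n_a\ge n-f-t$ and $n_a+n_\bot\le t$. Combining these gives $n_\bot \le t-n_a \le 2t+f-n < f$, using $2t<n$. Contrapositively, if $n_\bot\ge f$ when an honest leader starts, that leader returns an agreement or disagreement evidence.

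\emph{Counting.} Every honest leader that returns $\bot$ first executes ``set party's own input to $\bot$''. Inputs are never reset from $\bot$ to non-$\bot$, so the $\bot$ set only grows. Since the $f+1$ leaders are distinct honest parties, if the first $f$ of them all fail, then $n_\bot\ge f$ when the $(f+1)$-st starts. By the previous paragraph, that leader succeeds.

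The main obstacle I expect is the middle step: making the inequality chain $n_a\ge n-f-t \Rightarrow n_a\ge n/4$ and $n_\bot<f$ tight. This requires handling the case where the failing leader was already $\bot$. In that case it is still counted once, because each leader is a distinct party and is counted as $\bot$ from its own failure onward. It also requires checking that Byzantine-injected hashes in $S$ can only enlarge $A$ or shift interval boundaries. Neither of these affects the honest-signer counts used above.
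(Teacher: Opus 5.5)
Your proof is correct and follows the paper's argument. The paper's potential $nb_0+nb_1=(n-f-n_a)+(n_a+n_\bot)$ grows by one per $\bot$-party, and its pigeonhole step on $n>2t$ is exactly your inequality $n_\bot\le 2t+f-n<f$; your interval step is identical to the paper's. Two small differences: you derive the share bound $n_a>n/4$ from the failure of the ``not $a$'' signature rather than from $nb_1\ge t+1$ with exactly $f$ $\bot$-parties, so you only need $n_\bot\ge f$, and you explicitly check the $(RECEIVED,\textbf{acc})$ step that the paper glosses over.
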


\begin{proof}
    Assume that $f < n/4$ and the protocol was already run by $f$ honest leaders and always returned $\bot$, we want to show that the next honest leader will always return a proof. We note that when a leader returns bot, it always sets its input to $\bot$, so $f$ parties set their input to $\bot$.

    Let $a \in A$, we want to show that the leader will be able to get a $(t+1)$-threshold signature for having $a$ with at least $n/4$ shares or a $(t+1)$-threshold signature for not having $a$. Let $nb_1$ be the honest parties which would send a partial signature saying they have $a$ and $nb_0$ the honest parties which would send a partial signature saying they don't have $a$. Before the retrieval protocol is run for the first time, we have $nb_1 + nb_0 = n-f$ (every honest party sends a partial signature for at least one of them). We remark that if a party sets its input to $\bot$, then it will send a partial signature for both, meaning $nb_1 + nb_0$ will increase by $1$. Therefore, after $f$ distinct honest parties set their input to $\bot$, we have $nb_1 + nb_0 \geq n - f + f = n = 2t + 1$. Therefore, using the pigeonhole principle, either $nb_0 \geq t+1$ in which case the leader can get a threshold signature for $a$ for the proof of disagreement, either $nb_1 \geq t+1$. In this latter case, we remark that $f$ honest parties have set their input to $\bot$ so far, which implies that $nb_1 - f$ parties have accumulator $acc$ and thus sent a share along with the partial signature. We have $nb_1 - f \geq t + 1 - n/4 \geq n/4$. Hence, the leader will be able to reconstruct the input, get the threshold signature for it and get an agreement evidence for it.

    Let $I \in \mathcal{I}$, we want to show the leader will be able to get a $(t+1)$-threshold signature for $I$. We remark that by construction of $I$, less than $n/4$ honest parties have their accumulator contained in $I \setminus A$. Hence the leader will receive at least $n - f - n/4$ partial signatures from honest parties. Because we assumed $f < n/4$, the leader receives strictly more than $n - n/4 - n/4$ partial signatures, so at least $t+1$ partial signatures and thus will be able to make a threshold signature out of it.

    As a consequence, either the leader will be able to get an agreement evidence, or it will get all threshold signatures necessary to get a disagreement evidence.
\end{proof}

\begin{lemma}\label{lemma:sync-leader-bits}
    Non-leader parties send together at most $\mathcal{O}(n \cdot \kappa + L)$ bits while the leader sends at most $\mathcal{O}(n \cdot(L + \kappa))$ bits. Moreover, if the leader is honest and $f < n/4$, then the retrieval protocol terminates in at most $6$ rounds.
\end{lemma}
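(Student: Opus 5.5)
The plan is to prove the statement by walking through \textsc{RetrievalLeaderSync} and \textsc{RetrievalPartySync} message by message. I will bound the bits sent by each role separately, and then count the synchronous waiting steps. Throughout I use that the leader's broadcast $(SIGN, A, \mathcal{I})$ has constant size. The set $A$ contains only accumulators occurring at least $n/4$ times in a multiset of at most $n$ values, so $|A| \leq 4$. \Cref{lemma:input-inter-partition} applied to $S \setminus A$ with $X = 4$ (every remaining value occurs fewer than $n/4$ times) yields $\mathcal{O}(1)$ intervals, and the guard in \textsc{RetrievalPartySync} enforces $|\mathcal{I}| \leq 9$. Each accumulator or interval endpoint takes $\mathcal{O}(\kappa)$ bits, so the message has $\mathcal{O}(\kappa)$ bits.

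\textbf{Non-leader bits.} Each honest non-leader party sends three kinds of messages. First, one accumulator of $\mathcal{O}(\kappa)$ bits. Second, one response $R$ containing at most $|A| + |\mathcal{I}| \leq 13$ partial signatures, each $\mathcal{O}(\kappa)$, and at most one share $s_i$, which has size $\mathcal{O}(L/n + \kappa)$ by the definition of $\textbf{encode}_n$. Only one share is ever attached, since a party's accumulator equals at most one $a \in A$. Third, at most one partial signature for $(RECEIVED, \textbf{acc})$; I will argue it is sent at most once per instance because an honest leader sends a single $INFORM$. Summing over $n-1$ parties gives $\mathcal{O}(n\kappa + n \cdot L/n) = \mathcal{O}(n\kappa + L)$.

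\textbf{Leader bits.} The leader sends the constant-size $(SIGN, A, \mathcal{I})$ to all $n$ parties, costing $\mathcal{O}(n\kappa)$. In the agreement branch it additionally sends $(INFORM, \textbf{acc}, v)$ to every party, costing $\mathcal{O}(n(L + \kappa))$. The disagreement and $\bot$ branches send nothing further, because the combination of threshold signatures is local. The total is therefore $\mathcal{O}(n(L+\kappa))$.

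\textbf{Round bound.} I count the timeline of an honest leader. Accumulators arrive within the first $2\Delta$ wait. The $SIGN$ broadcast and the responses fit in the second $2\Delta$ wait, which takes the count to 4 rounds. In the agreement branch, sending $INFORM$ and collecting the $RECEIVED$ partial signatures takes 2 more rounds, for 6 in total; the disagreement and $\bot$ branches return immediately after round 4.

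The only delicate point is that the wait for $t+1$ partial signatures on $RECEIVED$ must actually terminate. The $ACCEPT$ threshold signature was formed from $t+1$ partial signatures, but only the honest signers among them are guaranteed to receive $INFORM$, and there may be fewer than $t+1$ of them. The argument I would try instead mirrors the counting in \cref{lemma:sync-retrieval-run-different-leaders}. In synchrony with an honest leader, every honest party receives $INFORM$ within one round and replies. There are $n - f > 3n/4 \geq t+1$ honest parties because $f < n/4$ and $t < n/2$. Hence the $t+1$ signatures on $RECEIVED$ arrive within one further round, which gives the 6-round bound. This is where the hypothesis $f < n/4$ is used.
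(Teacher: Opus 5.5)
Your proposal is correct and follows the paper's proof. Both bound each role's bits separately: a non-leader sends one accumulator, one share of size $\mathcal{O}(L/n+\kappa)$ and $\mathcal{O}(1)$ signatures, while the leader sends a constant-size broadcast plus the full value to everyone; both then count three leader--party round trips to get the 6-round bound.

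Two small corrections. The fact that a party sends at most one $RECEIVED$ signature and one response $R$ comes from the ``upon once'' triggers, not from the honest leader sending a single $INFORM$, and only the triggers cover a Byzantine leader. Also, the $RECEIVED$ wait terminates simply because $n-f \ge n-t \ge t+1$, so that step does not really use $f<n/4$.
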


\begin{proof}
    We remark that each of the $n-1$ non-leader parties send at most one share and accumulator, of size $\mathcal{O}(n/L + \kappa)$ and a constant amount of signatures of size $\mathcal{O}(\kappa)$ to the leader. So the total message complexity for non-leaders is $\mathcal{O}(L + n \cdot \kappa)$.

    Meanwhile, we remark that $\mathcal{I}$ consists of at most $9$ intervals, which can each be represented with $2\kappa$ bits. Otherwise the leader sends a constant amount of signatures to each party and, in the case of an agreement proof, the whole value. This takes in total $\mathcal{O}(n\cdot (L + \kappa))$ bits.

    Regarding round complexity, in the worst case the protocol consists of $3$ back-and-forths between the leader and parties, so it terminates within $6$ rounds.
\end{proof}

\section{Partially Synchronous Quorum Agreement Analysis}\label{section:gst-quorum-appendix}

\begin{lemma}\label{lemma:gst-quorum-round}
    \textsc{ViewByzantineAgreement} returns the same agreement or disagreement evidence to all honest parties within $\mathcal{O}(f)$ rounds after GST, and $proof$ is a certificate for it.
\end{lemma}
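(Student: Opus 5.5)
The plan is to treat \textsc{ViewByzantineAgreement} from \cite{constantinescu2025few} as a black box. We re-verify only the hypotheses it places on its retrieval subroutine, and check that \textsc{RetrievalLeaderGST}/\textsc{RetrievalPartyGST} satisfy them. In \cite{constantinescu2025few}, the view-based protocol is shown to return the same certified output to all honest parties within $\mathcal{O}(f)$ rounds after GST, provided three things hold. First, the retrieval procedure terminates within a constant number of rounds under an honest leader after GST. Second, whatever it returns is a self-verifying proof satisfying the validity predicate. Third, the usual key/lock/commit machinery is used to propagate the proof. So the proof reduces to establishing these properties for our retrieval protocol and then invoking their theorem.

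The first step is to show that our procedure is a retrieval protocol in the sense of the adapted definition. \textsc{RetrieveLeader} returns either $\bot$ or a proof by construction. By \cref{lemma:gst-honest-leader}, any honest leader after GST obtains an agreement or disagreement evidence within $4$ rounds. Hence among $f+1$ honest leaders after GST, in fact the first one already succeeds. This gives the liveness hypothesis with a constant per-view cost, matching the $\Theta(\Delta)$ view length from the model section.

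The second step is the validity predicate, namely that every returned evidence is a genuine certificate. For agreement evidence, \cref{lemma:gst-retrieval-agreement} shows that the value behind $acc$ is some honest party's input, so it respects strong unanimity. For disagreement evidence, \cref{lemma:gst-retrieval-disagreement} shows that honest inputs differ, so $\ast$ is admissible. Both objects are threshold signatures, verifiable by anyone, so they are valid $proof$s in the sense required by the view protocol's locking and commit checks.

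The third step is to invoke the main theorem of \textsc{ViewByzantineAgreement}. Its agreement comes from the key/lock/commit quorum intersection with $n-t$ thresholds under $t<n/3$. Its $\mathcal{O}(f)$ round bound comes from the fact that, after GST and with perfectly synchronized clocks, at most $f$ views have Byzantine leaders before an honest leader's view, and each view lasts $\mathcal{O}(1)$ rounds. Together these yield that all honest parties output the same evidence, certified by the commit proof.

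I expect the main obstacle to be the interface mismatch. The original protocol handled a binary value with fixed-size proofs, whereas here the evidence is either an accumulator with two threshold signatures or a disagreement evidence of $\mathcal{O}(1)$ threshold signatures. I would argue that only the proof size, $\mathcal{O}(\kappa)$, and its verifiability are used in their agreement argument, not the binary nature of the value, so their argument carries over verbatim.
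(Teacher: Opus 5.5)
Your proposal is correct and takes essentially the same approach as the paper. The paper also treats \textsc{ViewByzantineAgreement} as a black box: agreement comes from commit-proof uniqueness (Lemma B.3 of \cite{constantinescu2025few}), the $\mathcal{O}(f)$ bound from its Corollary B.8, and the committed value is always a non-$\bot$ retrieval output, hence an agreement or disagreement evidence; your explicit check via \cref{lemma:gst-honest-leader} that the retrieval hypothesis holds is left implicit there, and the view protocol's quorum threshold is $k=\lceil (n+t+1)/2\rceil$ rather than $n-t$, which does not affect your argument.
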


\begin{proof}
    This is a consequence of \cite[Lemma B.3, Corollary B.8]{constantinescu2025few}. To be more specific, $proof$ is a commit proof for $value$ so \cite[lemma B.3]{constantinescu2025few} guarantees that all honest parties will agree on it. And \cite[Corollary B.8]{constantinescu2025few} guarantees that \textsc{ViewByzantineAgreement} returns within $\mathcal{O}(f)$ rounds. We note that the value passed agreed on in the protocol is always a non-$\bot$ output of the retrieval protocol, meaning using our retrieval protocol it is either an agreement or disagreement evidence.
\end{proof}

\begin{lemma}\label{lemma:gst-quorum-validiy}
    If a party decides a value, then it satisfies strong unanimity.
\end{lemma}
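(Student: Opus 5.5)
The plan is to case-split on the branch of \textsc{QuorumGST} in which the party decides. The only possible decisions are $(\ast, proof)$ when the evidence returned by \textsc{ViewByzantineAgreement} is a disagreement evidence, and $(v, proof)$ when it is an agreement evidence for some accumulator $acc$ and $v$ was received in an $INFORM$ message with accumulator $acc$. By \cref{lemma:gst-quorum-round}, the evidence agreed on is always a non-$\bot$ output of \textsc{RetrievalLeaderGST} run by some leader. Byzantine leaders may have produced it, so I will argue only from the cryptographic validity of the evidence, not from honest behaviour of its producer.

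In the disagreement case, the evidence is a partition $\mathcal{I}$ together with $(t+1)$-threshold signatures for every $I\in\mathcal{I}$. \Cref{lemma:gst-retrieval-disagreement} shows that such a valid object can only exist if not all honest parties have the same input. Its proof uses only that each threshold signature needs at least one honest signer, so it applies whoever assembled the evidence. Strong unanimity is therefore vacuous, and deciding $\ast$ is admissible.

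In the agreement case, the evidence contains a $(t+1)$-threshold signature on $acc$. At least one honest party partially signed $acc$, and an honest party signs only the accumulator of its own input. By \cref{lemma:gst-retrieval-agreement}, the value associated with $acc$ is the input of some honest party, and so satisfies strong unanimity. It remains to check that the decided $v$ really is that value. The party decides $v$ only after receiving a valid $INFORM$ carrying $acc$, and it checks that $v$ encodes to $acc$. By the binding property of the accumulator (equivalently, second-preimage resistance of the hash/accumulator), a computationally bounded adversary cannot exhibit a different $v'$ with the same accumulator. Hence $v$ is the honest input.

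I expect the main obstacle to be this last step: making explicit that the $INFORM$ value is validated against $acc$, and that the argument behind \cref{lemma:gst-retrieval-agreement} only needs the threshold signature and not an honest leader. Once both points are stated, combining the two cases gives that any decided value is admissible under strong unanimity, except with negligible probability.
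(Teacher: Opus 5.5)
Your proposal is correct and follows the paper's proof: the same case split on disagreement versus agreement evidence, settled by \cref{lemma:gst-retrieval-disagreement} and \cref{lemma:gst-retrieval-agreement} respectively. Your extra remarks make explicit two points the paper leaves implicit: that these lemmas need only the validity of the $(t+1)$-threshold signatures (so they cover evidence assembled by a Byzantine leader), and that the $INFORM$ value is bound to $acc$ by the accumulator.
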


\begin{proof}
    If a party decides $\ast$, then it receives a disagreement proof, which means using \cref{lemma:gst-retrieval-disagreement} that $\ast$ can be decided. It decides a value $v$, then it receives an agreement proof for $v$, which implies using \cref{lemma:gst-retrieval-agreement} that $v$ satisfies strong unanimity.
\end{proof}

\begin{lemma}\label{lemma:gst-quorum-partial-termination}
    \textsc{QuorumGST} satisfies partial termination in $\mathcal{O}(f)$ rounds.
\end{lemma}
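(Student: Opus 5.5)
We argue from \cref{lemma:gst-quorum-round}: within $\mathcal{O}(f)$ rounds after GST, every honest party obtains from \textsc{ViewByzantineAgreement} the same evidence, together with a commit proof. We then split on the type of this evidence.

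\textbf{Case 1: disagreement evidence.} Every honest party immediately decides $\ast$ with its proof. So all $n-f \geq n/4$ honest parties decide within $\mathcal{O}(f)$ rounds after GST, and this case is done.

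\textbf{Case 2: agreement evidence for an accumulator $acc$.} A party decides only if it stored a value $v$ with accumulator $acc$ from an $INFORM$ message. We therefore count how many honest parties are guaranteed to hold such a value.
\begin{itemize}
\item By the structure of \textsc{ViewByzantineAgreement}, the committed value is a non-$\bot$ output of \textsc{RetrievalLeaderGST} returned to some leader.
\item An $AGREE$ output contains $\textbf{proof'}$, an $(n-t)$-threshold signature on $(RECEIVED, acc)$.
\item By \cref{lemma:gst-retrieval-agreement}, at least $n-2t \geq t+1$ honest parties signed $(RECEIVED, acc)$, and an honest party signs this only after storing $v$ from a valid $INFORM$ message.
\end{itemize}
Since $t < n/3$, we have $n - 2t > n/3 \geq n/4$. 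Hence at least $n/4$ honest parties know $v$, and all of them decide $(v, proof)$ once they receive the agreement evidence, again within $\mathcal{O}(f)$ rounds.

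\textbf{Subtlety: timing.} The $INFORM$ messages could have been received long ago, for example before GST or in an earlier view, by a leader that was Byzantine but nonetheless collected a valid $\textbf{proof'}$. This causes no problem. The stored value persists, and the signers are fixed by the threshold signature regardless of when they signed. So the number of deciding parties depends only on who signed $\textbf{proof'}$, not on when.

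The step needing the most care is confirming that the agreement evidence carried in the commit is exactly an $AGREE$ tuple containing a valid $\textbf{proof'}$. A Byzantine leader could try to commit an agreement evidence lacking $\textbf{proof'}$; this must be ruled out by the validity predicate that honest parties check before locking or committing. I would state explicitly that this predicate requires both threshold signatures, and then the counting above goes through.
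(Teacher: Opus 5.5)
Your proof is correct and follows the paper's argument: the same case split on the evidence returned by \textsc{ViewByzantineAgreement} (via \cref{lemma:gst-quorum-round}), with all honest parties deciding $\ast$ in the disagreement case, and \cref{lemma:gst-retrieval-agreement} supplying enough honest $INFORM$ recipients in the agreement case. Two of your details improve on the paper. Your count $n-2t > n/3 \geq n/4$ holds for every $t < n/3$, whereas the paper's ``$t+1 \geq n/3$'' silently assumes near-optimal resilience. Your remark that the commit-validity check must require the $(n-t)$-threshold $\textbf{proof'}$ makes explicit an assumption the paper leaves implicit.
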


\begin{proof}
    As proven above, all honest parties receive the same agreement or disagreement proof within $\mathcal{O}(f)$ rounds. If it is a disagreement proof, all parties will decide $\bot$, because there are $n-t \geq 2n/3 \geq n/4$ honest parties, it satisfies partial termination.

    If it is an agreement proof, using \cref{lemma:gst-retrieval-agreement}, at least $t+1 \geq n/3 \geq n/4$ honest parties got the full value matching this accumulator in an $INFORM$ message and therefore will decide it, so partial termination is also satisfied.
\end{proof}

\begin{lemma}\label{lemma:gst-quorum-bit}
    \textsc{QuorumGST} has $\mathcal{O}(n \cdot(L + f\cdot \kappa))$ bit complexity. 
\end{lemma}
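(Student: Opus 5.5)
The plan is to split the communication of \textsc{QuorumGST} into two parts: the view machinery of \textsc{ViewByzantineAgreement} from \cite{constantinescu2025few}, and the retrieval instances run inside it. The local post-processing in \textsc{QuorumGST} sends no messages, so only these two parts count. For the view machinery I would cite the bit-complexity analysis of \cite{constantinescu2025few}. Apart from retrieval calls, each view there has honest parties exchange $\mathcal{O}(n)$ messages, each carrying $\mathcal{O}(1)$ threshold signatures, plus a value of size $\mathcal{O}(\kappa)$. Here the agreed object is an agreement or disagreement evidence: at most $\mathcal{O}(1)$ threshold signatures and at most $7$ intervals, each encoded with $\mathcal{O}(\kappa)$ bits, but never the $L$-bit value itself. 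So each view costs $\mathcal{O}(n\kappa)$ bits. By \cref{lemma:gst-quorum-round} together with \cite[Corollary B.8]{constantinescu2025few}, only $\mathcal{O}(f)$ views are executed after GST, which gives $\mathcal{O}(n\cdot f\cdot\kappa)$ for the view machinery.

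For the retrieval instances I would use \cref{lemma:gst-retrieval-bit-complexity}. In each instance, non-leaders together send $\mathcal{O}(n\kappa + L)$ bits, and the leader sends $\mathcal{O}(n(L+\kappa))$ bits. Summing naively over $\mathcal{O}(f)$ views would give $\mathcal{O}(n f L)$, which is too large.

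The key step, and the main obstacle, is to bound how often an honest leader actually broadcasts the full value in an $INFORM$ message. Only in that branch does an honest leader pay $n\cdot L$; otherwise it sends $\mathcal{O}(n\kappa)$ bits, since the $RANGES$ message has size $\mathcal{O}(\kappa)$. The bound comes from the structure of \cite{constantinescu2025few}: a leader only calls \textsc{RetrieveLeader} when it does not already hold a valid proof or lock forwarded from earlier views. After GST, the first honest leader whose retrieval succeeds gets its evidence, and \cite[Lemma B.3]{constantinescu2025few} then makes all subsequent honest leaders re-propose that evidence rather than retrieve anew. Before GST nothing is counted. Hence after GST only $\mathcal{O}(1)$ honest-leader retrieval instances reach the $INFORM$ branch, contributing $\mathcal{O}(nL)$ in total.

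If that structural claim does not go through cleanly, I would fall back to a weaker argument. Retrieval after GST succeeds within $4$ rounds under an honest leader (\cref{lemma:gst-honest-leader}), and once any evidence has been committed, \textsc{ViewByzantineAgreement} returns. So at most one honest-leader retrieval can succeed before termination, and only successful agreement retrievals send $INFORM$.

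Byzantine-led views need separate treatment, because honest non-leaders still reply to a Byzantine leader. I would argue that each such reply is one share of size $\mathcal{O}(L/n+\kappa)$ plus $\mathcal{O}(1)$ signatures, so a view costs honest parties $\mathcal{O}(L + n\kappa)$. The $(RECEIVED,\cdot)$ acknowledgments are $\mathcal{O}(\kappa)$ each. Over $\mathcal{O}(f)$ views this gives $\mathcal{O}(fL + nf\kappa)$, and since $f\le n$ this is within $\mathcal{O}(n(L+f\kappa))$.

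Summing the three parts gives the bound: $\mathcal{O}(nf\kappa)$ from the views, $\mathcal{O}(nL)$ from the single informing leader, and $\mathcal{O}(fL+nf\kappa)$ from non-leader replies. The total is $\mathcal{O}(n\cdot(L+f\cdot\kappa))$.
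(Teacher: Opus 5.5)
Your proposal is correct and follows essentially the same route as the paper. It gets $\mathcal{O}(nf\kappa)$ for the view machinery from the prior analysis, and $\mathcal{O}(f)$ post-GST retrieval instances at $\mathcal{O}(L+n\kappa)$ each. Only $\mathcal{O}(1)$ honest-leader instances pay $\mathcal{O}(n(L+\kappa))$, because the first post-GST view with an honest leader produces a commit proof, after which retrieval is no longer run.

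Two small points to tighten. First, that liveness fact is \cite[Lemma B.6]{constantinescu2025few} combined with \cref{lemma:gst-honest-leader}; \cite[Lemma B.3]{constantinescu2025few} is only the safety statement and does not give re-proposal. Second, handle explicitly the view straddling GST, where an honest leader may send $INFORM$ after GST and still fail, costing one additional $\mathcal{O}(nL)$, as the paper notes.
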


\begin{proof}
    We note that outside the retrieval protocol, the word complexity of our protocol is the same as \textsc{ViewByzantineAgreement} which was proven to be $\mathcal{O}(n\cdot f)$ \cite[lemma B.11]{constantinescu2025few} (We note that because $t < n/3$, then we always have $f \leq n - k$ where $k = \lceil \frac{n + t + 1}{2} \rceil$). Because the authors of \cite{constantinescu2025few} consider a word to be $\mathcal{O}(\kappa)$, the total bit complexity outside of the retrieval protocol is $\mathcal{O}(n \cdot f \cdot \kappa)$. 

    We now consider the retrieval protocol. Using \cref{lemma:gst-retrieval-bit-complexity}, when used, the retrieval protocol has bit complexity $\mathcal{O}(n(L +  \kappa))$ if the leader is honest and $\mathcal{O}(L + \kappa)$ otherwise. We remark that using \cite[Lemma B.6]{constantinescu2025few}, if after GST, a view starts with an honest leader (which happens after at most $f+1$ views), then \cref{lemma:gst-honest-leader} ensures that the retrieval protocol will always return a proof, so a commit proof will be created and shared to every honest party, thus the retrieval protocol won't be run anymore. Hence, after GST, the retrieval protocol will be run at most $f+1$ times and at most $1$ time with an honest leader. This has total bit complexity $\mathcal{O}(n \cdot(L + f\cdot \kappa))$. We remark that if GST happens during a view, then the retrieval protocol can fail while still using $\mathcal{O}(n(L +  \kappa))$ bits. However, this can happen at most once.

    Summing the bit complexity of the retrieval protocols invocations and the rest of the protocol, we get in total the expected $\mathcal{O}(n \cdot(L + f\cdot \kappa))$ bit complexity.
\end{proof}

\begin{lemma}
    \textsc{QuorumGST} satisfies 
\end{lemma}

\begin{theorem}
    \textsc{QuorumGST} satisfies agreement, strong unanimity, provability and partial termination in $\mathcal{O}(f)$ rounds.
\end{theorem}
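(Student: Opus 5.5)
The plan is to assemble the theorem from the lemmas already established in this appendix, treating each of the four properties separately; the only genuinely new argument is for agreement.

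\emph{Strong unanimity} (the paper's Validity) is exactly \cref{lemma:gst-quorum-validiy}. A disagreement evidence certifies, by \cref{lemma:gst-retrieval-disagreement}, that honest inputs differ, so $\ast$ is admissible. An agreement evidence certifies, by \cref{lemma:gst-retrieval-agreement}, that some honest party holds the decided value.

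\emph{Partial termination} in $\mathcal{O}(f)$ rounds after GST is \cref{lemma:gst-quorum-partial-termination}, which in turn rests on \cref{lemma:gst-quorum-round}.

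\emph{Provability} follows from \cref{lemma:gst-quorum-round} as well: the returned $proof$ is a commit proof in \textsc{ViewByzantineAgreement}. It binds either a disagreement evidence (which certifies $\ast$) or an agreement evidence for an accumulator $acc$. For a non-$\ast$ value $v$, any third party can check $acc$ against $v$ (equivalently, recompute $encode(v)$ and compare accumulators). The binding property of the accumulator and second-preimage resistance mean that no adversary can exhibit a different value passing this check.

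\emph{Agreement} takes a short argument. By \cite[Lemma B.3]{constantinescu2025few}, as invoked in \cref{lemma:gst-quorum-round}, all honest parties obtain the same evidence.
\begin{itemize}
\item If the evidence is a disagreement evidence, every deciding honest party outputs $\ast$.
\item If it is an agreement evidence for $acc$, a party decides only a value $v$ it received in a valid $INFORM$ message whose accumulator is $acc$. Two honest parties deciding $v_1 \ne v_2$ would give two distinct values with the same accumulator. This contradicts the accumulator's collision-resistance, which holds against a computationally bounded adversary.
\end{itemize}
A party that never receives $v$ decides nothing, which is consistent with agreement.

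The main obstacle is making sure the borrowed \textsc{ViewByzantineAgreement} guarantees still apply once the binary retrieval protocol is replaced by ours. Those guarantees are agreement on the committed object and $\mathcal{O}(f)$ termination after GST. I would check that the proofs in \cite{constantinescu2025few} use only the two abstract retrieval properties. The first is that outputs are $\bot$ or a proof. The second is that $f+1$ honest leaders after GST yield a non-$\bot$ output; \cref{lemma:gst-honest-leader} supplies this, and in fact already for a single honest leader. If that check goes through, the theorem follows by combining the items above.
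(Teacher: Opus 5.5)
Your proposal is correct and takes the paper's route: the paper's proof simply assembles \cref{lemma:gst-quorum-round,lemma:gst-quorum-validiy,lemma:gst-quorum-partial-termination} (plus the bit-complexity lemma), with agreement and provability left implicit in \cref{lemma:gst-quorum-round}'s guarantee that all honest parties obtain the same evidence together with a certificate. Your explicit accumulator-binding argument for agreement, and your check that only the abstract retrieval properties (supplied by \cref{lemma:gst-honest-leader}) are needed, spell out steps the paper leaves unstated.
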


\begin{proof}
    This is a consequence of \cref{lemma:gst-quorum-round,lemma:gst-quorum-validiy,lemma:gst-quorum-partial-termination,lemma:gst-quorum-bit}.
\end{proof}

\section{Synchronous Quorum Agreement}\label{section:sync-quorum-appendix}

\subsection{Pseudocode}

We provide below our implementation of an adaptive protocol for the synchronous setting with $\mathcal{O}(n\cdot( L + f \cdot \kappa))$ bit complexity and $\mathcal{O}(f + \log n)$ round complexity. We note that this protocol is mostly similar to the synchronous protocol for binary agreement given by \cite[Appendix B]{constantinescu2025few}. As such, our modifications to this protocol are given in blue. This protocol uses the parameter $k = \lceil \frac{n+t+1}{2}\rceil \approx 3n/4$. $c$ is the constant such that there exists a component of size $n - \mathcal{O}(f)$ in \cref{theo:sync-expander}.

For a party $p$, we say that $p$ resolves a retrieval proof if it is a disagreement proof (in which case it is resolved to $\ast$) or is is an agreement proof with accumulator $acc$ and $p$ knows the value $v$ with such accumulator (in which case it is resolved to $v$).

\begin{dianabox}{\textsc{ViewByzantineAgreementSync}($\vin$)}
\algoHead{Protocol for Byzantine Agreement in synchrony for a party $p_i$}
\begin{algorithmic}
\State $\textbf{retrieval\_value} \gets \bot$
\State $\textbf{key} \gets \bot$
\State $\textbf{lock} \gets \bot$
\State $\textbf{commit} \gets \bot$
\State $\textbf{dispersal} \gets \bot$
\State $\textbf{output} \gets \bot$

\Statex{$\triangleright$  Decide as soon as we get a dispersion value that can be resolved}
\UponTrue{Receiving valid $\textsc{SendDispersal}(\textbf{value}, \textbf{proof})$}
    \If{$\textbf{dispersal} = \bot$}
        \State $\textbf{dispersal} \gets (\textbf{value}, \textbf{proof})$
        \If {$p_i$ has not yet decided and $\textbf{value}$ can be resolved to $v$}
            \State Decide $(v, \textbf{proof})$
        \EndIf
    \EndIf
\EndUpon

\Statex{$\triangleright$  Ask for shares $\mathcal{O}(\log n)$ after receiving a dispersion proof}
\UponSimple{$c \cdot \log n$ rounds after receiving a dispersal proof, if $p$ has not yet decided}
    \State Send $(\textsc{RequestShare}, \textbf{dispersal})$ to every party
\EndUpon

\Statex{$\triangleright$  Send share to every party requesting it}
\UponTrue{Receiving $(\textsc{RequestShare}, proof)$ from a party $p_j$ for the first time}
    \If{$proof$ resolves to a value $v$}
        \State $(acc, (s_j)) \gets encode(v)$
        \State Send $(\textsc{SendShare}, s_i)$ to $p_j$
    \EndIf
\EndUpon

\Statex{$\triangleright$ Reconstruct value using shares}
\UponTrue{Receiving $(\textsc{SendShare}, s_j)$ from $p_j$ where $s_j$ is valid for \textbf{dispersion}}
    \State Save $s_j$
    \If{$p_i$ has not yet decided and received $n/4$ shares $(s_j)$ for the accumulator in \textbf{dispersion}}
        \State $v \gets decode((s_j))$
        \State Decide $(v, \textbf{dispersion})$
    \EndIf
\EndUpon

\Statex{$\triangleright$ Disperse the decided value using the expander}
\UponTrue{Receiving valid $(\textsc{Disperse}, v)$}
    \If{$\textbf{dispersal}$ is a dispersal proof for $v$}
        \State Decide $(disp, commit)$
    \EndIf
    \State Forward the message to $p_i$'s neighbors in $G$
\EndUpon
        
\Statex{$\triangleright$  Run the view-based protocol}
\For{view number $view \gets 0,1, \ldots$}
    \If{This party is the current leader}
        \If{$\textbf{dispersal} = \bot$}
            \State Run ViewLeaderProtocol() in parallel for duration $11\Delta$
        \Else{}
            \UponSimple{Receiving $(\textsc{Complain})$ from a party $v$ for the first time}
                \State Send $\textsc{SendDispersal}(\textbf{dispersal})$ to $v$
            \EndUpon
        \EndIf
    \EndIf
    \State Run ViewPartyProtocol($\vin$, $view$, \textbf{key}, \textbf{lock}, \textbf{commit}) for duration $11\Delta$
\EndFor

\end{algorithmic}
\end{dianabox}

\begin{dianabox}{\textsc{ViewLeaderProtocolSync}}
\algoHead{View-based protocol part exclusive to the leader in synchrony}

\begin{algorithmic}[1]
        \Statex{$\triangleright$  Choose which value to propose}
        \State Broadcast \textsc{RequestSuggestion}
        \State Wait for valid \textsc{Suggest}(m) from $k$ parties
        \If{one of the value is a dispersion proof $disp$ with value $v$:}
            \State $\textbf{value} \gets v$
            \State $\textbf{dispersion} \gets disp$
            \State Jump to broadcasting the dispersion value
        \ElsIf{One of the value is a key value:}
            \State Let $key$ be the key with the highest view number and $v$ its value
            \State $\textbf{value} \gets v$
            \State $\textbf{prop} = (KEY, key)$
        \ElsIf{One of the value is a retrieval proof $proof$:}
            \State $\textbf{prop} = (COMBINE, proof)$
        \Else
            \State Broadcast \textsc{RunRetrieval}
            \State $(\textbf{value}, \textbf{proof}) \gets \textsc{RetrievalLeader}()$
            \If{$\textbf{value} = \bot$}
                \State No value to propose
                \State Stay silent for the rest of the round
            \EndIf
            \State $\textbf{retrieval\_value} \gets (\textbf{value}, \textbf{proof})$
            \State $\textbf{prop} = (COMBINE, proof)$
        \EndIf

        \Statex
        \Statex{$\triangleright$  Approve the proposed value and get a key}
        \State Send \textsc{ProposeKey}($\textbf{value}, \textbf{prop}$) to every party.
        \State Wait for valid \textsc{CheckedKey}($\rho_p$) from $k$ parties
        \State $\textbf{key\_proof} \gets tcombine((KEY, \textbf{value}), (\rho_p))$

        \Statex
        \Statex{$\triangleright$ Approve the key and get a lock}
        \State Send \textsc{ProposeLock}($\textbf{value}, \textbf{key\_proof}$) to every party.
        \State Wait for valid \textsc{CheckedLock}($\rho_p$) from $k$ parties
        \State $\textbf{lock\_proof} \gets tcombine((LOCK, \textbf{value}), (\rho_p))$

        \Statex
        \Statex{$\triangleright$ Approve the lock and get a commit}
        \State Send \textsc{ProposeCommit}($\textbf{value}, \textbf{lock\_proof}$) to every party.
        \State Wait for valid \textsc{CheckedCommit}($\rho_p$) from $k$ parties
        \State $\textbf{commit\_proof} \gets tcombine((COMMIT, \textbf{value}), (\rho_p))$

        \Statex
        \Statex{$\triangleright$ Approve the commit and get a dispersal}
        \State Send \textsc{ProposeDispersal}($\textbf{value}, \textbf{commit\_proof}$) to every party.
        \State Wait for valid \textsc{CheckedDispersal}($\rho_p$) from $k$ parties
        \State $\textbf{dispersal\_proof} \gets tcombine((DISPERSAL, \textbf{value}), (\rho_p))$
        
        \Statex
        \Statex{$\triangleright$ Broadcast the dispersal value}
        \State Send \textsc{SendDispersal}($\textbf{value}, \textbf{dispersal\_proof}$) to every party.
\end{algorithmic}
\end{dianabox}

\begin{dianabox}{\textsc{ViewPartyProtocolSync}(input value $\vin$, view number $view$, \textbf{key}, \textbf{lock}, \textbf{commit})}
\algoHead{View-based protocol part exclusive to parties, for a party $p$}

\begin{algorithmic}
        \Statex{$\triangleright$  Complain if you do not have a dispersal proof}
        \If{$\textbf{dispersal} = \bot$}
            \State Send $(\textsc{Complain})$ to the leader
         \EndIf

        \Statex
        \Statex{$\triangleright$  Choose which value to propose}
        \State Wait for a message \textsc{RequestSuggestion} from the leader
        \If{$\textbf{dispersal} \neq \bot$}
            \If{This party never sent the dispersal value to the leader}
                \State Send \textsc{Suggest}($(DISPERSAL, \textbf{dispersal})$) to the leader
            \EndIf
            \State Wait for the rest of the view
        \ElsIf{$\textbf{key} \neq \bot$}
             \State Send $\textsc{Suggest}((KEY, \textbf{key}))$ to the leader
        \ElsIf{$\textbf{retrieval\_value} \ne \bot$}
            \State Send $\textsc{Suggest}((VALUE, \textbf{retrieval\_value}))$ to the leader
        \Else
            \State Send $\textsc{Suggest}(())$ to the leader
        \EndIf

        \Statex
        \Statex{$\triangleright$  Run the retrieval procedure if needed}
        \Upon{Receiving from leader}{\textsc{RunRetrieval}}{}
            \State Run $\textsc{RetrievalParty}(v_{IN}$)
        \EndUpon
        
        \Statex{$\triangleright$  Check and sign the key}
        \State Wait for a valid \textsc{ProposeKey}($v$, ($type, proof$)) from the leader
        \If{$\textbf{lock} \ne \bot$}
            \If{$type = COMBINE$ or $type = KEY$ with a view number strictly less than the $\textbf{lock}$ view number}
                \State Wait for the rest of the view
            \EndIf
        \EndIf
        \State Send \textsc{CheckedKey}($tsign((KEY, v))$) to the leader
        
        \Statex
        \Statex{$\triangleright$  Check and sign the lock value}
        \State Wait for a valid \textsc{ProposeLock}($v$, $proof$) from the leader
        \State $\textbf{key} \gets (v, view, proof)$
        \State Send \textsc{CheckedLock}($tsign((LOCK, v))$) to the leader
        
        \Statex
        \Statex{$\triangleright$  Check and sign the commit value}
        \State Wait for a valid \textsc{ProposeCommit}($v$, $proof$) from the leader
        \State $\textbf{lock} \gets (v, view, proof)$
        \State Send \textsc{CheckedCommit}($tsign((COMMIT, v))$) to the leader
        
        \Statex
        \Statex{$\triangleright$  Check, disperse and sign the commit value}
        \State Wait for a valid \textsc{ProposeDispersal}($v$, $proof$) from the leader
        \State $\textbf{commit} \gets (v, view, proof)$
        \If{$v$ is an agreement proof and resolves to an input $input$}
            \If{$p$ did not transmit $input$ in $G$ yet}
                \State Send $(\textsc{Disperse}, commit, input)$ to $p$'s neighbors in $G$
            \EndIf
        \EndIf
        \State Send \textsc{CheckedDispersal}($tsign((DISPERSAL, v))$) to the leader
\end{algorithmic}
\end{dianabox}

\begin{dianabox}{\textsc{SyncBA}}
\algoHead{Optimal synchronous agreement protocol for a party $p_i$}
\begin{algorithmic}[1]
    \State $use\_fallback \gets 0$
    \For{$n$ views}
        \State Run $\textsc{ViewByzantineAgreement}$
    \EndFor
    \If{$\textbf{dispersal} \ne \bot$}
        \State $proof \gets \textbf{dispersal}$
    \ElsIf{$\textbf{commit} \ne \bot$}
        \State $proof \gets \textbf{commit}$
    \Else
        \State $proof \gets \bot$
    \EndIf
    \State Let $(acc, (s_i)) \gets decode(value)$ if $value \ne \bot$
    \Statex
    
    \If{No value was decided}
        \State Broadcast $(HELP)$ to every party
    \EndIf
    
    \Statex
    \UponSimple{Receiving $(HELP)$ from party $p_j$}
        \If{$value \ne \bot$}
            \State Send $(PROOF, proof)$ to $p$
        \EndIf
    \EndUpon
    \UponTrue{Receiving $(HELP)$ from $t+1$ different parties}
        \State $use\_fallback \gets 1$
    \EndUpon

    \Statex
    \If{Received a dispersal proof $disp$}
        \State $proof \gets disp$
    \ElsIf{Received a commit proof $commit$}
        \State $proof \gets commit$
    \EndIf
    \If{No value was decided and $proof \ne \bot$}
        \State Send $(ASK\_SHARE, proof)$ to every party
    \EndIf

    \Statex
    \UponSimple{Receiving valid $(ASK\_SHARE, proof)$ from $p_j$ if $proof$ can be resolved to a value $o$}
        \State $(acc, (s_j)) \gets encode(o)$
        \State Send $(SHARES, s_i, s_j)$ to $p_j$
    \EndUpon

    \Statex
    \UponSimple{Receiving valid $(SHARES, s_j, s_i)$ from a party $p_j$}
        \State Save $s_i$ and $s_j$
        \State Broadcast $(SHARE, s_i)$ to every party
        \State $commit\_proof \gets proof$
    \EndUpon
        
    \Statex
    \UponSimple{Receiving valid $(SHARE, s_j)$ from a party $p_j$}
        \State Save $s_j$
        \State Broadcast $(SHARE, s_i)$ to every party
        \State $commit\_proof \gets proof$
    \EndUpon
    \If{Already decided a value $o$}
        \State $output \gets o$
    \ElsIf{Got $n/4$ shares $(s_j)$ for the value associated with $proof$}
        \State $output \gets decode((s_j))$
    \Else{}
        \State $output \gets \vin$
    \EndIf
    \If{$use\_fallback = 1$}
        \State $output' \gets \mathcal{A}_{fallback}(output)$
        \If{$proof$ is not a dispersal proof, or is a dispersal proof for $output$}
            \State Broadcast $tsign((CERTIFY, output'))$ to every party
        \EndIf
        \If{$proof$ is not a dispersal proof}
            \State $output \gets output'$
            \State Wait for $t+1$ partial signatures $(\rho_p)$ for $(CERTIFY, output)$
            \State $proof \gets tcombine((\rho_p))$
        \EndIf
    \EndIf
    \If{The party has not yet decided}
        \State Decide $(output, proof)$
    \EndIf
\end{algorithmic}
\end{dianabox}

\subsection{Analysis}

We want our protocol to satisfy certified agreement and remark that in our implementation, the certificate can be of two sorts: either a dispersal proof obtained during the view-based part of the protocol or a $(t+1)$-threshold signature for $(CERTIFY, output)$ obtained. We show in \cref{lemma:sync-agreement-on-dispersal,lemma:sync-certify-prop} that this proof satisfies Provability.

We first consider the view-based part of the protocol. We start by giving results that come from the original paper, as the key, lock and commit proof structure is the same:

\begin{lemma}[Lemma B.3 of \cite{constantinescu2025few}]
    During an entire execution, at most one value will get commit proofs.
\end{lemma}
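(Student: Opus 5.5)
The plan is a standard quorum-intersection argument on the threshold used for commit proofs, with parameter $k = \lceil \frac{n+t+1}{2}\rceil$. First, I will show that within a single view at most one value can get a commit proof. A commit proof for $v$ in view $w$ is a $k$-threshold signature on $(COMMIT, v)$. Honest parties sign \textsc{CheckedCommit} at most once per view, and only for the value in the leader's \textsc{ProposeCommit}. Two commit proofs in the same view for $v \neq v'$ would need two sets of $k$ signers. These sets intersect in at least $2k - n \geq t+1$ parties, so they share an honest party, which would have signed twice. The same argument shows that within one view at most one value gets a key proof and at most one value gets a lock proof.

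Second, I will prove an inductive locking invariant across views. Let $w$ be the first view in which some value $v$ obtains a commit proof. Then $k$ parties signed $(COMMIT, v)$, and at least $k - t$ of them are honest. Before signing, each of these honest parties set $\textbf{lock} \gets (v, w, \cdot)$. Denote this set by $H$, with $|H| \geq k - t \geq t+1$.

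I then claim that for every view $w' > w$, any key proof, lock proof or commit proof formed in $w'$ is for $v$, and every party in $H$ keeps a lock for $v$ with view at least $w$. I argue by induction on $w'$. In view $w'$, a key proof for $v' \neq v$ needs $k$ \textsc{CheckedKey} signatures. By intersection, at least one signer lies in $H$, since $k + |H| > n$ (using $k - t + k > n$). That locked party signs only if the proposal is of type $KEY$ with a key whose view is at least its lock view.

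By the induction hypothesis, every key formed in views $w, \ldots, w'-1$ is for $v$. Any key for a different value comes from a view before $w$, which is strictly below the lock view $w$, so the locked party refuses. A $COMBINE$ proposal is refused outright by locked parties. Hence no key proof for $v' \neq v$ forms in view $w'$. Lock and commit proofs in a view require a key proof for the same value, so they are for $v$ too, and locks in $H$ are only ever updated to $v$.

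Third, I have to handle the dispersal branch of the leader, which skips to broadcasting after seeing a dispersal proof. Dispersal proofs require commit proofs, so the skip introduces no new committed value. I also need to check that the fallback in \textsc{SyncBA} does not create commit proofs: it uses separate $CERTIFY$ signatures, so the claim holds.

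I expect the main obstacle to be the exact arithmetic of the intersection $2k - n \geq t+1$ with $k = \lceil \frac{n+t+1}{2}\rceil$, together with the claim that honest locked parties refuse stale keys. That refusal relies on the comparison of key view numbers against the lock view in \textsc{ViewPartyProtocolSync}, so I will follow the original proof of Lemma B.3 of \cite{constantinescu2025few} and verify that the blue modifications do not touch these signing rules.
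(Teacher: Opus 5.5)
Your proof is correct and is essentially the standard key/lock/commit quorum-intersection argument behind Lemma B.3 of \cite{constantinescu2025few}, which this paper cites without reproving. A few points need tightening. First, make explicit that key, lock and commit signatures must be bound to their view, i.e.\ signed on $(KEY, v, view)$ rather than the printed $(KEY, v)$; otherwise an old key for $v'$ could be relabeled with a high view number and pass the locked party's check. Second, in the induction step also rule out a key for $v'$ formed earlier in the same view $w'$, for example by taking the first such key in time. Third, your side claim $k-t \geq t+1$ fails for $t$ near $n/2$, but it is unused: you only need $|H| + k > n$, which does hold.
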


Given that a dispersal proof is built on top of an existing commit proof, we get the following corollary:
\begin{corollary}\label{coro:sync-view-commit-disp-same}
    During an entire execution, at most one value will get commit and dispersal proofs.
\end{corollary}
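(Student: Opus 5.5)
The plan is to reduce the claim directly to the previous lemma (Lemma B.3 of \cite{constantinescu2025few}), which says that at most one value obtains commit proofs in an entire execution. Only the new dispersal stage needs attention. The statement about commit proofs is exactly that lemma, since the key/lock/commit machinery in \textsc{ViewLeaderProtocolSync} and \textsc{ViewPartyProtocolSync} is unchanged from the original protocol. The added retrieval and dispersal logic only changes which value a leader proposes, not the locking rules that the original argument relies on. So it suffices to show that any value $v$ with a dispersal proof also has a commit proof.

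For this, a dispersal proof for $v$ is a $k$-threshold signature on $(DISPERSAL, v)$ with $k = \lceil \frac{n+t+1}{2}\rceil \geq t+1$. Hence at least one honest party $p$ produced a partial signature $tsign((DISPERSAL, v))$. Inspecting \textsc{ViewPartyProtocolSync}, an honest party signs $(DISPERSAL, v)$ only after receiving a valid \textsc{ProposeDispersal}$(v, proof)$. Validity here means that $proof$ is a valid commit proof for $v$, i.e., a $k$-threshold signature on $(COMMIT, v)$. Threshold signatures are unforgeable against the computationally bounded adversary, so a commit proof for $v$ exists in the execution.

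Therefore, if values $v_1$ and $v_2$ each receive a commit or dispersal proof, each of them has a commit proof. Lemma B.3 then gives $v_1 = v_2$.

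The one point needing care is confirming that the validity check on \textsc{ProposeDispersal} really requires a commit proof on the same $v$, and not just any commit proof. I would make this explicit by reading "valid" as \texttt{tverify} on $(COMMIT, v)$, in the same way the original paper treats \textsc{ProposeLock} and \textsc{ProposeCommit}. I would also note that the extra \textsc{Disperse} and \textsc{SendDispersal} messages never cause an honest party to sign anything at the commit level. This ensures the invariants behind Lemma B.3 are untouched.
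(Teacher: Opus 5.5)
Your proposal is correct and follows the paper's route. The paper derives the corollary from Lemma B.3 of \cite{constantinescu2025few} via the one-line observation that every dispersal proof is built on top of a commit proof for the same value. You establish exactly this layering step explicitly: since $k \geq t+1 > f$, some honest party signed $(DISPERSAL, v)$, and it does so only after verifying a commit proof for $v$.
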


We now consider validity:
\begin{lemma}
    If a party decides a value during \textsc{ViewByzantineAgreementSync}, then this value satisfies strong unanimity.
\end{lemma}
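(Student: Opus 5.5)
The plan is to trace every way a party can decide in \textsc{ViewByzantineAgreementSync} back to a single object: a retrieval proof returned by \textsc{RetrievalLeaderSync}, which is either an agreement evidence or a disagreement evidence. Strong unanimity then follows from the retrieval lemmas. There are three decision paths: (i) receiving a valid \textsc{SendDispersal} whose value can be resolved; (ii) reconstructing from $n/4$ shares valid for the accumulator in \textbf{dispersal}; (iii) receiving a \textsc{Disperse} message matching \textbf{dispersal}. In all three, the decided value (or its accumulator) is the one certified by a dispersal proof. Since a proof must be valid to be accepted, I would first argue that any dispersal proof in a run is attached to a value that was certified through the key/lock/commit chain. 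That value originated either from a leader's \textsc{RetrievalLeader} output, proposed with type $COMBINE$, or from an earlier key. Unrolling keys back to their first creation, the value is always a non-$\bot$ retrieval output.

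Next I would handle the two cases for the retrieval output. If it is a disagreement evidence, the party resolves it to $\ast$. By \cref{lemma:sync-retrieval-disagreement}, not all honest parties share an input, so $\ast$ is admissible under strong unanimity. If it is an agreement evidence for an accumulator $acc$, then by \cref{lemma:sync-retrieval-parties-know} the associated value $v$ satisfies strong unanimity. It remains to check that the party actually decides $v$ itself and not some other value. In path (i), resolution means the party knows a value with accumulator $acc$. In paths (ii) and (iii), the shares and the dispersed value are checked against the accumulator. In every path, the accumulator's binding property (computational infeasibility of a membership witness for a non-member) ensures that the decoded or received value equals $v$ except with negligible probability.

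The main obstacle is the first step: showing rigorously that every value passing the key/lock/commit/dispersal validity checks is a genuine retrieval output. This requires that the validity predicate for \textsc{ProposeKey} checks either an embedded retrieval proof (the $COMBINE$ case) or a previously formed key. I would argue this by induction on view number. A key threshold signature needs $k$ partial signatures, so it includes at least one honest signer. An honest party signs only after validating the proposal, and the validation checks either a retrieval proof or a key from a strictly earlier view, which is covered by the induction hypothesis. Since lock, commit and dispersal proofs are each built from the previous stage on the same value, the property propagates to them. For the base case I would lean on Lemma~B.3 of \cite{constantinescu2025few} and \cref{coro:sync-view-commit-disp-same}, both for the chain structure and for uniqueness of the certified value.
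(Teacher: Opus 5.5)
Your proposal is correct and takes the same route as the paper. Every decision path in \textsc{ViewByzantineAgreementSync} requires a dispersal proof, and every dispersal proof is layered on a valid agreement or disagreement evidence, so \cref{lemma:sync-retrieval-parties-know,lemma:sync-retrieval-disagreement} give strong unanimity. Your explicit induction and accumulator-binding check spell out what the paper delegates to Lemma~B.4 of \cite{constantinescu2025few}; the base case needs no uniqueness result such as Lemma~B.3, and inducting on the order in which key proofs are formed is cleaner than inducting on view numbers, since the signed message $(KEY, v)$ carries no view.
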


\begin{proof}
    The proof is the same as \cite[Lemma B.4]{constantinescu2025few}. A value is only decided if it comes with a dispersal proof. However all dispersal proofs are originally layered on top of a proof returned by the retrieval protocol. But using \cref{lemma:sync-retrieval-parties-know,lemma:sync-retrieval-disagreement}, this value must satisfy strong unanimity.
\end{proof}

We will now look at the round complexity within the view-based protocol:
\begin{lemma}\label{lemma:sync-view-leader-bot}
    If $f < n - k$ and an honest leader get a non-$\bot$ output from the retrieval protocol, then all parties will get a dispersal proof within the view.
\end{lemma}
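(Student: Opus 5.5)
We follow the honest leader through the phases of \textsc{ViewLeaderProtocolSync} after \textsc{RetrievalLeader} has returned a non-$\bot$ pair $(\textbf{value},\textbf{proof})$, and check that each of the four threshold signatures (key, lock, commit, dispersal) can be formed within the $11\Delta$ view. Everything rests on a counting fact. Since $f < n-k$, the number of honest parties is $n-f > k$, so the leader can collect $k$ valid responses as soon as every honest party responds. We therefore only need to show that in every phase every honest party signs, and that no waiting step blocks under synchrony. Timing is routine: retrieval needs at most $6$ rounds by \cref{lemma:sync-leader-bits}, and each of the four propose/check exchanges plus the final \textsc{SendDispersal} adds one or two rounds. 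The view length $11\Delta$ is presumably calibrated for this; if it is tight, we will recount the suggestion round against the retrieval rounds.

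\textbf{Key phase.} The leader reached the retrieval branch only because no party suggested a dispersal proof, a key, or a retrieval proof. Among the $k$ \textsc{Suggest} messages, at least $k-f \geq 1$ came from honest parties. The step we expect to be the main obstacle is showing that no honest party holds a $\textbf{lock}$ from an earlier view, since such a party would refuse a \textsc{ProposeKey} of type $COMBINE$.

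\emph{Handling the lock obstacle.} The approach follows the binary protocol of \cite{constantinescu2025few}. Whenever some honest party holds a lock, at least $k$ parties signed the corresponding key. Hence at least $k-f$ honest parties hold a $\textbf{key}$, and $(k-f)+k > n$, so at least one of them appears among any $k$ suggesters. Honest parties suggest their key whenever $\textbf{key}\neq\bot$ and $\textbf{dispersal}=\bot$. If some honest party instead holds a dispersal, it suggests that dispersal, and the leader would also not be on the retrieval branch; the only exception is a party that already sent its dispersal once, and this case needs a separate look. The conclusion is that on the retrieval branch no honest party holds a lock. Every honest party then sends \textsc{CheckedKey}, and the leader combines $k$ of these shares.

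\textbf{Lock, commit and dispersal phases.} These impose no conditional checks on honest parties beyond validity of the preceding threshold signature, which the honest leader supplies. So every honest party signs each phase, and the leader forms $\textbf{lock\_proof}$, $\textbf{commit\_proof}$ and $\textbf{dispersal\_proof}$ in turn. There is one subtlety: an honest party might already hold a dispersal proof and therefore ``wait for the rest of the view'' instead of signing. We exclude this using \cref{coro:sync-view-commit-disp-same} together with the suggestion argument above, since such a party would have suggested that dispersal.

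\textbf{Conclusion.} The leader finally sends \textsc{SendDispersal} to every party. Messages arrive within $\Delta$, so every honest party stores a valid dispersal proof within the view, as claimed.
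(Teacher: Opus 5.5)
Your proof is correct and follows the paper's argument: $f<n-k$ leaves more than $k$ honest parties, and each of them signs the key, lock, commit and dispersal requests of the honest leader. The paper simply assumes the honest parties sign, whereas your intersection argument is what actually justifies this on the retrieval branch: any lock or dispersal proof forces at least $k-f$ honest parties to have stored a $\textbf{key}$ before signing, and $2k-f>n$ places one of them among the $k$ suggesters. The same argument settles the exception you left open, since an honest party that already sent its dispersal sends no \textsc{Suggest} and so cannot lie in that intersection.

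On timing, which the paper's proof also ignores, a literal recount gives about $2+6+4\cdot 2+1=17$ rounds, so the $11\Delta$ view length in the pseudocode must be enlarged; this does not affect the argument.
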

\begin{proof}
    If an honest leader gets a non-$\bot$ output, because there are $k$ honest parties, it will always be able to get all the $k$-partial signatures required for the key, lock, commit and dispersal proof and therefore get a dispersal proof.
\end{proof}

\begin{lemma}\label{lemma:sync-view-get-dispersal}
    If $f < \min(n/4, n-k)$, then all honest parties get a dispersal proofs within $2f + 1$ views.
\end{lemma}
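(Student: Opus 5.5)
We prove that within $2f+1$ views some view must end with every honest party holding a dispersal proof. The argument charges each view that fails to produce a dispersal proof to a Byzantine leader, with each honest leader causing at most a constant number of such failures.

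We first classify the views by their leader. With a round-robin leader schedule, any $2f+1$ consecutive views contain at least $f+1$ distinct honest leaders. If the schedule instead repeats leaders, we use that consecutive windows of $n$ views have distinct leaders and $f < n/4$. Fix a view with an honest leader $\ell$, and suppose no honest party yet has a dispersal proof. By the \textsc{ViewLeaderProtocolSync} code, $\ell$ collects $k$ suggestions. Since $f < n-k$, there are at least $k$ honest parties, so $\ell$ is never stuck waiting. We then split into cases according to what $\ell$ proposes.

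\textbf{Case 1: some suggestion is a dispersal proof or a key.} The standard key/lock/commit argument of \cite[Appendix B]{constantinescu2025few} applies: a proposal carrying the highest key passes every honest party's lock check, because any honest lock was formed from a key certified by $k$ parties, which intersects the $k$ suggesters in an honest party. So $\ell$ obtains key, lock, commit and dispersal proofs with honest signatures alone, exactly as in \cref{lemma:sync-view-leader-bot}, and then sends \textsc{SendDispersal} to everyone. A suggested dispersal proof is simply forwarded.

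\textbf{Case 2: some suggestion is a retrieval value.} Here $\ell$ combines it in the same way and succeeds.

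\textbf{Case 3: $\ell$ runs the retrieval protocol.} If the output is non-$\bot$, \cref{lemma:sync-view-leader-bot} gives a dispersal proof within the view.

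The remaining failure mode is that the retrieval protocol returns $\bot$ to an honest leader. By \cref{lemma:sync-retrieval-run-different-leaders}, using $f<n/4$, this can happen for at most $f$ distinct honest leaders; indeed each such failure sets that leader's input to $\bot$ permanently, which is the monotone quantity that lemma tracks. Hence among $f+1$ distinct honest leaders, one of them either (i) finds a key, retrieval value or dispersal proof among the suggestions and succeeds by Cases 1–2, or (ii) gets a non-$\bot$ retrieval output and succeeds by Case 3. An honest leader whose retrieval succeeds stores \textbf{retrieval\_value}, so even if it is not itself the one to finish, later honest leaders see it suggested and fall into Case 2. The synchrony timing bounds (view length $11\Delta$, retrieval taking at most $6$ rounds by \cref{lemma:sync-leader-bits}) ensure each step fits inside the view.

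The main obstacle is that a Byzantine leader may interleave with these views. For example, it can produce key or lock proofs for a value, or distribute a dispersal proof to only some honest parties. We need the honest-leader views to still make progress despite this. For partial distribution of a dispersal proof, we use the \textsc{Complain} mechanism: in the next honest-leader view, a party holding a dispersal proof suggests it, and the leader re-broadcasts it to all. This does not even require the leader to hold the proof beforehand, since it only needs to receive it among the $k$ suggestions, which intersect any honest holder when at least $n-k$ such holders exist. For locks, we rely on the lock-respecting highest-key rule cited above. The count $2f+1$ then comes from at most $f$ honest views with retrieval failures, plus at most $f$ Byzantine-led views, plus one final successful honest view.
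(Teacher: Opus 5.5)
Your proposal is correct and follows essentially the paper's own argument. Among $2f+1$ views there are $f+1$ distinct honest leaders. If all of them run retrieval, \cref{lemma:sync-retrieval-run-different-leaders} gives one a non-$\bot$ output, and \cref{lemma:sync-view-leader-bot} then applies. An honest leader that skips retrieval either builds on a suggested key or retrieval proof using $f<n-k$, or already holds a dispersal proof.

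One terminological point: in the paper, \textsc{Complain} is the path where the honest leader already holds the dispersal proof and sends it to every party that complains, which is how the paper's proof handles partial distribution. What you describe, a holder suggesting the proof and the leader re-broadcasting it, is the separate suggestion branch of \textsc{ViewLeaderProtocolSync}.
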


\begin{proof}
    If $f < n/4$, then among $f + 1$ calls by distinct honest leaders, using \cref{lemma:sync-retrieval-run-different-leaders}, it will at least once return a non-$\bot$ value. Because among $2f + 1$ views, there are at least $f + 1$ honest leaders (and $f < n/2$ so they are distinct), one of them will get a non-$\bot$ output and using \cref{lemma:sync-view-leader-bot}, all parties will get the dispersal proof.

    We note that the statement above only holds if all the $f+1$ honest leaders run the retrieval protocol. An honest leader will not run the retrieval protocol if it receives an existing retrieval proof or lock commit, but in this case, because $f < n-k$, it will still be able to get a new key then lock, commit and dispersal threshold signatures and top of it and share it to every party. The only other case where an honest leader won't run the retrieval protocol is if it already have a dispersal proof. But in this case, all the other parties which did not have the retrieval proof yet will send a \textsc{Complain} message and get it within the view.
\end{proof}

\begin{lemma}\label{lemma:sync-view-dispersal-partial}
    If $f < n/4$ and a dispersal proof exists, at least $n/4$ honest parties know the value associated with this dispersal proof.
\end{lemma}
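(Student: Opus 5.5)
We trace a dispersal proof back to the proofs it is built on. A dispersal proof for a value $\textbf{value}$ is a $k$-threshold signature on $(DISPERSAL, \textbf{value})$, with $k = \lceil \frac{n+t+1}{2}\rceil$. Since at most $f$ signers are byzantine, at least $k - f$ honest parties sent \textsc{CheckedDispersal} for it. An honest party does so only after receiving a valid \textsc{ProposeDispersal} carrying a commit proof for $\textbf{value}$. So a commit proof exists, and by \cref{coro:sync-view-commit-disp-same} it is the unique committed value. If $\textbf{value}$ is a disagreement evidence, it resolves to $\ast$ for every party and nothing needs to be known. So the interesting case is an agreement evidence for an accumulator $acc$.

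The key step is to show that $\textbf{value}$ is the output of exactly one successful retrieval instance, and that this instance already makes many honest parties store $v$. Every commit proof sits on a lock, which sits on a key. Every key chain starts from a \textsc{ProposeKey} of type $COMBINE$, which carries a retrieval proof obtained from \textsc{RetrievalLeaderSync}. By the structure of the agreement evidence, that proof contains the $(t+1)$-threshold signature on $(RECEIVED, acc)$. Honest parties sign this only after receiving and storing the full value $v$ in a valid $(INFORM, acc, v)$ message. \Cref{lemma:sync-retrieval-parties-know} then already gives at least $t+1-f$ honest parties that know $v$.

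The main obstacle is that $t+1-f$ is not obviously at least $n/4$. With $t<n/2$ and $f<n/4$, we only get $t+1-f > t+1-n/4$, which is too weak when $t$ is small. I would first try to strengthen the count using the dispersal step itself. In \textsc{ViewPartyProtocolSync}, an honest party that receives \textsc{ProposeDispersal} with an agreement proof resolving to its input starts disseminating it. More to the point, the honest signers of the commit/dispersal chain are $k - f > \frac{n}{2} - \frac{n}{4} = \frac{n}{4}$ honest parties. So it suffices to argue that every honest signer of $(DISPERSAL,\textbf{value})$ can resolve $\textbf{value}$.

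My plan is to strengthen the local check, reading "valid" in \textsc{ProposeDispersal} as requiring that the party can resolve the agreement proof, i.e. that it stored $v$ from the $INFORM$ broadcast. The broadcast went to every party, and after the honest retrieval leader's INFORM all honest parties receive it in synchrony. With this reading, every honest \textsc{CheckedDispersal} signer knows $v$, and the count $k - f \geq \frac{n+t+1}{2} - f > \frac{n}{4}$ (using $f<n/4$ and $t\ge 0$) closes the proof.

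If the paper's intended "valid" check does not include resolvability, I would fall back on the bound $t+1-f$. In that case I would split into the regime where $t$ is large relative to $n$, together with the $INFORM$ sent to all parties by an honest leader, which reaches all $n-f > n/4$ honest parties. The hardest point there is the case of a byzantine retrieval leader, which is exactly what the resolvability check is meant to cover.
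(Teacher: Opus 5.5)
Up to the point where you invoke \cref{lemma:sync-retrieval-parties-know}, your reduction is the paper's: the disagreement case is trivial, and otherwise at least $t+1-f$ honest parties stored $v$ from an $INFORM$ message. The gap is that you then set this bound aside as ``too weak when $t$ is small.'' The synchronous quorum agreement is analyzed at optimal resiliency, $t = \lfloor (n-1)/2 \rfloor$. The same assumption appears as $n = 2t+1$ in \cref{lemma:sync-retrieval-run-different-leaders} and as $t+1 \geq n/2$ in \cref{lemma:sync-view-honest-dispersal}. Hence $t+1 \geq n/2$, and with $f < n/4$ you get $t+1-f \geq n/2 - n/4 = n/4$ immediately. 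That one line is the paper's whole argument, and it is the step your proposal is missing.

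Your main route does not prove the lemma for the protocol as written. Reading ``valid'' in \textsc{ProposeDispersal} as ``the party can resolve the agreement evidence'' is contradicted by \textsc{ViewPartyProtocolSync}. There, after accepting a valid \textsc{ProposeDispersal}, a party separately tests whether $v$ resolves, only to decide whether to send it in $G$. It then sends \textsc{CheckedDispersal} unconditionally.

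Imposing your stronger check would also cost liveness. Suppose the agreement evidence came from a byzantine retrieval leader that sent $INFORM$ to only $t+1$ parties and stopped before any \textsc{ProposeDispersal}. Then in the worst case only $t+1-f$ honest parties can resolve it. An honest leader re-proposing that key could gather at most $t+1 < k$ \textsc{CheckedDispersal} signatures. The step of \cref{lemma:sync-view-get-dispersal} asserting that such a leader obtains a dispersal proof within its view would then no longer go through.

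Your fallback plan (splitting on the size of $t$ and relying on an honest leader's $INFORM$) is left unfinished. It becomes unnecessary once you use $t+1 \geq n/2$.
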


\begin{proof}
    If the dispersal proof is for a disagreement evidence, then all honest parties know the value (which is $\bot$). If it is an agreement evidence, using \cref{lemma:sync-retrieval-parties-know}, at least $t+1-f$ parties know the value. We remark that we assume optimal resiliency so $t = \lfloor (n-1)/2\rfloor$, thus $t+1 \geq n/2$ and because $f < n/4$, $t+1 - f \geq n/2 - n/4 \geq n/4$. So these $n/4$ honest parties know the value.
\end{proof}

\begin{lemma}\label{lemma:sync-view-partial}
    If $f < min(n/4, n - k)$, then at least $n/4$ honest parties decide within $2f+1$ views.
\end{lemma}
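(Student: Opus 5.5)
The plan is to combine \cref{lemma:sync-view-get-dispersal} with \cref{lemma:sync-view-dispersal-partial} and then check the decision rules in \textsc{ViewByzantineAgreementSync}.

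\textbf{Step 1: every honest party gets the dispersal proof.} Since $f < \min(n/4, n-k)$, \cref{lemma:sync-view-get-dispersal} gives that within $2f+1$ views every honest party receives a \textsc{SendDispersal} message with a valid dispersal proof. By \cref{coro:sync-view-commit-disp-same}, all these proofs are for the same retrieval value. So every honest party stores $\textbf{dispersal}$ for a common proof $(\textbf{value}, \textbf{proof})$.

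\textbf{Step 2: split on the type of the retrieval value.}

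If it is a disagreement evidence, every honest party can resolve it (to $\ast$). Each honest party then decides in the \textsc{SendDispersal} handler as soon as it stores the proof. This gives at least $n - f \geq n/4$ deciders within $2f+1$ views.

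If it is an agreement evidence for an accumulator $acc$, \cref{lemma:sync-view-dispersal-partial} gives a set $H$ of at least $n/4$ honest parties that know the value $v$ with accumulator $acc$. For these parties the proof resolves to $v$. Each of them therefore decides $v$ immediately on receiving the dispersal proof, again within $2f+1$ views.

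\textbf{Main obstacle: timing.} I must argue that the parties in $H$ already know $v$ when they receive \textsc{SendDispersal}, or else that they decide when they later learn $v$.

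Knowledge of $v$ comes from the $INFORM$ message of the retrieval protocol, and the $(RECEIVED, acc)$ signatures are produced before the leader returns. So $H$ learns $v$ before any key, lock, commit or dispersal proof built on this retrieval value exists. Hence $H$ knows $v$ before the dispersal proof can reach it, and the handler fires with a resolvable value.

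I would state this ordering explicitly, since \cref{lemma:sync-retrieval-parties-know} only counts signers of $(RECEIVED, acc)$, and those signers stored $v$ before signing. If the handler's ``first receipt only'' condition were an issue, a backup route is available: the same parties will also see the \textsc{Disperse}/\textsc{ProposeDispersal} path. I do not expect to need it.

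Combining the two cases, at least $n/4$ honest parties decide within $2f+1$ views.
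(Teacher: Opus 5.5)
Your proposal is correct and follows the paper's proof. The paper simply combines \cref{lemma:sync-view-get-dispersal} (every honest party obtains the dispersal proof within $2f+1$ views) with \cref{lemma:sync-view-dispersal-partial} (at least $n/4$ honest parties already know the value); your case split and ordering argument make precise the ``already know'' step the paper leaves implicit, namely that the honest $(RECEIVED, acc)$ signers stored $v$ before any proof built on the evidence could exist, so the one-shot \textsc{SendDispersal} handler fires with a resolvable value.
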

\begin{proof}
    Using \cref{lemma:sync-view-get-dispersal}, all honest parties will get a dispersal proof within $\mathcal{O}(f)$ rounds. Because $f < n/4$, using \cref{lemma:sync-view-dispersal-partial} at least $n/4$ honest parties already know the value and thus decide it.
\end{proof}

For the following lemma, we consider the constant $c > 0$ using in \cref{theo:sync-expander} to bound the diameter of the big component.
\begin{lemma}\label{lemma:sync-view-honest-dispersal}
    If $f < n/(72 + c)$, then all but at most $7f$ honest parties will decide at most $c\log n$ rounds after getting a dispersal proof.
\end{lemma}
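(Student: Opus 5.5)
We plan to combine the expander guarantee of \cref{theo:sync-expander} with the fact that, once a dispersal proof exists, many honest parties are already pushing the value into $G$. Let $T$ be the set of Byzantine parties, so $|T| = f \le n/72$ (which follows from $f < n/(72+c)$). By \cref{theo:sync-expander}, $G \setminus T$ contains a connected component $K$ of size at least $n - 6f$ and diameter at most $c\log n$. All vertices of $K$ are honest, so at most $6f$ honest parties lie outside $K$.

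The first step is to show that, by the time any honest party receives a dispersal proof, at least one honest member of $K$ has already sent $(\textsc{Disperse}, commit, input)$ to its neighbours in $G$. A dispersal proof is a $k$-threshold signature on $(DISPERSAL, v)$, so at least $k - f$ honest parties signed \textsc{CheckedDispersal}. In \textsc{ViewPartyProtocolSync}, each such party first forwards the value in $G$ whenever $v$ is an agreement proof that it can resolve, and only afterwards signs. If $v$ is a disagreement evidence, every party already resolves it to $\ast$ and decides immediately upon receiving the dispersal, so nothing more is needed in that case.

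For the agreement case, we need one honest signer that both knows the value and lies in $K$. By \cref{lemma:sync-retrieval-parties-know}, at least $t+1-f$ honest parties know the value; these are the parties that received INFORM. The main obstacle is that the parties that know the value and the parties that signed the dispersal might not overlap, and even if they do, the overlapping signers might all lie outside $K$. I would handle this by counting. Among the $k - f \ge 3n/4 - f$ honest signers, the ones that do not know the value number at most $n - (t+1-f) - f$. Subtracting a further $6f$ for parties outside $K$ should leave a positive count when $f < n/(72+c)$, since $k + t + 1 - n \approx n/4$ dominates $O(f)$. If this direct count is too tight, the fallback is to argue that signers who cannot resolve the value still forward the message, or that \textsc{SendDispersal} is delivered only after those forwards have occurred.

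Once an honest member of $K$ has started dispersing, flooding over $K$, which has diameter at most $c\log n$, delivers $(\textsc{Disperse}, v)$ to every vertex of $K$ within $c\log n$ rounds. This is because honest parties forward every \textsc{Disperse} message they receive. Each such party then holds both the dispersal proof and the value, and so decides. Synchrony lets us measure these rounds from the moment the party obtained its dispersal proof, since the dispersal process started no later than that moment. The parties left undecided are therefore only the honest parties outside $K$, and there are at most $6f \le 7f$ of them, which gives the bound in the statement.
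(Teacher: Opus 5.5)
Your proposal is correct and is essentially the paper's argument. You count the honest parties that both know the value (by \cref{lemma:sync-retrieval-parties-know}) and signed the dispersal, and hence already forwarded in $G$; you place one of them in the large low-diameter honest component of \cref{theo:sync-expander}; and you flood over that component. Your direct count already closes, leaving at least $k+t+1-n-7f \ge n/4-7f>0$ such parties since $f<n/72$, so the hedged fallback is unnecessary (and its variant in which non-resolving signers forward is not what the protocol does).
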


\begin{proof}
    We consider the creating of the first dispersal proof, which is happens before any party gets it. Because of \cref{coro:sync-view-commit-disp-same}, all dispersal proofs will be for the same value. If the dispersal proof is for a disagreement evidence, the party will immediately decide $\bot$ after getting it.

    Thus, from now on, we assume the dispersal proof is for an agreement evidence. Using \cref{lemma:sync-retrieval-parties-know}, at least $t+1-f$ honest parties know the value. Moreover, for the creation of the dispersal proof, at least $k$ parties must have given a threshold signature for it. Because $k = \lceil (n+t+1) / 2 \rceil$ and $t+1 \geq n/2$, then $k \geq 3n / 4$ so at most $n - 3n/4 \leq n/4$ parties did not send a partial signature for it. Hence, among the $t+1-f$ honest parties which know the value, at least $t + 1 - f - n/4$ sent a partial signature to the leader. We note that before sending the partial signature, a party will disperse in the expander $G$ the value associated with the commit proof it has if it knows it. Because of \cref{coro:sync-view-commit-disp-same}, the value for the commit proof is the same as the one for any dispersal proof. Thus, by the time the dispersal proof is created, at least $t + 1 - f - n/4$ honest parties started dispersing the decided value in $G$. Let us denote $Y$ the set of these parties.

    We assume $f < n/(72 + c)$, so $f < n/8$, hence $|Y| \geq t+1 -f - n/4 \geq n/2 - n/8 - n/4 \geq n/4$.
    We now apply \cref{theo:sync-expander} to the set of byzantine parties, whose size $f$ is less than $n/72$: There is a set $X$ of honest parties such that $|X| \geq n - 8f = n - n/9$ and the induced subgraph $G[X]$ is connected and has diameter at most $c\log n$. Because $|Y| > n/9$, we have $X \cap Y \ne \emptyset$, so there is an honest party $p \in Y$ which is also in $X$. Therefore, $p$ will start dispersing the decided value in $G[X]$ before any party gets a dispersal value. Because $G[X]$ has diameter at most $c \log n$, and parties relay the value they receive in $G$, all parties in $X$ will get the value at most $c\log n$ rounds later. So only honest parties not in $X$, which there are at most $(n - |X|) - f \leq (n - (n - 8f)) - f = 7f$ may not receive the value and decide in time.
\end{proof}

\begin{lemma}\label{lemma:sync-view-decide}
    If $f < min(n/4, n - k)$ and $2f + c\log n +4  \leq n$, then all parties decide within $\mathcal{O}(f + \log n)$ rounds within the leader-based agreement part.
\end{lemma}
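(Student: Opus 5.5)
Three ingredients carry the argument, combined in order: \cref{lemma:sync-view-get-dispersal} (dispersal proofs within $2f+1$ views), \cref{lemma:sync-view-honest-dispersal} (expander dispersion reaches all but $7f$ honest parties within $c\log n$ rounds), and the \textsc{RequestShare} mechanism for the remaining parties. Throughout, each view lasts $11\Delta$, i.e.\ a constant number of rounds, so the first $2f+1$ views take $\mathcal{O}(f)$ rounds.

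\textbf{Step 1.} Since $f < \min(n/4, n-k)$, \cref{lemma:sync-view-get-dispersal} shows that every honest party holds a dispersal proof within $2f+1$ views, hence within $\mathcal{O}(f)$ rounds. By \cref{coro:sync-view-commit-disp-same}, all these proofs are for the same value. If that value is a disagreement evidence, every party resolves it to $\ast$ and decides immediately, so we may assume it is an agreement evidence for some accumulator $acc$.

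\textbf{Step 2.} Every honest party $p$ that has not decided $c\log n$ rounds after obtaining its dispersal proof sends $(\textsc{RequestShare}, \textbf{dispersal})$ to all parties. We need at least $n/4$ honest parties to be able to resolve the proof by the time these requests arrive. When $f < n/(72+c)$, \cref{lemma:sync-view-honest-dispersal} gives that all but $7f$ honest parties know the value within $c\log n$ rounds of receiving the dispersal proof. When $f$ is larger, I would instead use \cref{lemma:sync-view-dispersal-partial}: at least $n/4$ honest parties already know the value as soon as a dispersal proof exists. In fact this second fact suffices in both regimes, and it is the cleaner route. Each such honest party answers with its valid share, which is checked against $acc$. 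So $p$ collects $n/4$ valid shares within $\mathcal{O}(1)$ further rounds, decodes, and decides. The total is $\mathcal{O}(f) + c\log n + \mathcal{O}(1) = \mathcal{O}(f+\log n)$ rounds.

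\textbf{Step 3.} The hypothesis $2f + c\log n + 4 \le n$ is what guarantees that all of this happens inside the leader-based part, which runs only $n$ views before the fallback. The first $2f+1$ views produce the dispersal proofs, and roughly $c\log n + 3$ further views give enough time for the waiting period and the request/response exchange. The main obstacle I expect is exactly this timing accounting. A party may obtain its dispersal proof late in the window, for instance via \textsc{Complain} in view $2f+1$. I also need the share responders to answer even after they have themselves decided. The handler for \textsc{RequestShare} fires whenever the proof resolves, so it does not depend on decision status, but I would need to confirm this against the pseudocode. Finally, I would convert views to rounds carefully, using one round per view as a lower bound, so that $2f+1+c\log n+\mathcal{O}(1)\le n$ rounds fit inside the $n$ views.
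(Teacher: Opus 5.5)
Your proposal is correct and follows the paper's proof. The shared steps are: dispersal proofs for all honest parties within $2f+1$ views via \cref{lemma:sync-view-get-dispersal}; a \textsc{RequestShare} after $c\log n$ rounds, answered by the $n/4$ honest parties guaranteed by \cref{lemma:sync-view-dispersal-partial}; and the hypothesis $2f+c\log n+4\le n$, with one round per view as a lower bound, to fit everything inside the $n$ views. As you observed, \cref{lemma:sync-view-honest-dispersal} is not needed for this statement; the paper uses it only to bound the bit complexity.
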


\begin{proof}
    Using \cref{lemma:sync-view-get-dispersal}, within $2f+1$ views, all honest parties will get a dispersal proof. We remark that if a party did not decide $c \log n + 1$ rounds after receiving its dispersal proof, it will broadcast \textsc{RequestShare} messages asking every party for their share of the output. Moreover, because $f < n/4$, using \cref{lemma:sync-view-dispersal-partial}, at least $n/4$ parties will already know the value and send their share back, so parties will always be able to decode the output and decide it at most two rounds after requesting shares. Hence all honest parties will decide within $2f + 1$ views and $c \log n + 3$ additional rounds. Because we run the view-based protocol for $n$ views and a view last at least one round, if $2f + c \log n + 4 \leq n$, all honest parties will have decided before the end of the first part of the protocol and within $2f + c \log n + 4$ views so $\mathcal{O}(f + \log n)$ rounds.
\end{proof}

\begin{lemma}\label{lemma:sync-view-bit}
    \textsc{ViewByzantineAgreementSync} has $\mathcal{O}(n\cdot(L + f \cdot \kappa))$ bit complexity.
\end{lemma}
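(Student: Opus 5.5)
We split the traffic of \textsc{ViewByzantineAgreementSync} into three parts and bound each one separately: the key/lock/commit/dispersal machinery together with suggestions and complaints, the invocations of the retrieval protocol, and the dissemination traffic (expander \textsc{Disperse} messages, \textsc{RequestShare}/\textsc{SendShare}, and \textsc{SendDispersal} answers to complaints).

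\textbf{View machinery.} Apart from retrieval and the blue additions, the view-based machinery coincides with \textsc{ViewByzantineAgreement} of \cite{constantinescu2025few}. That protocol has word complexity $\mathcal{O}(n\cdot f)$ \cite[Lemma B.11]{constantinescu2025few}. Every message in it carries only accumulators, threshold signatures or proofs of size $\mathcal{O}(\kappa)$, and never the $L$-bit value, so this part costs $\mathcal{O}(n\cdot f\cdot\kappa)$ bits. The extra dispersal phase adds one more constant-size round trip per view with an active honest leader, so it does not change this bound. For complaints, a leader that already holds a dispersal proof answers each complaining party at most once, with $\mathcal{O}(\kappa)$ bits. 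Once every honest party holds a dispersal proof, which by \cref{lemma:sync-view-get-dispersal} happens within $2f+1$ views, honest parties stop complaining and stop engaging honest leaders. This gives the $\mathcal{O}(n\cdot f\cdot\kappa)$ bound, reusing the counting argument of \cite[Lemma B.11]{constantinescu2025few}.

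\textbf{Retrieval.} By \cref{lemma:sync-leader-bits}, one invocation costs $\mathcal{O}(n\kappa+L)$ bits from non-leaders and $\mathcal{O}(n(L+\kappa))$ bits from the leader.
\begin{itemize}
\item A Byzantine leader induces only $\mathcal{O}(n\kappa+L)$ honest bits, and there are at most $\mathcal{O}(f)$ views with Byzantine leaders before termination.
\item An honest leader whose retrieval returns $\bot$ never sends \textsc{INFORM}, so it too sends only $\mathcal{O}(n\kappa)$ bits.
\item An honest leader whose retrieval returns non-$\bot$ pays $\mathcal{O}(nL)$. We argue this happens at most once. Such a leader stores \textbf{retrieval\_value}; later leaders receive it as a suggestion and use the \textsc{COMBINE} branch instead of rerunning retrieval, as the paper's "use of the retrieval protocol" discussion asserts.
\end{itemize}
The count of $\mathcal{O}(f)$ views rests on \cref{lemma:sync-view-get-dispersal}. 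After $2f+1$ views every honest party holds a dispersal proof, honest leaders no longer run \textsc{ViewLeaderProtocol}, and honest parties do not respond to suggestion requests. The retrieval total is therefore $\mathcal{O}(f(n\kappa+L)+nL)=\mathcal{O}(n(L+f\kappa))$.

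\textbf{Dissemination.} This part is the main obstacle.
\begin{itemize}
\item \emph{Expander.} Each honest party forwards the $L$-bit value in $G$ at most once, since it checks whether it already transmitted. $G$ is $D$-regular with constant $D$, so this totals $\mathcal{O}(nD(L+\kappa))=\mathcal{O}(n(L+\kappa))$.
\item \emph{Share requests.} A party that has not decided $c\log n$ rounds after receiving its dispersal proof broadcasts \textsc{RequestShare} once, at cost $\mathcal{O}(n\kappa)$. Each reply is a share of size $\mathcal{O}(L/n+\kappa)$. If $r$ honest parties request, the total is $\mathcal{O}(r\cdot n\cdot(L/n+\kappa))=\mathcal{O}(rL+rn\kappa)$.
\end{itemize}
The crux is showing $r=\mathcal{O}(f)$, which we get from \cref{lemma:sync-view-honest-dispersal}: when $f<n/(72+c)$, all but $7f$ honest parties decide within $c\log n$ rounds of their dispersal proof. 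For larger $f$ we have $f=\Omega(n)$, so the trivial bound $r\le n$ already gives $\mathcal{O}(nL+n^2\kappa)=\mathcal{O}(n(L+f\kappa))$. Requests from Byzantine parties are answered at most once per (requester, responder) pair, contributing $\mathcal{O}(f\cdot n\cdot(L/n+\kappa))$.

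Summing the three parts gives $\mathcal{O}(n\cdot(L+f\cdot\kappa))$.
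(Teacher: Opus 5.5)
Your proof is correct and follows essentially the same route as the paper. It uses the same ingredients: the prior-work bound \cite[Lemma B.11]{constantinescu2025few} for the view machinery, the constant-degree expander bound, at most one $\mathcal{O}(n(L+\kappa))$ honest retrieval plus $\mathcal{O}(f)$ cheap invocations via \cref{lemma:sync-leader-bits,lemma:sync-view-get-dispersal}, at most $8f$ share requesters via \cref{lemma:sync-view-honest-dispersal}, and a trivial bound when $f=\Omega(n)$. The paper splits on the regime of $f$ up front, whereas you organize by component. One small fix: your $\mathcal{O}(f)$ count of retrieval invocations rests on \cref{lemma:sync-view-get-dispersal}, which needs $f<\min(n/4,n-k)$. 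In the complementary regime, apply the same fallback you used for share requests: at most $n=\mathcal{O}(f)$ views, each costing $\mathcal{O}(L+n\kappa)$.
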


\begin{proof}
    We first remark that the retrieval protocol will return a non-$\bot$ evidence, while being called by an honest leader, at most once. The reason for this is that if an honest leader gets an agreement or disagreement evidence from the retrieval protocol, it will send it (or a lock or dispersal proof) to any subsequent leader which asks for it, which would cause any honest leader to skip running the retrieval protocol.

    We look at the bit complexity of using the expander: each party sends the decided $L$-bit value and proof at most once to its neighbors in the expander graph $G$. Because $G$ is regular with degree $D = \mathcal{O}(1)$, the total bit complexity for it is $\mathcal{O}(n(L + \kappa))$. We now look at the view-based protocol, outside of the retrieval protocol, the expander and asking for shares. We remark that in this form, it is the exact same approach as the original algorithm \textsc{ViewByzantineAgreement} from \cite{constantinescu2025few}, so using \cite[Corollary B.10, Lemma B.11]{constantinescu2025few}, we have $\mathcal{O}(n \cdot f)$ word complexity. Because each word has $\mathcal{O}(\kappa)$ bits, this results in a $\mathcal{O}(n \cdot f \cdot \kappa)$

    \textbf{If $f < n/(72 + c)$ and $2f + c\log n +4  \leq n$}: We remark that we have $f < min(n/4, n - k)$. We look at the bit complexity of each part. As stated previously, at most a single honest leader will run the retrieval protocol and get a non-$\bot$ value, which according to \cref{lemma:sync-leader-bits} has $\mathcal{O}(n(L + \kappa))$ bit complexity. Using \cref{lemma:sync-view-get-dispersal}, all honest parties will get a dispersal proof within $\mathcal{O}(f)$ views and thus stop calling the retrieval protocol. Because each call with a byzantine leader has bit complexity $\mathcal{O}(L + n \cdot \kappa)$, the total bit complexity is $\mathcal{O}(L + n \cdot \kappa)$.  Finally, using \cref{lemma:sync-view-honest-dispersal}, at most $7f$ honest parties (along with the $f$ byzantine parties) will ask for a share to other parties. Each of the $n-f$ honest parties may thus send back a share, of size $\mathcal{O}(L/n + \kappa)$ to each of the $\mathcal{O}(f)$ parties which requests it, resulting in a $n \cdot 8f \cdot \mathcal{O}(L/n + \kappa) = \mathcal{O}(n\cdot(L + f \cdot \kappa))$ bit complexity. Summing all these complexities together, we get the expected bit complexity.

    \textbf{If $f \geq n/(72 + c)$ or $2f + c\log n +4  > n$}: We note that in this case, $f = \Omega(n)$. As stated previously, at most a single honest leader will run the retrieval protocol and get a non-$\bot$ value, which according to \cref{lemma:sync-leader-bits} has $\mathcal{O}(n(L + \kappa))$ bit complexity. In the other views, each retrieval protocol invocation has bit complexity $\mathcal{O}(L + n \cdot \kappa)$ using \cref{lemma:sync-leader-bits} and happens at most $n$ times (once per view). So the total bit complexity for this part is $\mathcal{O}(n(L + n \cdot \kappa))$. Regarding asking for shares, every one of the $n$ parties may end up asking for a share to the $n$ parties. Each share taking $\mathcal{O}(L/n + \kappa)$ bits, the total is $\mathcal{O}(n(L + n\kappa))$ bits. Summing all these complexities together, we get  $\mathcal{O}(n(L + n \cdot \kappa))$ which is the expected complexity as $f = \Omega(n)$.
\end{proof}

We can now look at the fallback approach given in \textsc{SyncBA}:

\begin{lemma}\label{lemma:sync-qa-termination}
    All honest parties decide in $\mathcal{O}(f + \log n)$ rounds.
\end{lemma}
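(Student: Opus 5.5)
I will split into two cases, mirroring the case split in the proof of \cref{lemma:sync-view-bit}: the ``good'' regime where $f < n/(72+c)$ and $2f + c\log n + 4 \leq n$, and the complementary regime where $f = \Omega(n)$ or $n = \mathcal{O}(f + \log n)$.

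\textbf{Good regime.} Here $f < \min(n/4, n-k)$, since $k = \lceil (n+t+1)/2\rceil$ gives $n - k \geq n/4 - 1$, roughly. Then \cref{lemma:sync-view-decide} applies directly. Every honest party decides within $2f + c\log n + 4$ views, hence within $\mathcal{O}(f + \log n)$ rounds, during the view-based part of \textsc{SyncBA}. I also need to check that the fallback phase cannot override these decisions. Each party decides only once (``If the party has not yet decided''), so later steps of \textsc{SyncBA} cannot change an existing decision. Thus termination time is governed solely by the view-based part.

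\textbf{Bad regime.} If $f \geq n/(72+c)$ or $2f + c\log n + 4 > n$, then $n = \mathcal{O}(f + \log n)$. It therefore suffices to show that every honest party decides within $\mathcal{O}(n)$ rounds overall. I will bound each phase of \textsc{SyncBA}:
\begin{itemize}
\item The view loop runs for exactly $n$ views of $11\Delta$ each, so it takes $\mathcal{O}(n)$ rounds.
\item The HELP / PROOF / ASK\_SHARE / SHARES / SHARE exchange takes a constant number of rounds.
\item The fallback $\mathcal{A}_{fallback}$ is the MVBA of Nayak et al.~\cite{nayak2020extension}, which terminates in $\mathcal{O}(n)$ rounds.
\item Collecting $t+1$ partial signatures on $(CERTIFY, output')$ takes one more round.
\end{itemize}
Finally, every honest party reaches the last line and decides unless it already decided. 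So all honest parties decide within $\mathcal{O}(n) = \mathcal{O}(f + \log n)$ rounds.

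\textbf{Main obstacle.} The step I expect to be most delicate is the last CERTIFY wait in the bad regime. A party whose $proof$ is not a dispersal proof waits for $t+1$ partial signatures on $(CERTIFY, output)$, and it must actually receive them. This needs two facts:
\begin{itemize}
\item All honest parties enter the fallback together. Since some honest party broadcasts HELP, $use\_fallback$ should be set consistently; if only some honest parties saw $t+1$ HELPs, I would argue that fallback entry is unanimous or irrelevant.
\item At least $t+1$ honest parties broadcast CERTIFY for the same $output'$. Agreement of $\mathcal{A}_{fallback}$ gives a common $output'$. The parties excluded from signing are those holding a dispersal proof for a different output, and I would argue there are none: by \cref{coro:sync-view-commit-disp-same} every dispersal proof is for the same value. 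Hence parties holding a dispersal proof either already decided or contribute consistently.
\end{itemize}
If this consistency argument gets messy, the fallback is to argue that $n - f \geq t+1$ honest parties all sign, using the fact that the fallback's output matches the dispersal value whenever one exists.
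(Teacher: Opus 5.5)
Your proposal is correct and follows the paper's proof: the same two-case split, with \cref{lemma:sync-view-decide} in the good regime and the $\mathcal{O}(n)$ total length of \textsc{SyncBA} (including the $\mathcal{O}(n)$-round Nayak et al.\ fallback) in the other regime. Your stricter threshold $f < n/(72+c)$ is harmless, since its complement still yields $n = \mathcal{O}(f+\log n)$, and the CERTIFY-wait issue you flag is exactly what \cref{lemma:sync-agreement-on-dispersal} and \cref{lemma:sync-certify-prop} establish, so you can cite them instead of re-deriving the consistency argument.
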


\begin{proof}
    If $f < min(n/4, n - k)$ and $2f + c\log n +4  \leq n$, then using \cref{lemma:sync-view-decide}, all honest parties decide in $\mathcal{O}(f + \log n)$ round before reaching the fallback.

    Otherwise, we have $f = \Omega(n)$. As the fallback protocol $\mathcal{A}_{fallback}$ from \cite{nayak2020extension} has $\mathcal{O}(n)$ round complexity, we remark that the whole protocol terminates in $\mathcal{O}(n)$ rounds and parties always decide by the end of the protocol if they have not yet. Thus the time to decide is $\mathcal{O}(n) = \mathcal{O}(f) = \mathcal{O}(f + \log n)$ in this case too.
\end{proof}

We remark that some parties may agree in the leader-based part of the protocol while some others may agree in the fallback part. Moreover, we accept as certificate for the decision either a dispersal proof or a certificate threshold after running the fallback. We show that we can ensure agreement and that all certificate are for the decided value:

\begin{lemma}\label{lemma:sync-agreement-on-dispersal}
    If at any point during (or after) the execution of the protocol, a dispersal proof exists associated with a value $v$, then all honest parties will decide $v$. Moreover, if a party needs to get a $CERTIFY$ threshold signature at the end of the algorithm, it will always be able to get the required $t+1$ partial signatures within one round. And no $CERTIFY$ signature will exist for values other than $v$.
\end{lemma}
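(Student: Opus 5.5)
The proof will use the quorum-intersection structure behind the dispersal proof. A dispersal proof for $v$ is a threshold signature with threshold $k=\lceil (n+t+1)/2\rceil$. Since $2k-n \geq t+1$, at least $k-f \geq k-t \geq t+1$ honest parties signed \textsc{CheckedDispersal} for $v$. Each of them first received a valid \textsc{ProposeDispersal}$(v,\cdot)$, so each set $\textbf{commit}$ to a commit proof for $v$. By \cref{coro:sync-view-commit-disp-same}, no commit or dispersal proof exists for any value other than $v$.

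The first step is to show that every honest party ends \textsc{SyncBA} with $proof$ either a dispersal or commit proof for $v$, or holding shares of $v$. The honest parties holding a commit proof for $v$ answer $(HELP)$ messages with it. Any party that reaches the second phase without a decision will therefore receive a commit or dispersal proof. Since only $v$ admits such proofs, its $proof$ resolves to $v$.

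The second step is to show that honest parties who already decided in \textsc{ViewByzantineAgreementSync} decided $v$. They decide only via a dispersal proof, via shares matching its accumulator, or via \textsc{Disperse} messages checked against the dispersal proof, and all of these bind to $v$.

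For the remaining parties, enough honest parties know $v$ to serve the share requests. By \cref{lemma:sync-retrieval-parties-know}, at least $t+1-f$ honest parties know $v$, though this number can be small when $f$ is large. In that case the argument relies on the $(SHARES)$/$(SHARE)$ echo mechanism: any party that obtains a valid pair of shares rebroadcasts its own share, which eventually yields $n/4$ shares of $v$. I expect this counting step to be delicate. If it fails, the fallback takes over: the party's $output$ is $\vin$, and it then runs $\mathcal{A}_{fallback}$.

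The key observation for the fallback case is the line ``if $proof$ is not a dispersal proof, or is a dispersal proof for $output$''. An honest party holding a dispersal proof keeps $output=v$ and issues its $CERTIFY$ partial signature only on $v$. This works because \textsc{SyncBA} sets $output$ to the decoded value for that proof, and the fallback agreement of \cite{nayak2020extension} gives all honest parties the same $output'$.

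The third step is to argue that $output'=v$ whenever a dispersal proof exists. There are two routes. Either strong unanimity of the fallback applies, because every honest input to $\mathcal{A}_{fallback}$ equals $v$ after the share phase. Or, in the bad case, honest parties with dispersal proofs do not switch $output$, and CERTIFY signatures on $output'\neq v$ come only from parties lacking a dispersal proof. The latter then needs the count of dispersal-holders to block the $t+1$ threshold, which is the main obstacle.

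I would handle it first by showing that all honest parties hold a dispersal or commit proof for $v$ and reconstruct $v$ before entering the fallback. Then every honest fallback input is $v$, and strong unanimity gives $output'=v$. With that, all $n-f\geq t+1$ honest parties broadcast $tsign((CERTIFY,v))$ in the same round, so any party needing a $CERTIFY$ certificate collects $t+1$ partial signatures within one round. No honest party signs any $v'\neq v$, so no $CERTIFY$ signature forms for $v'$, since at least one honest signer is required. Agreement on $v$ follows.
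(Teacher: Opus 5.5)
\textbf{Gap: the reconstruction step.} Your skeleton matches the paper's. You use uniqueness of commit/dispersal proofs via \cref{coro:sync-view-commit-disp-same}, have $(HELP)$ deliver a commit proof for $v$, run the share echo, and finish with strong unanimity of $\mathcal{A}_{fallback}$. But the step everything rests on is never established: that every honest party which did not decide in the view-based phase reconstructs $v$ before entering the fallback. You call the counting ``delicate'' and offer a contingency instead. The missing observation is that a single honest knower suffices, so the smallness of $t+1-f$ is irrelevant. Some honest $q$ knows $v$, since $t+1-f\ge 1$ by \cref{lemma:sync-retrieval-parties-know}. On $(ASK\_SHARE, proof)$ it replies $(SHARES, s_q, s_u)$, which hands each requester $u$ its \emph{own} share. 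Hence every honest knower sends $u$ its own share in its reply, and every honest requester broadcasts its own share. So each requester collects $n-f\ge n/4$ distinct shares and decodes $v$.

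\textbf{Why the contingency fails.} This step cannot be skipped. If some honest party entered $\mathcal{A}_{fallback}$ with $v_{IN}\neq v$, strong unanimity would no longer force $output'=v$. Your ``blocking'' alternative does not rescue this, because it misreads the pseudocode. A party whose $proof$ is a dispersal proof for $output$ broadcasts $tsign((CERTIFY, output'))$, which signs the fallback's output, not $v$. So dispersal holders would themselves sign $output'\neq v$, and nothing blocks the $t+1$ threshold.

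\textbf{Who runs the fallback.} Your closing claim that all $n-f$ honest parties broadcast a $CERTIFY$ share presupposes $use\_fallback=1$ at every honest party. That fails if some honest parties decided in the view-based phase, since they send no $(HELP)$. The paper handles this with a case split:
\begin{itemize}
\item If some honest party obtained a dispersal proof in the view-based phase, it answers every $(HELP)$ with it. Every honest party then ends up holding a dispersal proof, so none needs a $CERTIFY$ certificate.
\item Otherwise no honest party decided in that phase, so all $n-f\ge t+1$ honest parties send $(HELP)$ and all run the fallback on input $v$.
\end{itemize}

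\textbf{Arithmetic slip.} $2k-n\ge t+1$ does not imply $k-t\ge t+1$; for $n=2t+1$ you get $k-t\approx t/2$. What you do get is $k-f\ge 1$ honest signer, and that is all the argument needs.
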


\begin{proof}
    Assume at some point a dispersal proof exists associated with a value $v$. Using \cref{coro:sync-view-commit-disp-same}, all commit and dispersal proofs will be associated with $v$. Thus any party deciding in the leader-based part of the protocol will decide $v$, as it must have a dispersal proof associated with it. 

    We now consider the fallback part. Because a dispersal proof can exist, it means $k$ parties sent a partial signature for it during the leader-based protocol. Because $k > t$, at least one honest party $p$ sent a partial signature for it and to do so must have received a commit proof associated with $v$. Therefore, every honest party which has not yet decided by the fallback will send a help message and receive the commit (or dispersal) proof from $p$. Thus all honest parties will have a commit or dispersal proof for $v$ and thus try to reconstruct it. 

    We remark that a commit or dispersal proof is layered on top of an agreement or disagreement evidence. If it is a disagreement evidence, all parties will immediately get the value ($\bot$). If it is an agreement evidence, using \cref{lemma:sync-retrieval-parties-know}, at least $t + 1 - f \geq 1$ honest party $q$ knows $v$. Let $Q$ be the set of parties not knowing $v$. By design of the algorithm, every honest party already knowing $v$ will send their share as well as $u$'s share to a party $u \in Q$ when it asks for it. Parties in $Q$ will always be able to get their share from $q$. Then they will send their share to everyone in the following round. Thus every party in $Q$ will get $n-f \geq n/4$ distinct shares and be able to reconstruct $v$.

    We now consider two cases:
    \begin{itemize}
        \item If at least one honest party got a dispersal proof in the leader-based protocol, then it will send it to every other party when asked. So by design of the algorithm, every honest party will decide $v$, ignoring the output of the fallback protocol. Moreover, they will only possibly give a partial $CERTIFY$ signature for $v$, so it won't be possible to build a $CERTIFY$ threshold signature for another value.
        \item If no honest party got a dispersal proof in the leader-based protocol, this means no party decided in this part (as they would need a dispersal proof). Thus all $n-f\geq t+1$ honest parties will send a help message and will run the fallback protocol with the input $v$ they got. Because the fallback protocol satisfies strong unanimity, its output will be $v$. So all parties will decide $v$. Moreover, they also will all send a partial signature for $v$ (which matches their dispersal proof if they have one), so parties will be able to get a $CERTIFY$ threshold signature if they need to within one round. Moreover, it won't be possible to build a $CERTIFY$ threshold signature for another value.
    \end{itemize}
\end{proof}

\begin{lemma}\label{sync-fallback-no-dispersal}
    If no dispersal proof ever exists, all honest parties will run the fallback protocol and decide its output.
\end{lemma}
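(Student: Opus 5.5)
The plan is to follow the control flow of \textsc{SyncBA} under the hypothesis that no dispersal proof is ever created. Three facts need to be shown in order: every honest party reaches the fallback without having decided, the $HELP$ threshold makes $use\_fallback = 1$ everywhere, and each party's final decision is $output'$.

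First, I will argue that no honest party decides during \textsc{ViewByzantineAgreementSync}. Every decision rule there presupposes a dispersal proof. Deciding on \textsc{SendDispersal} requires a valid dispersal proof. Deciding from $n/4$ shares requires the stored \textbf{dispersal} that triggered the \textsc{RequestShare}. Deciding on a \textsc{Disperse} message requires that \textbf{dispersal} be a dispersal proof for $v$. Since no dispersal proof ever exists, none of these branches fires, so after the $n$ views every honest party has not decided and broadcasts $(HELP)$. There are $n-f \geq n-t \geq t+1$ honest parties, so each honest party receives at least $t+1$ $HELP$ messages and sets $use\_fallback \gets 1$. Synchrony guarantees this happens within one round, before the check on $use\_fallback$ is reached.

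Second, I will follow the fallback branch. Every honest party calls $\mathcal{A}_{fallback}(output)$ with some well-defined $output$: either a decoded value, obtained only through a commit proof since no dispersal proof exists, or $\vin$. Because its $proof$ is not a dispersal proof, each honest party broadcasts $tsign((CERTIFY, output'))$ and sets $output \gets output'$. The fallback of \cite{nayak2020extension} satisfies agreement and termination, so all honest parties obtain the same $output'$. Therefore each honest party receives at least $n-f \geq t+1$ partial signatures on $(CERTIFY, output')$ within one round and can combine them. It has not yet decided, so it decides $(output', proof)$, which is exactly the fallback's output.

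The main point needing care is the first step. I must check that no auxiliary path, such as the $SHARES$/$SHARE$ exchange on a commit proof, yields a decision before the fallback. Those handlers only save shares and set $output$; decisions happen only at the end, after the fallback branch has overwritten $output$. I will also verify that the $proof$ variable cannot become a dispersal proof through $HELP$ replies, which again follows from the hypothesis. Finally, strong unanimity of the fallback is not needed for this lemma itself. It will be used separately to show that the fallback's output is a valid decision.
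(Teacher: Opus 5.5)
Your proposal is correct and follows the same route as the paper's proof. No decision can occur in the view-based part without a dispersal proof, so all $n-f \geq t+1$ honest parties send $HELP$, everyone sets $use\_fallback$, and, with no dispersal proof in hand, each party adopts and decides $output'$. Your extra checks spell out details the paper leaves implicit: that the $SHARES$/$SHARE$ handlers never decide, and that the fallback's agreement yields the $t+1$ matching $CERTIFY$ partial signatures.
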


\begin{proof}
    If no dispersal proof ever exists, it means no party decided during the leader-based protocol (as they need a dispersal proof to do so). So they will all send a help message. Therefore, every honest party will receive at least $n-t \geq t+1$ help messages and run the fallback protocol. Because there is no dispersal proof, every honest party will decide the output of the fallback protocol and emit a partial signature for it.
\end{proof}

\begin{lemma}\label{sync-fallback-agreement}
    The protocol satisfies agreement.
\end{lemma}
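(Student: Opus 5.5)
The plan is a case split on whether a dispersal proof ever exists during the execution, reusing \cref{lemma:sync-agreement-on-dispersal} and \cref{sync-fallback-no-dispersal}. The two cases are exhaustive: either some dispersal proof is created at some point (during or after the leader-based part), or none ever is.

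\emph{Case 1: some dispersal proof exists.} By \cref{coro:sync-view-commit-disp-same}, every commit and dispersal proof is for one value $v$. \Cref{lemma:sync-agreement-on-dispersal} then says every honest party decides $v$. I would check the three decision points of \textsc{SyncBA}:
\begin{itemize}
    \item In the view-based part, a party decides only via a dispersal proof. That proof is for $v$, and resolving it yields $v$, assuming accumulator collision resistance.
    \item In the share-reconstruction step, the shares are validated against the accumulator of the dispersal or commit proof, hence against $v$.
    \item In the fallback, a party holding a dispersal proof keeps its output. A party without one adopts $output'$ only if it obtains a $CERTIFY$ signature, and the lemma rules out any $CERTIFY$ signature for a value other than $v$.
\end{itemize}
So every honest party decides $v$.

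\emph{Case 2: no dispersal proof ever exists.} By \cref{sync-fallback-no-dispersal}, no honest party decides in the leader-based part. All honest parties run $\mathcal{A}_{fallback}$ and decide its output. Agreement of the fallback protocol of \cite{nayak2020extension} gives a common $output'$. I must also check that the \emph{Decide} actions triggered earlier by \textsc{SendDispersal}, \textsc{SendShare} and \textsc{Disperse} require a dispersal proof, so none of them fires in this case.

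The main obstacle is Case 1 at the boundary between the two parts of the protocol: some honest parties may decide $v$ in the view-based part while others reach the fallback. I need that their fallback outputs are still overridden or forced to $v$. I would argue as in the two subcases of \cref{lemma:sync-agreement-on-dispersal}:
\begin{itemize}
    \item If an honest party holds a dispersal proof, every party obtains it through the $HELP$/$PROOF$ exchange and ignores $output'$.
    \item Otherwise, all honest parties enter the fallback with input $v$, and strong unanimity of $\mathcal{A}_{fallback}$ gives $output' = v$.
\end{itemize}
Both cases give a single decided value, which establishes agreement.
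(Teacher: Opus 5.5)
Your proposal is correct and follows the paper's proof. Both split on whether a dispersal proof ever exists, settle the first case by \cref{lemma:sync-agreement-on-dispersal}, and settle the second by \cref{sync-fallback-no-dispersal} together with agreement of $\mathcal{A}_{fallback}$. Your extra checks of the individual decision points and of the two boundary subcases only re-derive what \cref{lemma:sync-agreement-on-dispersal} already states, so they are redundant but harmless.
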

\begin{proof}
    If a dispersal proof ever exists, this is a consequence of \cref{lemma:sync-agreement-on-dispersal}. Otherwise, using \cref{sync-fallback-no-dispersal}, every honest party will run the fallback protocol and decide on its output. Because the fallback protocol satisfies agreement, its output will also satisfy agreement.
\end{proof}

\begin{lemma}\label{lemma:sync-certify-prop}
    A $CERTIFY$ threshold signature can only exist for the decided value. Moreover, if a party needs to get a $CERTIFY$ threshold signature at the end of the algorithm, it will always be able to get the required $t+1$ partial signatures within one round.
\end{lemma}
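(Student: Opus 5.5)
The plan is to split on whether a dispersal proof ever exists, exactly as in \cref{sync-fallback-agreement}, and to reuse \cref{lemma:sync-agreement-on-dispersal} and \cref{sync-fallback-no-dispersal} as the two building blocks.

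\emph{Case 1: some dispersal proof for a value $v$ exists.} This case is handled entirely by \cref{lemma:sync-agreement-on-dispersal}. It already gives three facts: every honest party decides $v$; no $CERTIFY$ threshold signature can be formed for a value other than $v$; and any party that needs a $CERTIFY$ signature obtains $t+1$ partial signatures within one round. Both claims of the statement follow directly.

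\emph{Case 2: no dispersal proof ever exists.} By \cref{sync-fallback-no-dispersal}, every honest party sends $HELP$, so each receives at least $n-t\geq t+1$ $HELP$ messages. Hence every honest party sets $use\_fallback=1$ and runs $\mathcal{A}_{fallback}$. Since no honest party holds a dispersal proof, the guard ``$proof$ is not a dispersal proof'' is true for all of them. Each honest party therefore broadcasts $tsign((CERTIFY, output'))$ on its fallback output $output'$ and then decides $output'$. By the agreement property of the fallback protocol of \cite{nayak2020extension}, all honest $output'$ coincide; call this common value $w$. Then all $n-f\geq t+1$ honest parties broadcast partial signatures on $(CERTIFY,w)$, so synchrony lets every party collect $t+1$ of them within one round. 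Conversely, a $(t+1)$-threshold signature on $(CERTIFY,w')$ needs at least one honest partial signature. Since honest parties sign only $w$, no signature exists for any $w'\neq w$. Finally, every honest party decides $w$ (it decides only the fallback output), so $w$ is the decided value.

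The main obstacle is an honest party that already decided earlier, via a dispersal proof, or that would sign a $CERTIFY$ message for something other than the fallback output. I expect this to be ruled out by the case split. In Case 2 no party decided in the leader-based part, since deciding there requires a dispersal proof, so no earlier decision conflicts with $w$. In Case 1 the signing guard ``not a dispersal proof, or a dispersal proof for $output$'' is exactly what \cref{lemma:sync-agreement-on-dispersal} relies on. The proof should check explicitly that honest parties sign $CERTIFY$ on one value only in each case, since this is the crux of unforgeability.
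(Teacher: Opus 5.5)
Your proposal is correct and follows the paper's proof. It makes the same split on whether a dispersal proof ever exists, delegates the first case to \cref{lemma:sync-agreement-on-dispersal}, and in the second case combines \cref{sync-fallback-no-dispersal} with agreement of the fallback; you simply spell out the counting the paper leaves implicit (that $n-f\geq t+1$ honest parties sign the common fallback output, and that any $(t+1)$-threshold signature requires at least one honest partial signature).
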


\begin{proof}
    If a dispersal proof ever exists, this is a consequence of \cref{lemma:sync-agreement-on-dispersal}. Otherwise, using \cref{sync-fallback-no-dispersal}, every honest party will send a partial signature for its decided value. Because it satisfy agreement, it will always be possible to get the $t+1$ partial signatures and it will not be possible to get a threshold signature for any other value.
\end{proof}

\begin{lemma}\label{lemma:sync-prot-validity}
    The protocol satisfies strong unanimity.
\end{lemma}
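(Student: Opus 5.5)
We prove strong unanimity for \textsc{SyncBA} by assuming that every honest party has the same input $v$ and showing that every honest party decides $v$. The cases mirror those used for agreement: whether or not a dispersal proof ever exists.

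\textbf{Step 1: proofs built in the view-based part are for $v$.} By \cref{lemma:sync-set-to-bot}, no honest party ever sets its input to $\bot$, because all honest inputs coincide. Hence the case analysis in \cref{lemma:sync-retrieval-disagreement} applies, and no disagreement evidence can ever be formed. Every commit or dispersal proof is layered on top of a non-$\bot$ retrieval output, so every such proof rests on an agreement evidence for some accumulator. That evidence contains a $(t+1)$-threshold signature on $(ACCEPT, acc)$, so at least one honest party signed it. An honest party signs only if its own accumulator equals $acc$, or if its input is $\bot$, which we have just excluded. Therefore $acc$ is the accumulator of $v$, and by collision resistance of the accumulator the associated value is $v$.

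\textbf{Step 2: decisions made inside \textsc{ViewByzantineAgreementSync} are $v$.} Such a decision always comes with a dispersal proof, and the value is obtained in one of three ways: by resolving the proof, by decoding $n/4$ valid shares, or from a \textsc{Disperse} message checked against the dispersal proof. By Step 1 the proof is for $v$. Shares are checked against its accumulator, so decoding yields $v$. This is essentially the earlier lemma stating that values decided in the view-based part satisfy strong unanimity, now applied to the specific value $v$.

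\textbf{Step 3: the fallback part.} Suppose first that a dispersal proof exists. Then \cref{lemma:sync-agreement-on-dispersal} shows that all honest parties decide the value of that proof, which is $v$ by Step 1. Suppose instead that no dispersal proof exists. Then \cref{sync-fallback-no-dispersal} shows that all honest parties run $\mathcal{A}_{fallback}$ and decide its output. It remains to check that every honest party feeds $v$ into the fallback. A party's $output$ before the fallback is one of three things: its own input $v$; a value decoded from shares validated against a commit or dispersal proof, which is $v$ by Step 1; or an earlier decision, which is $v$ by Step 2. So all honest fallback inputs equal $v$, and the strong unanimity of the protocol of \cite{nayak2020extension} gives $output' = v$. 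Finally, no $CERTIFY$ signature can exist for another value, by \cref{lemma:sync-certify-prop}, so the decided pair is $(v, proof)$.

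The main obstacle is Step 1: we must make sure that the $\bot$-input mechanism does not let a $(t+1)$-signature on $(ACCEPT, acc)$ arise for a foreign accumulator. This is exactly where \cref{lemma:sync-set-to-bot} is needed. The rest is bookkeeping: we confirm that every path assigning $output$ is validated against a proof built on $v$.
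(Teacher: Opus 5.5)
Your proposal is correct and follows the paper's proof. Both first establish that every commit or dispersal proof rests on evidence whose value is the common input. The paper cites \cref{lemma:sync-retrieval-parties-know,lemma:sync-retrieval-disagreement} for this, while you re-derive it in the unanimous case via \cref{lemma:set-to-bot}, which is the paper's label for the lemma you cite as \texttt{sync-set-to-bot}. Both then split on whether a dispersal proof ever exists, using \cref{lemma:sync-agreement-on-dispersal} in one case and \cref{sync-fallback-no-dispersal} with the fallback's strong unanimity in the other; you are more explicit in tracking the input $v$ through each path that assigns $output$, where the paper says each candidate value ``satisfies strong unanimity.''
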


\begin{proof}
    We note that the value associated with a commit or dispersal proof originates from agreement or disagreement evidence and therefore satisfies strong unanimity as proven in \Cref{lemma:sync-retrieval-parties-know,lemma:sync-retrieval-disagreement}.
    If a dispersal proof ever exists, this is therefore a consequence of \cref{lemma:sync-agreement-on-dispersal}. Otherwise, using \cref{sync-fallback-no-dispersal}, every honest party will join the fallback protocol with either its own input, which satisfies strong unanimity, or the value associated with a commit or dispersal proof, which also satisfies strong unanimity. Thus all honest parties will join the fallback protocol with a value that satisfies strong unanimity. Because the fallback protocol satisfies strong unanimity, the returned value will do so too.
\end{proof}

\begin{lemma}
    The protocol has $\mathcal{O}(n(L + f\cdot \kappa))$ bit complexity.
\end{lemma}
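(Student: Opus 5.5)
We split the cost of \textsc{SyncBA} into its two phases: the $n$ views of \textsc{ViewByzantineAgreementSync}, and the help/share/fallback phase afterwards. For the first phase we invoke \cref{lemma:sync-view-bit} directly, which gives $\mathcal{O}(n(L + f\kappa))$. For the second phase we use the same case split as in \cref{lemma:sync-view-bit} and \cref{lemma:sync-qa-termination}. The good case is $f < \min(n/4, n-k)$, $f < n/(72+c)$ and $2f + c\log n + 4 \le n$. The bad case is its complement, where $f = \Omega(n)$.

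\textbf{Good case.} By \cref{lemma:sync-view-decide}, every honest party has decided before the leader-based part ends. Hence no honest party broadcasts $(HELP)$ or $(ASK\_SHARE,\cdot)$. At most $f$ Byzantine parties send $HELP$, so $use\_fallback$ is never set (it needs $t+1 > f$ help messages), and the fallback and $CERTIFY$ steps are never executed. The remaining honest messages are replies to Byzantine parties:
\begin{itemize}
\item Replies to $HELP$ are $(PROOF,proof)$ messages of size $\mathcal{O}(\kappa)$. Each honest party sends at most one per Byzantine sender, for $\mathcal{O}(nf\kappa)$ in total.
\item Replies to $ASK\_SHARE$ consist of two shares of size $\mathcal{O}(L/n + \kappa)$ each. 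Per Byzantine requester this totals $\mathcal{O}(L + n\kappa)$, so $\mathcal{O}(fL + nf\kappa)$ overall.
\end{itemize}
The $SHARES$ handler makes the receiving party rebroadcast, but an honest party only sends $ASK\_SHARE$ when it is undecided, so no honest rebroadcast cascade is triggered. To make this rigorous we need each honest party to respond to at most one $ASK\_SHARE$/$HELP$ per sender, and to rebroadcast $SHARE$ at most once. If the pseudocode does not already say this, we would impose it, since it does not affect correctness. With that convention the phase costs $\mathcal{O}(n(L + f\kappa))$, because $fL \le nL$.

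\textbf{Bad case.} Here $f \ge \varepsilon n$ for a constant $\varepsilon$, so the target bound is $\mathcal{O}(nL + n^2\kappa)$. We bound each component:
\begin{itemize}
\item $HELP$/$PROOF$ exchanges cost $\mathcal{O}(n^2\kappa)$.
\item Each honest party answers each $ASK\_SHARE$ once with two shares of size $\mathcal{O}(L/n + \kappa)$, for $\mathcal{O}(n^2(L/n + \kappa)) = \mathcal{O}(nL + n^2\kappa)$.
\item Each honest party broadcasts its own $SHARE$ once (under the once-only convention), again giving $\mathcal{O}(nL + n^2\kappa)$.
\item The $CERTIFY$ partial-signature broadcast costs $\mathcal{O}(n^2\kappa)$.
\item The fallback $\mathcal{A}_{fallback}$ of \cite{nayak2020extension} costs $\mathcal{O}(nL + n^2\kappa)$ by its stated complexity, which is $\mathcal{O}(n(L + f\kappa))$ since $n = \mathcal{O}(f)$.
\end{itemize}

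Summing both phases gives $\mathcal{O}(n(L + f\kappa))$ in both cases. The main obstacle is the share-forwarding step: as written, the $SHARE$ handler rebroadcasts on every receipt. Read literally, this would give $n$ rebroadcasts per party, costing $\mathcal{O}(n^2 L)$ even in the bad case. The crucial point is therefore to justify, or explicitly adopt, that each honest party broadcasts its share at most once. Once this is fixed, every count above is a straightforward product of message count and message size.
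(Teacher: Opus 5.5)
Your decomposition is the paper's. You use \cref{lemma:sync-view-bit} as a black box for the view phase, then split into the regime where \cref{lemma:sync-view-decide} applies and the regime $f=\Omega(n)$. In the first, only Byzantine requests cost anything, namely $f(n-f)\cdot\mathcal{O}(L/n+\kappa)$. In the second, everything, including the fallback of \cite{nayak2020extension}, is $\mathcal{O}(nL+n^2\kappa)$. You are also more explicit than the paper that replies must be once per sender and that each party may broadcast its share only once.

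However, your good case has a hole that the once-only convention does not close. The $SHARES$ and $SHARE$ handlers fire on \emph{any} valid message, not only in reaction to an honest $ASK\_SHARE$. Your justification (``an honest party only sends $ASK\_SHARE$ when it is undecided'') therefore excludes only honest-initiated cascades. A single Byzantine party can send $(ASK\_SHARE, proof)$ with the dispersal proof, receive its own valid share $s_j$ inside an honest $(SHARES,\cdot,\cdot)$ reply, and then send $(SHARE, s_j)$ to every honest party. Each of these already-decided parties then broadcasts its own share to all $n$ parties. Even at once per party, this costs $(n-f)\cdot n\cdot\mathcal{O}(L/n+\kappa)=\Theta(nL+n^2\kappa)$, which is not $\mathcal{O}(n(L+f\kappa))$ in general (take $f=1$ and $L=\kappa$).

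To repair this, make the rebroadcast conditional. A party broadcasts $(SHARE, s_i)$ at most once, and only if it has itself sent $ASK\_SHARE$ (hence was undecided); decided parties only answer $HELP$ and $ASK\_SHARE$, once per sender. This still supports the reconstruction argument of \cref{lemma:sync-agreement-on-dispersal}. A party in $Q$ gets the shares of the honest parties that know $v$ from their $SHARES$ replies, and the shares of the other honest members of $Q$ from their single broadcast, hence $n-f\ge n/4$ distinct shares. In the good case the rule makes honest $SHARE$ traffic zero, and your count $\mathcal{O}(nf\kappa)+\mathcal{O}(fL+nf\kappa)$ is then complete. The paper's own proof has the same omission, since it also counts only Byzantine requests for shares and proofs. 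So this fixes the shared argument rather than departing from it.
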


\begin{proof}
    We proved in \cref{lemma:sync-view-bit} that the leader-based protocol has $\mathcal{O}(n(L + f\cdot \kappa))$ bit complexity. We now have two cases:
    \begin{itemize}
        \item \textbf{$f < min(n/4, n - k)$ and $2f + c\log n +4  \leq n$}: then using \cref{lemma:sync-view-decide}, all honest parties decide within the view-based protocol, so none of them send a help message. Because no $t+1$ help message can be received, no honest party will run the fallback protocol, and they won't ask for shares either as they already have the value. The only additional messages come from each of the $f$ byzantine parties which can ask for $2$ shares and a dispersal proof to any honest party. This has total bit complexity $f \cdot (n-f) \cdot \mathcal{O}(L/n + \kappa) = \mathcal{O}(n(L + f\cdot \kappa))$.
        \item \textbf{$f \geq min(n/4, n - k)$ or $2f + c\log n +4  > n$}: In this case, $f = \Omega(n)$. Outside of the fallback protocol, each party may send up to $3$ shares and a constant amount of proofs to every other parties, this has total bit complexity $\mathcal{O}(n \cdot n \cdot (L/n + \kappa))$. The fallback protocol from \cite{nayak2020extension} has bit complexity $\mathcal{O}(nL + n^2 \kappa)$. In total, the protocol has bit complexity $\mathcal{O}(nL + n^2 \kappa) = \mathcal{O}(n(L + f \cdot \kappa))$ because $f = \Omega(n)$.
    \end{itemize}
\end{proof}

\begin{theorem}
    The protocol satisfies agreement, strong unanimity, Provability, partial termination in $\mathcal{O}(f)$ rounds, and termination in $\mathcal{O}(f + \log n)$ rounds.
\end{theorem}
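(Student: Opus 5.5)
This theorem is essentially a packaging of the lemmas proved in this section, so I would assemble it property by property and cite the corresponding lemma for each.

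\textbf{Agreement.} I would take this directly from \cref{sync-fallback-agreement}. That lemma splits on whether a dispersal proof ever exists. If one does, \cref{lemma:sync-agreement-on-dispersal} applies; if none does, \cref{sync-fallback-no-dispersal} applies together with agreement of the fallback of \cite{nayak2020extension}.

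\textbf{Strong unanimity.} This is \cref{lemma:sync-prot-validity}. It relies on the fact that every commit or dispersal value originates from a retrieval evidence, which is covered by \cref{lemma:sync-retrieval-parties-know,lemma:sync-retrieval-disagreement}.

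\textbf{Provability.} I would first observe that every decision is issued together with either a dispersal proof or a $(t+1)$-threshold $CERTIFY$ signature. A dispersal proof certifies the decided value because of \cref{coro:sync-view-commit-disp-same} combined with \cref{lemma:sync-agreement-on-dispersal}. A $CERTIFY$ signature certifies it by \cref{lemma:sync-certify-prop}, which also shows that the $t+1$ partial signatures can always be gathered within one round, so the wait in \textsc{SyncBA} does not block.

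\textbf{Termination in $\mathcal{O}(f+\log n)$ rounds.} This is exactly \cref{lemma:sync-qa-termination}. It has two cases:
\begin{itemize}
\item In the regime $f<\min(n/4,n-k)$ with $2f+c\log n+4\le n$, \cref{lemma:sync-view-decide} gives the bound.
\item Otherwise $f=\Omega(n)$, so the $\mathcal{O}(n)$ running time of the whole protocol (including the fallback) is $\mathcal{O}(f)$.
\end{itemize}

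\textbf{Partial termination in $\mathcal{O}(f)$ rounds.} I would split on the same regimes, and I expect this step to need the most care.
\begin{itemize}
\item In the small-$f$ regime, \cref{lemma:sync-view-partial} shows that at least $n/4$ honest parties decide within $2f+1$ views. Since each view lasts $11\Delta$, this is $\mathcal{O}(f)$ rounds. The point to check is that these parties decide from a resolvable dispersal proof immediately on receipt, without the $c\log n$ wait. \cref{lemma:sync-view-dispersal-partial} supplies this: it shows at least $n/4$ honest parties already know the value, so they resolve the proof and decide at once.
\item In the complementary regime, some care is needed because the condition is $f\ge\min(n/4,n-k)$ \emph{or} $2f+c\log n+4>n$. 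I must confirm that each disjunct forces $f=\Omega(n)$. For the first, $n-k\approx n/4$, so $f\ge\min(n/4,n-k)$ gives $f=\Omega(n)$. For the second, $2f>n-c\log n-4$ gives $f=\Omega(n)$ for large $n$; for small $n$ the claim holds trivially by adjusting constants. With $f=\Omega(n)$, the full-termination bound of $\mathcal{O}(n)=\mathcal{O}(f)$ rounds already implies partial termination in $\mathcal{O}(f)$ rounds.
\end{itemize}
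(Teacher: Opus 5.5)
Your proposal is correct and follows the paper's proof, which likewise just assembles \cref{lemma:sync-prot-validity,lemma:sync-certify-prop,sync-fallback-agreement,lemma:sync-qa-termination,lemma:sync-view-partial} property by property. Your explicit handling of partial termination when $f=\Omega(n)$ is a useful addition: you deduce it from the $\mathcal{O}(n)=\mathcal{O}(f)$ full-termination bound, whereas the paper cites only \cref{lemma:sync-view-partial}, which covers just the regime $f<\min(n/4,n-k)$.
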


\begin{proof}
    This is a consequence of \cref{lemma:sync-prot-validity,lemma:sync-certify-prop,sync-fallback-agreement,lemma:sync-qa-termination,lemma:sync-view-partial}.
\end{proof}

\section{Asynchronous Quorum Agreement}\label{app:asyncQA}

\subsection{Pseudocode}
Our Asynchronous Quorum Agreement protocol relies on an existing optimal asynchronous MVBA protcol $\Pi_{MVBA}$ (for example Dumbo-MVBA \cite{yuhan2020dumbo}) a a binary BA protocol $\Pi_{BA}$ (for example \cite{cachin2000random}). We assume that the resiliency is optimal (i.e $n = 3t+1$).

\begin{dianabox}{\textsc{AsyncQA}}
\algoHead{Asynchronous Quorum agreement protocol for a party $p_i$ with input $\vin$}
\begin{algorithmic}[1]
    \State $v \gets \Pi_{MVBA}(\vin)$
    \Statex

    \If{$v = \vin$}
        \State Send $(CHECK, OK)$ to all parties
    \Else{}
        \State Send $(CHECK, BAD)$ to all parties
    \EndIf
    \Statex

    \State Wait for $n-t$ check messages
    \If{Received $t+1$ $(CHECK, OK)$ messages}
        \State $b \gets 1$
    \Else{}
        \State $b \gets 0$
    \EndIf
    \State $d \gets \Pi_{BA}(b)$
    \Statex
    
    \If{d = 0}
        \State $v \gets \ast$
    \EndIf
    \State Send $tsign_{t+1}((CERT, v))$ to all parties
    \State Wait for $t+1$ partial signatures for $v$: $(\rho_p)$
    \State $cert \gets tcombine((CERT, v), (\rho_p))$
    \State Decide $(v, cert)$
\end{algorithmic}
\end{dianabox}
As an implementation detail, we note that a $1$-bit tag ($MVBA$ or $BA$) might have to be added to $\Pi_{BA}$ and $\Pi_{MVBA}$'s messages and signatures to prevent a message from $\Pi_{BA}$ being mistakenly used by another party lagging behind and still running $\Pi_{MVBA}$.

\subsection{Analysis}

\begin{lemma}
    \textsc{AsyncBA} has expected bit complexity $\mathcal{O}(n \cdot (L + t \cdot \kappa))$ and constant expected latency.
\end{lemma}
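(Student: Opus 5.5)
We read \textsc{AsyncBA} in the statement as the protocol specified in this section, i.e. the four-phase pseudocode above: an MVBA call, a CHECK round, a binary BA call, and a CERT round. The bound $\mathcal{O}(n\cdot(L+t\cdot\kappa))$ is stated in terms of the number of participants $n$ of this protocol and matches a run of it on $n$ parties. It would not match the full composition AQB $+$ quorum $+$ QAB, whose bound carries the extra $n\log n\cdot\kappa$ and $t^2\cdot\kappa$ terms. The plan is to bound bits and latency phase by phase and sum, using linearity of expectation for the bits and additivity of expected round counts for the latency.

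\textbf{Step 1 (MVBA).} We invoke the guarantees of $\Pi_{MVBA}$ instantiated with Dumbo-MVBA~\cite{yuhan2020dumbo}. It has expected bit complexity $\mathcal{O}(nL+n^2\kappa)$ (absorbing the $\log n$ term of Table~\ref{tbl:related work} into $\kappa$) and expected $\mathcal{O}(1)$ latency. With optimal resiliency $n=3t+1$ we have $n^2\kappa=\mathcal{O}(n\cdot t\cdot\kappa)$, so this phase costs $\mathcal{O}(n\cdot(L+t\cdot\kappa))$ in expectation. The $1$-bit tag added to separate MVBA and BA messages changes nothing asymptotically.

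\textbf{Step 2 (CHECK and CERT).} Each honest party sends one $(CHECK,\cdot)$ message of $\mathcal{O}(1)$ bits to all $n$ parties, and one partial signature $tsign_{t+1}((CERT,v))$ of $\mathcal{O}(\kappa)$ bits to all $n$ parties. These are $\mathcal{O}(n^2\kappa)=\mathcal{O}(n\cdot t\cdot\kappa)$ bits in total and two rounds.

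\textbf{Step 3 (binary BA).} We use $\Pi_{BA}$ from~\cite{cachin2000random}, which has expected bit complexity $\mathcal{O}(n^2\kappa)=\mathcal{O}(n\cdot t\cdot\kappa)$ and expected $\mathcal{O}(1)$ latency.

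\textbf{Main obstacle: the waits do not block.} We must check that the two waiting steps terminate after $\mathcal{O}(1)$ rounds once the sub-protocols return.
\begin{itemize}
\item \emph{Wait for $n-t$ CHECK messages.} Every honest party eventually finishes MVBA with probability $1$ and then sends its CHECK message, so $n-t$ CHECK messages arrive within one round of the last honest MVBA output.
\item \emph{Wait for $t+1$ CERT partial signatures on the same $v$.} When $d=0$, all honest parties set $v=\ast$ and sign it. When $d=1$, all honest parties hold the same MVBA output by MVBA agreement, so they sign the same $v$. In either case $n-t\ge t+1$ matching partial signatures arrive within one round.
\end{itemize}
Since the binary BA terminates with probability $1$, this gives completion within one round after BA returns.

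Summing, the expected bit complexity is $\mathcal{O}(nL+n^2\kappa)=\mathcal{O}(n\cdot(L+t\cdot\kappa))$. The expected latency is $\mathbb{E}[\text{MVBA}]+\mathbb{E}[\text{BA}]+\mathcal{O}(1)=\mathcal{O}(1)$.
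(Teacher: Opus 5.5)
Your proposal is correct and follows the same decomposition as the paper. Both bound the $\Pi_{MVBA}$ and $\Pi_{BA}$ invocations by citing Dumbo-MVBA and Cachin et al., charge the two all-to-all rounds $\mathcal{O}(n^2\kappa)=\mathcal{O}(n\cdot t\cdot\kappa)$ bits using $n=3t+1$, and sum; your extra argument that the CHECK and CERT waits cannot block (via MVBA and BA agreement) is correct and is something the paper leaves implicit.
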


\begin{proof}
    \textsc{AsyncBA} consists of an invocation of $\Pi_{MVBA}$, an invocation of $\Pi_{BA}$ and two rounds of all-to-all broadcast. The all-to-all broadcasts have constant latency and $\mathcal{O}(n^2 \cdot \kappa) = \mathcal{O}(n \cdot t \cdot \kappa)$ bit complexity. The remaining part is the sum of the message and bit complexity of $\Pi_{MVBA}$, and $\Pi_{BA}$ \cite{yuhan2020dumbo,cachin2000random} which give the above complexities.
\end{proof}
We note that a consequence of the previous lemma is that the protocol satisfies probabilistic termination and thus partial probabilistic termination.

\begin{lemma}
    \textsc{AsyncBA} satisfies agreement, strong unanimity and provability.
\end{lemma}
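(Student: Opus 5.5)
We prove the three properties separately, each time relying on the guarantees of $\Pi_{MVBA}$ (agreement and external/strong validity against $t<n/3$) and of $\Pi_{BA}$ (agreement and strong unanimity on bits), together with the fact that at most $t$ of the $n=3t+1$ parties are byzantine.

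\textbf{Agreement.} By agreement of $\Pi_{MVBA}$, all honest parties obtain the same $v$. By agreement of $\Pi_{BA}$, they all obtain the same bit $d$. Each honest party then sets its final value to $\ast$ if $d=0$ and keeps $v$ if $d=1$. This final step is a deterministic function of $(v,d)$, so all honest parties decide the same value.

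\textbf{Provability.} Since agreement holds, every honest party issues its partial signature on the same message $(CERT,v)$. There are at least $n-t\geq t+1$ honest parties, so each honest party eventually collects $t+1$ partial signatures and can combine them. Conversely, a $(t+1)$-threshold signature on $(CERT,v')$ needs at least one honest signer. Honest parties only sign the decided value, so no certificate exists for any $v'\neq v$. The certificate therefore proves the decision.

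\textbf{Strong unanimity.} Assume all honest parties have input $u$. By strong unanimity of $\Pi_{MVBA}$ (we use Dumbo-MVBA, or any MVBA whose validity gives strong unanimity), $v=u$. Every honest party then sends $(CHECK,OK)$. Any set of $n-t$ check messages contains at least $n-2t\geq t+1$ messages from honest parties, all of which are OK. So every honest party sets $b=1$. By strong unanimity of $\Pi_{BA}$, $d=1$, and the decided value is $u$.

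\textbf{Main obstacle.} The difficulty is pinning down what validity $\Pi_{MVBA}$ actually provides. Standard asynchronous MVBA only guarantees external validity, not strong unanimity. This is also where the representativity property of quorum agreement comes in, and the check/BA round exists precisely to enforce it. If $\Pi_{MVBA}$ only has external validity, I would use the predicate that the value is signed by a party as its input. Strong unanimity then still holds: when all honest parties share input $u$, an output $v\neq u$ would need a byzantine proposal to win. I would handle this either by assuming $\Pi_{MVBA}$ satisfies strong unanimity, as the pseudocode implicitly does, or by noting that if $v\neq u$ then all honest parties send BAD, so $d=0$ and $\ast$ is decided. That second route does not give strong unanimity, so a strongly unanimous MVBA must be assumed.

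I would also check representativity, even though it is not part of this statement. If $d=1$, some honest party had input $b=1$, so it saw $t+1$ OK messages. At least one of those came from an honest party, but that only shows one honest party has input $v$. So representativity would require a stronger threshold. I would flag this as a separate lemma and not part of this statement.
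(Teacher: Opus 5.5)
Your proof is correct and follows the paper's argument essentially step for step: agreement because the output is a deterministic function of the agreed outputs of $\Pi_{MVBA}$ and $\Pi_{BA}$; provability because a $(t+1)$-threshold certificate needs an honest signer; and strong unanimity via strong unanimity of $\Pi_{MVBA}$, then the $t+1$ honest OKs among any $n-t$ check messages, then strong unanimity of $\Pi_{BA}$. Like the paper, you must assume $\Pi_{MVBA}$ is strongly unanimous, and your point that Dumbo-MVBA's external validity does not give this by itself fairly criticizes the paper's suggested instantiation. Your side remark on representativity is mistaken, however: the definition only requires $n_v + f > n/3$, and $t+1$ OK messages give $n_v \ge t+1-f$, hence $n_v + f \ge t+1 > n/3$ when $n = 3t+1$.
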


\begin{proof}
    Regarding agreement, we note that the output of \textsc{AsyncBA} is a deterministic function of the output of $\Pi_{MVBA}$ and $\Pi_{BA}$, which both satisfy agreement. Thus the output is also the same for every honest party and satisfies agreement.
    For provability, the certificate for a value $v$ is a $(t+1)$-threshold certificate signed by parties which decide $v$. Given this certificate, it means there is at least one honest party which decided $v$ and so it was the decided value because of agreement.
    Strong unanimity: Assume all honest parties have the same input $v$. Using $\Pi_{MVBA}$'s strong unanimity, $v$ will be returned by it. So every honest party will broadcast a $(CHECK, OK)$ message. Given $n-t \geq 2t+1$ messages, at least $t+1$ of them will come from honest parties and therefore will be OKs. So all honest parties will have their bit $b$ set to $1$. Using $\Pi_{BA}$'s strong unanimity, $d = 1$ will be decided, so $v$ will be decided.
\end{proof}

\begin{lemma}
    \textsc{AsyncBA} satisfies representativity.
\end{lemma}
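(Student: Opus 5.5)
Representativity says: if an honest party decides a non-$\ast$ value $v$, then $n_v + f > n/3$, where $n_v$ is the number of honest parties with input $v$. I would prove the contrapositive-style chain: a non-$\ast$ decision forces $d = 1$ in $\Pi_{BA}$, which forces enough $(CHECK, OK)$ messages for $v$, which in turn forces many parties to have input $v$.

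First, suppose an honest party decides $(v, cert)$ with $v \neq \ast$. By the code, this happens only if $d = 1$ was output by $\Pi_{BA}$; otherwise $v$ is overwritten with $\ast$. By the agreement of $\Pi_{MVBA}$, every honest party obtained the same MVBA output $v$. So every honest $(CHECK, OK)$ message refers to this common $v$, and it is sent only by honest parties whose input equals $v$.

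Next, I use that $\Pi_{BA}$ satisfies strong unanimity. Since $d = 1$ was decided, not all honest parties entered $\Pi_{BA}$ with $b = 0$. Hence at least one honest party $p$ set $b = 1$, which means $p$ received at least $t+1$ $(CHECK, OK)$ messages from distinct parties. At most $f$ of these come from Byzantine parties. The remaining at least $t+1-f$ come from honest parties whose input is $v$, so $n_v \geq t+1-f$, i.e.\ $n_v + f \geq t+1$.

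Finally, I translate this into the $n/3$ bound using the optimal-resiliency assumption $n = 3t+1$. This gives $t + 1 > n/3$, hence $n_v + f > n/3$. This last step is the only delicate one: for general $t < n/3$ the inequality $t+1 > n/3$ still holds, since $t \geq \lceil n/3 \rceil - 1$ when $t$ is the maximal tolerated value. I would state the argument under the paper's assumption $n = 3t+1$. The main point of care overall is noting that honest OK messages all refer to the single agreed MVBA output, which is what lets the count of OKs bound $n_v$ for that specific $v$.
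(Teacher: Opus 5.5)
Your proof is correct and follows the same chain as the paper's proof: a non-$\ast$ decision forces $d=1$, strong unanimity of $\Pi_{BA}$ then yields an honest party with $b=1$ and hence $t+1$ $(CHECK, OK)$ messages, at most $f$ of them Byzantine, giving $n_v + f \ge t+1 > n/3$ under $n = 3t+1$. Your use of $\Pi_{MVBA}$'s agreement to tie honest OKs to the decided $v$, and your strict inequality, are details the paper leaves implicit (it only writes $\ge n/3$); the one thing to drop is your aside about general $t < n/3$, since when $3 \mid n$ the maximal $t$ gives $t+1 = n/3$ and strictness fails, so keep the argument tied to $n = 3t+1$ as you ultimately do.
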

Assume a value $v \ne \ast$ is decided. This implies that the value $d$ decided was $1$. So using $\Pi_{BA}$'s strong unanimity, it implies that at least one honest party set its bit to $1$ and thus received at least $t+1$ $(CHECK, OK)$ messages. Because only byzantine parties or honest parties can send this message, we get the intended $f + n_v \geq t+1 \geq n/3$. 
\end{document}